\newtheorem{cor}{Corollary}
\newtheorem{lem}{Lemma}
\newtheorem{prop}{Proposition}
\begin{document}
\title{Mixed-ADC Massive MIMO}
\author{Ning Liang, Wenyi Zhang, \emph{Senior Member, IEEE}
\thanks{
Manuscript received April 14, 2015; revised September 8, 2015; accepted December 11, 2015. Review of this manuscript was coordinated by Stefano Buzzi, lead guest editor of this issue. Part of this work has been presented at IEEE Information Theory Workshop, Jeju Island, Korea, Oct. 2015.

N. Liang and W. Zhang are with Key Laboratory of Wireless-Optical Communications, Chinese Academy of Sciences, Department of Electronic Engineering and Information Science, University of Science and Technology of China, Hefei, China (Emails: liangn@mail.ustc.edu.cn, wenyizha@ustc.edu.cn).

This work has been supported by the National Basic Research Program of China (973 Program) through grant 2012CB316004, and National Natural Science Foundation of China through grant 61379003.
}
}
\maketitle
\thispagestyle{empty}
\begin{abstract}
Motivated by the demand for energy-efficient communication solutions in the next generation cellular network, a mixed-ADC architecture for massive multiple input multiple output (MIMO) systems is proposed, which differs from previous works in that herein one-bit analog-to-digital converters (ADCs) partially replace the conventionally assumed high-resolution ADCs. The information-theoretic tool of generalized mutual information (GMI) is exploited to analyze the achievable data rates of the proposed system architecture and an array of analytical results of engineering interest are obtained. For fixed single input multiple output (SIMO) channels, a closed-form expression of the GMI is derived, based on which the linear combiner is optimized. The analysis is then extended to ergodic fading channels, for which tight lower and upper bounds of the GMI are obtained. Impacts of dithering and imperfect channel state information (CSI) are also investigated, and it is shown that dithering can remarkably improve the system performance while imperfect CSI only introduces a marginal rate loss. Finally, the analytical framework is applied to the multi-user access scenario. Numerical results demonstrate that the mixed-ADC architecture with a relatively small number of high-resolution ADCs is able to achieve a large fraction of the channel capacity of conventional architecture, while reduce the energy consumption considerably even compared with antenna selection, for both single-user and multi-user scenarios.
\end{abstract}
\begin{IEEEkeywords}
Analog-to-digital converter, dithering, energy efficiency, generalized mutual information, massive MIMO, mixed-ADC architecture, multi-user access.
\end{IEEEkeywords}
\setcounter{page}{1}

\section{Introduction}
The exponential increase in the demand for mobile data traffic imposes great challenge on the cellular network. In recent years, a heightened attention has been focused on massive multiple input multiple output (MIMO) systems, in which each base station (BS) is equipped with hundreds of antennas and serves tens of or more users simultaneously \cite{marzetta2010noncooperative}-\cite{rusek2013scaling}. Because the large number of BS antennas can effectively average out noise, fading and to some extent, noncoherent interference, massive MIMO achieves significant gains in both spectral efficiency and radiated energy efficiency, and thus is envisioned as a promising key enabler for the next generation cellular network \cite{ngo2013energy}-\cite{boccardi2014five}.

Thus far, most of the literature on massive MIMO assume a conventional architecture built on ideal hardware. However, this assumption is not well justified, since the hardware cost and circuit power consumption scale linearly with the number of BS antennas and thus soon become practically unbearable unless low-cost, energy-efficient hardware is deployed which however easily suffers from impairments. Assuming an additive stochastic impairment model, the authors of \cite{bjornson2014hardware} examined the impact of hardware impairments on both spectral efficiency and radiated energy efficiency of massive MIMO. The authors of \cite{bjornson2015massive} obtained scaling law that describes how fast the tolerance level of impairments increases with the number of BS antennas while reaping much of the performance gain promised by massive MIMO. The authors of \cite{gustavsson2014impact} examined the accuracy of widely used additive or multiplicative stochastic impairment models by providing a hardware-specific deterministic model and performing comparative numerical studies.

Due to the favorable property of low cost, low power consumption and feasibility of implementation \cite{walden1999analog}-\cite{le2005analog}, low-resolution analog-to-digital converters (ADCs) have also attracted ubiquitous attention in the field of energy-efficient design for wireless communication systems. For Nyquist-sampled real Gaussian channel, the authors of \cite{singh2009limits} established some general results regarding low-resolution quantization, showing that for a quantizer with $Q$ bins, the capacity-achieving input alphabet should be discrete and needs not have more than $Q$ mass points. The authors of \cite{mezghani2007modified} designed a modified minimum mean square error (MMSE) receiver for MIMO systems with output quantization and proposed a lower bound to the capacity. In \cite{yin2010monobit}, the authors investigated a practical monobit digital receiver paradigm for impulse radio ultra-wideband (UWB) systems. Recently, the authors of \cite{risi2014massive} examined the impact of one-bit quantization on achievable rates of massive MIMO systems with both perfect and estimated channel state information (CSI). The authors of \cite{mo2014high} addressed the high signal-to-noise ratio (SNR) capacities of both single input multiple output (SIMO) and MIMO channels with one-bit output quantization.

Despite its great superiority in deployment cost and energy efficiency, one-bit quantization generally has to tolerate large rate loss, especially in the high SNR regime \cite{mo2014high}, thus highlighting the indispensability of high-resolution ADC for digital receiver. Besides, the great overhead of pilot-aided channel estimation under one-bit quantization is also a big concern \cite{yin2010monobit}-\cite{risi2014massive}, \cite{ivrlac2007mimo}. Thus motivated by such consideration, in this paper we propose a mixed-ADC architecture for massive MIMO systems in which one-bit ADCs partially, but not completely, replace conventionally assumed high-resolution ADCs. This architecture has the potential of allowing us to remarkably reduce the hardware cost and power consumption while still maintain a large fraction of the performance gains promised by conventional architecture.

For such mixed-ADC massive MIMO, although the channel capacity is still the maximum mutual information between the channel input and the quantized channel output vector, from an engineering perspective, however, the mutual information maximization problem appears to be not completely satisfactory in providing engineering insights. Because in this situation, the mutual information is high-dimensional integration and summation which do not yield closed-form simplification as in linear Gaussian channels. Generalized mutual information (GMI) \cite{ganti2000mismatched}-\cite{lapidoth2002fading}, on the other hand, allows one to analytically characterize the achievable date rates of low-complexity linear receivers that are particularly favorable for massive MIMO systems, and thus we leverage it to address the performance of the mixed-ADC architecture. As a performance metric for mismatched decoding, GMI has proved convenient and useful in several important scenarios such as fading channels with imperfect CSI at the receiver \cite{lapidoth2002fading}, channels with transceiver distortion \cite{zhang2012general}-\cite{vehkapera2015asymptotic} and analysis of bit-interleaved coded modulation \cite{guillen2008bit}.

Exploiting a general analytical framework developed in \cite{zhang2012general}, we obtain a series of analytical results. First, we consider a fixed SIMO channel where the BS is equipped with $N$ antennas but only has access to $K$ pairs\footnote{A pair of ADCs quantize the I/Q components of an antenna, respectively.} of high-resolution ADCs and $(N-K)$ pairs of one-bit ADCs, and derive a closed-form expression of the GMI. This enables us to optimize the linear combiner and further explore the asymptotic behaviors of the GMI in both low and high SNR regimes that in turn suggest a plausible ADC switch scheme. Besides, the benefit of dithering is also investigated, for which we propose a simple but effective dithering scheme, which achieves remarkable rate gain, especially for the case of small $K$.

The analysis is then extended to the scenario of ergodic fading channels where, instead of directly working with the exact GMI, we derive lower and upper bounds of the GMI, which are shown to be very tight by numerical study. Moreover, numerical results reveal that the mixed-ADC architecture with a small number of high-resolution ADCs suffices to attain a large portion of the channel capacity of conventional architecture and meanwhile outperforms antenna selection with the same number of high-resolution ADCs\footnote{In conventional architecture, each BS antenna is followed by a radio frequency (RF) chain built on ideal hardware. Meanwhile, by antenna selection we mean that there are only $K$ ideal RF chains available at the BS.}. The robustness of the mixed-ADC architecture against imperfect CSI is also investigated. In this paper, we only utilize the high-resolution ADCs to perform channel estimation, and thus the deduced estimation error is Gaussian distributed in Rayleigh fading channels, allowing us to analytically characterize the resulting GMI as well as its lower and upper bounds. Numerical results show that the lower and upper bounds are again very tight and that there is only a marginal rate loss due to imperfect CSI.

Finally, we apply our analysis to the multi-user access scenario. The corresponding numerical results indicate that when equipped with a small number of high-resolution ADCs, the mixed-ADC architecture also achieves a large fraction of the achievable rate of conventional architecture and again outperforms antenna selection with the same number of high-resolution ADCs.

In addition, energy efficiencies of the mixed-ADC architecture and of antenna selection are compared, taking that of conventional architecture as a baseline. Numerical results reveal that under the same spectral efficiency loss, both the mixed-ADC architecture and antenna selection achieve significant energy reduction. Moreover, the mixed-ADC architecture always outperforms antenna selection, especially in the multi-user scenario. In summary, the mixed-ADC architecture strikes an attractive balance between spectral efficiency and energy efficiency, for both single-user and multi-user scenarios.

The remaining part of this paper is organized as follows. Section \ref{sect:system architecture} outlines the system model. Adopting GMI as the performance metric, Section \ref{sect:GMI and combiner} establishes the theoretical framework for fixed SIMO channels, based on which the optimal linear combiner and the asymptotic behaviors of the GMI in both low and high SNR regimes are explored. Besides, performance improvement through dithering is also investigated. Then, Section \ref{sect:ergodic fading channel} extends the theoretical framework to ergodic fading channels and evaluates the the effects of imperfect CSI on the system performance. Section \ref{sect:multi-user} applies the theoretical framework to the multi-user access scenario. Furthermore, energy efficiency of the mixed-ADC architecture is assessed in Section
\ref{sect:energy efficiency}. Numerical results are presented in Section \ref{sect:numerical} to corroborate the analysis. Finally, Section \ref{sect:conclusion} concludes the paper. Auxiliary technical derivations are archived in the appendix.

\emph{Notation:} Throughout this paper, vectors and matrices are given in bold typeface, e.g., $\mathbf{x}$ and $\mathbf{X}$, respectively, while scalars are given in regular typeface, e.g., $x$. We use $\|\mathbf{x}\|_{1}$ and $\|\mathbf{x}\|$ to represent the 1-norm and 2-norm of vector $\mathbf{x}$, respectively, and let $\mathbf{X}^{*}$, $\mathbf{X}^{T}$ and $\mathbf{X}^{H}$ denote the conjugate, transpose and conjugate transpose of $\mathbf{X}$, respectively. Normal distribution with mean $\mu$ and variance $\sigma^2$ is denoted by $\mathcal{N}(\mu,\sigma^2)$, while $\mathcal{CN}(\bm{\mu},\mathbf{C})$ stands for the distribution of a circularly symmetric complex Gaussian random vector with mean $\bm{{\mu}}$ and covariance matrix $\mathbf{C}$. Superscripts $\mathrm{R}$ and $\mathrm{I}$ are used to indicate the real and imaginary parts of a complex number, respectively, e.g., $x=x^{\mathrm{R}}+i\cdot x^{\mathrm{I}}$, with $i$ being the imaginary unit. We use $\mathrm{sgn}(x)=\mathrm{sgn}(x^{\mathrm{R}})+i\cdot\mathrm{sgn}(x^{\mathrm{I}})$ to denote the sign function of a complex number $x$, and $\log(x)$ to denote the natural logarithm of positive real number $x$.

\section{System model}
\label{sect:system architecture}
Several scenarios will be addressed in this paper, including fixed SIMO channels, ergodic fading SIMO channels with perfect or imperfect CSI at the receiver, and multi-user channels with multiple single-antenna users and a multi-antenna BS. In this section, we describe the fixed SIMO channel model, and the remaining scenarios will be introduced in later sections.
\begin{figure*}
\centering
\includegraphics[width=0.8\textwidth]{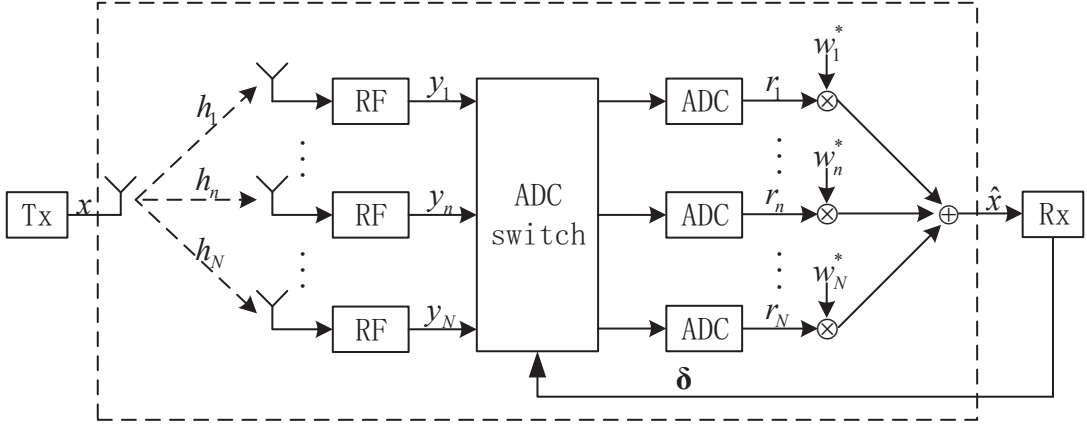}
\centering
\caption{Illustration of the system architecture. It is perhaps worth noting that the ADC switch module can also be placed before the RF chains. In this manner, the RF chain followed by a
pair of one-bit ADCs can be manufactured with lower quality requirements and consequently we can further reduce the power consumption and hardware cost. On the other hand, switch at radio frequency may be more challenging and costly than at baseband. Which choice is favorable will be determined by practical engineering.} \label{fig:SystemModel}
\end{figure*}

As aforementioned, we consider a single-user system, where a single-antenna user communicates with an $N$-antenna BS. Moreover, we consider a narrow-band channel model\footnote{Throughout this paper we focus on a narrow-band channel model, similar to those considered in, e.g., \cite{bjornson2014hardware}-\cite{gustavsson2014impact}, \cite{risi2014massive}-\cite{mo2014high}, \cite{vehkapera2015asymptotic}, among others. Wideband channel model includes multi-path effect, which can still be treated using the general framework of GMI, and will be treated in a separate work; a further discussion is in Section \ref{sect:conclusion}.}, for which the channel vector $\mathbf{h}$ is fixed throughout the transmission of the codeword and is assumed to be perfectly known by the BS. Then the received signal at the BS can be expressed as
\begin{equation}
\mathbf{y}^l=\mathbf{h}x^l+\mathbf{z}^l,\ \ \ \mathrm{for}\  l=1,2,...,L,
\label{equ:equ_1}
\end{equation}
where $x^l$ is the complex signal transmitted at the $l$-th symbol time, $\mathbf{z}^l\sim\mathcal{CN}(\mathbf{0},\sigma^2\mathbf{I})$ models the independent and identically distributed (i.i.d.) complex Gaussian noise vector, and $L$ is the codeword length.

In practice, the received signal at each antenna is quantized by a pair of ADCs, one for each of the in-phase and quadrature (I/Q) branches, so that further signal processing can be performed in the digital domain. Despite of this, most of the literature on receiver design assume ADC with virtually infinite precision for the tractability of analysis. For a large BS antenna array, however, such assumption is no longer justified since the cost and energy consumption of conventional architecture scale linearly with the number of BS antennas, which will soon become the system bottleneck. Therefore, we propose a mixed-ADC architecture in which only $2K$ high-resolution ADCs are available and all the other $2(N-K)$ ADCs are with only one-bit resolution\footnote{Note that a one-bit ADC is particularly simple to implement in hardware, say, using a polarity detector \cite{yin2010monobit}. Furthermore, the analytical approach we adopt in this work, based on the general framework in \cite{zhang2012general}, can be extended to other types of ADCs.}. We further let the I/Q outputs at each antenna be quantized by two ADCs of the same kind. Thus the quantized output is
\begin{equation}
r_{n}^{l}=\mathrm{Q}_{n}(y_n^l)=
\begin{cases}
h_{n} x^{l}+z_{n}^{l}, &\mathrm{if}\  \delta_n=1,\\
\mathrm{sgn}(h_{n} x^{l}+z_{n}^{l}), &\mathrm{if}\  \delta_n=0,
\end{cases}
\label{equ:equ_2}
\end{equation}
for $l=1,...,L,\ n=1,...,N$. Here $\delta_n\in\{0,1\}$ is an indicator: $\delta_n=1$ means that the ADCs corresponding to the $n$-th antenna are high-resolution, whereas $\delta_n=0$ indicates that they are with one-bit resolution. Here for simplicity we assume sufficiently high resolution for $\delta_n=1$, so that the residual quantization noise is negligible then.

To make the expression compact, we introduce $\bar{\delta}_n\triangleq 1-\delta_n$ and rewrite \eqref{equ:equ_2} as
\begin{equation}
r_{n}^{l}=\delta_{n}\cdot(h_{n} x^{l}+z_{n}^{l})+\bar{\delta}_n\cdot\mathrm{sgn}(h_{n} x^{l}+z_{n}^{l}).
\label{equ:equ_3}
\end{equation}
Then, we define an ADC switch vector $\bm{\delta}\triangleq[\delta_1,...,\delta_N]^{T}$, which follows the subsequent restriction
\begin{equation}
\|\bm{\delta}\|_1=\sum_{n=1}^{N}\delta_n=K,
\label{equ:equ_4}
\end{equation}
and should be optimized according to the channel $\mathbf{h}$ so that the limited number of high-resolution ADCs will be well utilized to enhance the system performance.

For transmission of rate $R$, the user selects a message $m$ from $\mathcal{M}=\{1,2,...,\lfloor 2^{LR} \rfloor\}$ uniformly randomly, and maps the selected message to a transmitted codeword, i.e., a length-$L$ complex sequence, $\{x^l(m)\}_{l=1}^{L}$. In this paper, we restrict the codebook to be drawn from a Gaussian ensemble; that is, each codeword is a sequence of $L$ i.i.d. $\mathcal{CN}(0,\mathcal{E}_\mathrm{s})$ random variables, and all the codewords are mutually independent. Such a choice of codebook satisfies the average power constraint $\frac{1}{L}\sum_{l=1}^{L}\mathbb{E}[|x^{l}(m)|^2]\leq\mathcal{E}_\mathrm{s}$. We define the SNR as $\mathrm{SNR}=\mathcal{E}_\mathrm{s}/\sigma^2$, and let $\sigma^2=1$ thereafter for convenience.

As is well known, without receiver distortion, the Gaussian codebook ensemble together with nearest-neighbor decoding achieves the capacity of conventional architecture\footnote{For an $N$-antenna SIMO channel, we let $C(N,K_1,K_2)$ denote its capacity when equipped with $K_1$ pairs of high-resolution ADCs and $K_2$ pairs of one-bit ADCs, where $0\leq K_1, K_2, K_1+K_2 \leq N$. Particularly, for the mixed-ADC architecture, we have $K_1=K$ and $K_2=N-K$; for antenna selection, we have $K_1=K$ and $K_2=0$, discarding the outputs of $(N-K)$ antennas.} $C(N,N,0)=\mathrm{log}(1+\|\mathbf{h}\|^2\mathrm{SNR})$, as the codeword length $L$ grows without bound. With $(N - K)$ pairs of one-bit ADCs, the channel capacity $C(N,K,N-K)$ is less than $C(N,N,0)$ due to information loss during quantization.

As discussed in the introduction, instead of numerically evaluating $C(N,K,N-K)$, in the following, we adopt the nearest-neighbor decoding rule at the decoder, and leverage the general framework developed in \cite{zhang2012general} to investigate the GMI of the mixed-ADC architecture. The GMI acts as an achievable rate and thus also a lower bound of $C(N,K,N-K)$. To this end, we introduce a linear combiner\footnote{There should be some nonlinear receiver that outperforms the linear one in this paper, which will be studied in a future work.} to process the channel output vector, as illustrated in Figure \ref{fig:SystemModel}. Thus the processed channel output is
\begin{equation}
\hat{x}^{l}=\mathbf{w}^{H}\mathbf{r}^{l},
\label{equ:equ_6}
\end{equation}
for $l=1,...,L$, where $\mathbf{w}$ is designed according to the channel $\mathbf{h}$ and the ADC switch vector $\bm{\delta}$.

With nearest-neighbor decoding, upon observing $\{\hat{x}^{l}\}_{l=1}^L$, the decoder computes, for all messages, the Euclidean distances
\begin{equation}
D(m)=\frac{1}{L}\sum_{l=1}^{L}|\hat{x}^{l}-a x^{l}(m)|^2,\ \ \ m\in\mathcal{M},
\label{equ:equ_7}
\end{equation}
and decides the received message as the one that minimizes \eqref{equ:equ_7}. Here the scaling parameter $a$ is adopted to adjust the power imbalance between the channel input $x^l$ and the processed output $\hat{x}^l$ contributed collectively by the channel, one-bit quantization and the linear combiner, and should be selected appropriately for optimizing the decoding performance.

\section{GMI and Optimal Combining}
\label{sect:GMI and combiner}
\subsection{GMI of the Proposed System Framework}
\label{subsect:GMI}
From now on, we suppress the time index $l$ for notational simplicity. To facilitate the exposition, we summarize \eqref{equ:equ_3} and \eqref{equ:equ_6} as
\begin{equation}
\hat{x}=\mathbf{w}^{H}\mathbf{r}\triangleq f(x,\mathbf{h},\mathbf{z}),
\label{equ:equ_8}
\end{equation}
where $f(\cdot)$ is a memoryless nonlinear distortion function that incorporates the effects of output quantization as well as linear combining, and maps the triple $(x,\mathbf{h},\mathbf{z})$ into the processed output $\hat{x}$. Although $\bm{\delta}$ and  $\mathbf{w}$ are made invisible in the function $f(\cdot)$ since they are both determined by $\mathbf{h}$, we need to keep in mind that $f(\cdot)$ implicitly includes $\bm{\delta}$ and $\mathbf{w}$.

We apply the general framework developed in \cite{zhang2012general} to derive the GMI of the system architecture. The GMI is a lower bound of the channel capacity, and more precisely, it characterizes the maximum achievable rate under the specified random codebook (Gaussian ensemble here) and the specified decoding rule (nearest-neighbor decoding here) such that the average decoding error probability (averaged over the codebook ensemble) is guaranteed to vanish asymptotically as the codeword length grows without bound \cite{lapidoth2002fading}. Particularly, conditioned on $\mathbf{w}$ and $\bm{\delta}$, the GMI takes the following form analogous to \cite[Eq. (89)]{zhang2012general}; that is,
\begin{align}
I_{\mathrm{GMI}}(\mathbf{w},\bm{\delta})=&\sup_{a\in\mathbb{C},\theta<0}
\Bigg(\theta\mathbb{E}[|f(x,\mathbf{h},\mathbf{z})-ax|^2]-\nonumber\\
&\frac{\theta\mathbb{E}[|f(x,\mathbf{h},\mathbf{z})|^2]}{1-\theta|a|^2\mathcal{E}_\mathrm{s}}
\!+\!\log(1\!-\!\theta|a|^2\mathcal{E}_\mathrm{s})\Bigg),
\label{equ:equ_49}
\end{align}
where the expectation is taken with respect to $x$ and $\mathbf{z}$. The parameter $a$ is in the nearest-neighbor decoding rule (\ref{equ:equ_7}), and the parameter $\theta$ is from the underlying large-deviations argument, --- for further details about the derivation of the expression, we refer to \cite{lapidoth2002fading} \cite{zhang2012general}. Then we can solve the optimization problem in (\ref{equ:equ_49}), following essentially the same line as \cite[App. C]{zhang2012general}, and obtain an explicit expression of the GMI as follows.

\begin{prop}
\label{prop:prop_1}
With Gaussian codebook ensemble and nearest-neighbor decoding, the GMI for given $\mathbf{w}$ and $\bm{\delta}$ is
\begin{equation}
I_{\mathrm{GMI}}(\mathbf{w},\bm{\delta})=\mathrm{log}\left(1+\frac{\kappa(\mathbf{w},\bm{\delta})}{1-\kappa(\mathbf{w},\bm{\delta})}\right),
\label{equ:equ_9}
\end{equation}
where the parameter $\kappa(\mathbf{w},\bm{\delta})$ is
\begin{equation}
\kappa(\mathbf{w},\bm{\delta})=\frac{|\mathbb{E}[f^{*}(x,\mathbf{h},\mathbf{z})\cdot x]|^2}
{\mathcal{E}_\mathrm{s} \mathbb{E}[|f(x,\mathbf{h},\mathbf{z})|^2]}.
\label{equ:equ_10}
\end{equation}
The corresponding optimal choice of the scaling parameter $a$ is
\begin{equation}
a_{\mathrm{opt}}(\mathbf{w},\bm{\delta})=\frac{\mathbb{E}[f(x,\mathbf{h},\mathbf{z})\cdot x^{*}]}{\mathcal{E}_\mathrm{s}}.
\label{equ:equ_11}
\end{equation}
We note that the expectation is taken with respect to $x$ and $\mathbf{z}$.
\end{prop}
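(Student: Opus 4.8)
The plan is to carry out the two-parameter maximization in \eqref{equ:equ_49} directly, exploiting the fact that the objective depends on the complex scaling $a$ only through $\mathrm{Re}(a\,\mathbb{E}[f^{*}x])$ and $|a|^{2}$. Write $P=\mathbb{E}[|f(x,\mathbf{h},\mathbf{z})|^{2}]$ and $\rho=\mathbb{E}[f^{*}(x,\mathbf{h},\mathbf{z})\,x]$, so that $\mathbb{E}[|f-ax|^{2}]=P-2\mathrm{Re}(a\rho)+|a|^{2}\mathcal{E}_\mathrm{s}$ and $\kappa=|\rho|^{2}/(\mathcal{E}_\mathrm{s}P)$. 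I would first optimize over the phase of $a$ with $|a|$ and $\theta$ held fixed: since $\theta<0$, the only phase-dependent term, $-2\theta\,\mathrm{Re}(a\rho)$, is maximized by aligning $a$ so that $a\rho$ is real and nonnegative, i.e. $\arg a=-\arg\rho=\arg\mathbb{E}[fx^{*}]$. This gives $\mathrm{Re}(a\rho)=|a|\,|\rho|$, already fixes the direction of $a_{\mathrm{opt}}$, and leaves a maximization over the real magnitude $|a|$ (equivalently $|a|^{2}$) and over $\theta<0$.

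Next I would impose first-order stationarity on the two remaining real variables. Abbreviating $D=1-\theta|a|^{2}\mathcal{E}_\mathrm{s}>1$, and using $\partial D/\partial\theta=-|a|^{2}\mathcal{E}_\mathrm{s}$ together with the identity $\partial(\theta/D)/\partial\theta=1/D^{2}$, the condition $\partial/\partial\theta=0$ reduces to $\mathbb{E}[|f-ax|^{2}]=P/D^{2}+|a|^{2}\mathcal{E}_\mathrm{s}/D$, while the Wirtinger condition $\partial/\partial a^{*}=0$ reduces to $a\mathcal{E}_\mathrm{s}\bigl(1-1/D-\theta P/D^{2}\bigr)=\rho^{*}$. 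The second relation, read together with the phase alignment, forces $a\rho$ to be real and ties $a$ to $\rho^{*}$ up to the scalar bracket.

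Then I would solve this pair. The structure indicates the solution $|a|^{2}\mathcal{E}_\mathrm{s}=|\rho|^{2}/\mathcal{E}_\mathrm{s}$ and $D=P/\!\left(P-|\rho|^{2}/\mathcal{E}_\mathrm{s}\right)$, equivalently $a_{\mathrm{opt}}=\rho^{*}/\mathcal{E}_\mathrm{s}=\mathbb{E}[fx^{*}]/\mathcal{E}_\mathrm{s}$ and $\theta_{\mathrm{opt}}=-1/\!\left(P-|\rho|^{2}/\mathcal{E}_\mathrm{s}\right)$, which recovers \eqref{equ:equ_11}. Substituting back, one finds $\mathbb{E}[|f-a_{\mathrm{opt}}x|^{2}]=P-|\rho|^{2}/\mathcal{E}_\mathrm{s}$, so $\theta_{\mathrm{opt}}\mathbb{E}[|f-ax|^{2}]=-1$ and $-\theta_{\mathrm{opt}}P/D=+1$; the first two terms of the objective cancel and the logarithm gives $\log D=\log\frac{P}{P-|\rho|^{2}/\mathcal{E}_\mathrm{s}}=-\log(1-\kappa)=\log\!\bigl(1+\kappa/(1-\kappa)\bigr)$, which is exactly \eqref{equ:equ_9}.

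The main obstacle is twofold. Deriving the root of the coupled nonlinear system in the third step cleanly, rather than merely verifying it, is the delicate part; I would prefer to eliminate $\theta$ first, since for fixed $a$ the $\theta$-maximization is one-dimensional with stationarity condition the quadratic $\mathbb{E}[|f-ax|^{2}]\,D^{2}-|a|^{2}\mathcal{E}_\mathrm{s}\,D-P=0$, and then optimize the resulting single-variable function of $|a|^{2}$. Second, one must confirm that this interior critical point is the global supremum rather than a saddle: here I would check the boundary behavior, noting that for fixed $a$ the objective tends to $0$ as $\theta\to0^{-}$ and to $-\infty$ as $\theta\to-\infty$ (since $\theta\,\mathbb{E}[|f-ax|^2]$ dominates $\log D$), so a unique positive interior maximizer exists, and similarly control the $|a|\to\infty$ regime. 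Finally, the Cauchy--Schwarz inequality gives $\kappa\le1$, guaranteeing $D>0$ and that the claimed expression is well defined. This overall line parallels the computation in \cite[App. C]{zhang2012general}, specialized to the distortion function $f$ induced by one-bit quantization and linear combining.
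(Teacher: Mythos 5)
Your proposal is correct and takes essentially the same route as the paper: the paper proves Proposition~\ref{prop:prop_1} by solving the two-parameter maximization in \eqref{equ:equ_49} ``following essentially the same line as'' \cite[App.~C]{zhang2012general}, which is precisely the direct optimization over $a$ and $\theta$ you carry out (phase-aligning $a$ with $\mathbb{E}[fx^{*}]$, imposing stationarity in $\theta$ and $a$, and substituting back so the first two terms cancel and the value reduces to $\log\bigl(1/(1-\kappa)\bigr)$). Your algebra checks out, including the optimal $a_{\mathrm{opt}}=\mathbb{E}[fx^{*}]/\mathcal{E}_\mathrm{s}$ and $\theta_{\mathrm{opt}}=-1/\bigl(\mathbb{E}[|f|^{2}]-|\mathbb{E}[f^{*}x]|^{2}/\mathcal{E}_\mathrm{s}\bigr)$, so this is a correct, self-contained rendering of the cited derivation.
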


It is worth noting that $\kappa(\mathbf{w},\bm{\delta})$ is the squared correlation coefficient of channel input $x$ and the processed output $f(x,\mathbf{h},\mathbf{z})$, and thus is upper bounded by one, from Cauchy-Schwartz's inequality. Moreover, $I_{\mathrm{GMI}}(\mathbf{w},\bm{\delta})$ is a strictly increasing function of $\kappa(\mathbf{w},\bm{\delta})$ for $\kappa(\mathbf{w},\bm{\delta})\in(0,1)$. Therefore, in the following, we will seek to maximize $\kappa(\mathbf{w},\bm{\delta})$ by choosing well designed linear combiner $\mathbf{w}$ and ADC switch vector $\bm{\delta}$. To this end, we first derive a closed-form expression for $\kappa(\mathbf{w},\bm{\delta})$. The result is summarized by the following proposition.

\begin{prop}
\label{prop:prop_2}
Given $\mathbf{w}$ and $\bm{\delta}$, for \eqref{equ:equ_10} in Proposition \ref{prop:prop_1}, we have
\begin{equation}
\kappa(\mathbf{w},\bm{\delta})=\frac{\mathbf{w}^{H}\mathbf{R}_{\mathbf{r}x}\mathbf{R}_{\mathbf{r}x}^{H}\mathbf{w}}
{\mathcal{E}_\mathrm{s}\mathbf{w}^{H}\mathbf{R}_{\mathbf{rr}}\mathbf{w}},
\label{equ:equ_17}
\end{equation}
where $\mathbf{R}_{\mathbf{r}x}$ is the correlation vector between $\mathbf{r}$ and $x$, with its $n$-th element being
\begin{equation}
(\mathbf{R}_{\mathbf{r}x})_n=h_n \mathcal{E}_\mathrm{s} \left[\delta_n+\bar{\delta}_n\cdot\sqrt{\frac{4}{\pi(|h_n|^2\mathcal{E}_\mathrm{s}+1)}}\right],
\label{equ:equ_18}
\end{equation}
and $\mathbf{R}_{\mathbf{rr}}$ is the covariance matrix of $\mathbf{r}$, with its $(n,m)$-th entry being $(\mathbf{R}_{\mathbf{rr}})_{n,m}=$
\begin{equation}
\begin{cases}
1+\delta_n\cdot |h_n|^2\mathcal{E}_\mathrm{s}+\bar{\delta}_n,&\mathrm{if}\ n=m, \\
h_n h_m^*\mathcal{E}_\mathrm{s}\Bigg[\delta_n\delta_m+
\delta_n\bar{\delta}_m\cdot\sqrt{\frac{4}{\pi(|h_m|^2\mathcal{E}_\mathrm{s}+1)}}+\\
\ \ \ \ \ \ \ \ \ \ \bar{\delta}_n\delta_m\cdot\sqrt{\frac{4}{\pi(|h_n|^2\mathcal{E}_\mathrm{s}+1)}}\Bigg]+\\
\bar{\delta}_n\bar{\delta}_m\!\cdot\!\frac{4}{\pi}\Bigg[
\mathrm{arcsin}\Big(\frac{(h_nh_m^*)^{\mathrm{R}}\mathcal{E}_\mathrm{s}}
{\sqrt{|h_n|^2\mathcal{E}_\mathrm{s}+1}\sqrt{|h_m|^2\mathcal{E}_\mathrm{s}+1}}\Big)+\\
\ \ \ \ \ \ \ \ \ \ i\!\cdot\!\mathrm{arcsin}\Big(\frac{(h_nh_m^*)^{\mathrm{I}}\mathcal{E}_\mathrm{s}}
{\sqrt{|h_n|^2\mathcal{E}_\mathrm{s}+1}\sqrt{|h_m|^2\mathcal{E}_\mathrm{s}+1}}\Big)
\Bigg],&\mathrm{if}\ n\neq m.
\end{cases}
\label{equ:equ_19}
\end{equation}
The corresponding optimal choice of the scaling parameter $a$ in \eqref{equ:equ_11} is
\begin{equation}
a_{\mathrm{opt}}(\mathbf{w},\bm{\delta})=\frac{1}{\mathcal{E}_\mathrm{s}}\mathbf{w}^{\mathrm{H}}\mathbf{R}_{\mathbf{r}x}.
\label{equ:equ_20}
\end{equation}
\end{prop}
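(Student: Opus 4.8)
The plan is to first carry out a purely algebraic reduction that turns the definitions \eqref{equ:equ_10} and \eqref{equ:equ_11} into the quadratic forms \eqref{equ:equ_17} and \eqref{equ:equ_20}, and then to evaluate the two second-order statistics $\mathbf{R}_{\mathbf{r}x}\triangleq\mathbb{E}[\mathbf{r}x^{*}]$ and $\mathbf{R}_{\mathbf{rr}}\triangleq\mathbb{E}[\mathbf{r}\mathbf{r}^{H}]$ entry by entry. For the reduction I substitute $f(x,\mathbf{h},\mathbf{z})=\mathbf{w}^{H}\mathbf{r}$ and use linearity of expectation: since $\mathbb{E}[f^{*}x]=\mathbf{R}_{\mathbf{r}x}^{H}\mathbf{w}$, its squared modulus is $\mathbf{w}^{H}\mathbf{R}_{\mathbf{r}x}\mathbf{R}_{\mathbf{r}x}^{H}\mathbf{w}$, while $\mathbb{E}[|f|^{2}]=\mathbf{w}^{H}\mathbf{R}_{\mathbf{rr}}\mathbf{w}$ and $\mathbb{E}[fx^{*}]=\mathbf{w}^{H}\mathbf{R}_{\mathbf{r}x}$, which give \eqref{equ:equ_17} and \eqref{equ:equ_20} at once. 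Beforehand I would observe that $h_n x+z_n$ is zero-mean circularly symmetric Gaussian and that $\mathrm{sgn}(\cdot)$ of a symmetric variable is zero-mean, so $\mathbb{E}[\mathbf{r}]=\mathbf{0}$ and the correlation matrix $\mathbb{E}[\mathbf{r}\mathbf{r}^{H}]$ coincides with the covariance. The problem then reduces entirely to computing the scalars $\mathbb{E}[r_n x^{*}]$ and $\mathbb{E}[r_n r_m^{*}]$.

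Both computations rest on two scalar identities for zero-mean jointly Gaussian reals $U,V$: the Bussgang-type relation $\mathbb{E}[\mathrm{sgn}(U)V]=\sqrt{2/\pi}\,\mathbb{E}[UV]/\sqrt{\mathbb{E}[U^{2}]}$, which follows by writing $V$ as its projection onto $U$ plus an independent remainder together with $\mathbb{E}[\mathrm{sgn}(U)U]=\mathbb{E}|U|=\sqrt{2/\pi}\sqrt{\mathbb{E}[U^{2}]}$; and the arcsine law $\mathbb{E}[\mathrm{sgn}(U)\mathrm{sgn}(V)]=\frac{2}{\pi}\mathrm{arcsin}(\rho)$ with $\rho$ the correlation coefficient of $U$ and $V$. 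To use them I split every complex quantity into real and imaginary parts, writing $y_n=h_n x+z_n$ with $y_n^{\mathrm{R}}=h_n^{\mathrm{R}}x^{\mathrm{R}}-h_n^{\mathrm{I}}x^{\mathrm{I}}+z_n^{\mathrm{R}}$ and $y_n^{\mathrm{I}}=h_n^{\mathrm{I}}x^{\mathrm{R}}+h_n^{\mathrm{R}}x^{\mathrm{I}}+z_n^{\mathrm{I}}$, and recall that circular symmetry gives $\mathbb{E}[(y_n^{\mathrm{R}})^{2}]=\mathbb{E}[(y_n^{\mathrm{I}})^{2}]=\frac{1}{2}(|h_n|^{2}\mathcal{E}_\mathrm{s}+1)$.

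For $(\mathbf{R}_{\mathbf{r}x})_n=\mathbb{E}[r_n x^{*}]$ the high-resolution case $\delta_n=1$ is immediate, $\mathbb{E}[(h_n x+z_n)x^{*}]=h_n\mathcal{E}_\mathrm{s}$. For the one-bit case $\delta_n=0$ I expand $\mathrm{sgn}(y_n)x^{*}=(\mathrm{sgn}(y_n^{\mathrm{R}})+i\,\mathrm{sgn}(y_n^{\mathrm{I}}))(x^{\mathrm{R}}-i\,x^{\mathrm{I}})$, evaluate the four real cross-moments $\mathbb{E}[\mathrm{sgn}(y_n^{\mathrm{R}})x^{\mathrm{R}}]$ and its companions through the Bussgang identity using $\mathbb{E}[y_n^{\mathrm{R}}x^{\mathrm{R}}]=h_n^{\mathrm{R}}\mathcal{E}_\mathrm{s}/2$, $\mathbb{E}[y_n^{\mathrm{R}}x^{\mathrm{I}}]=-h_n^{\mathrm{I}}\mathcal{E}_\mathrm{s}/2$, and so on, and recombine: the real and imaginary pieces assemble into $2h_n^{\mathrm{R}}$ and $2h_n^{\mathrm{I}}$, producing $h_n\mathcal{E}_\mathrm{s}\sqrt{4/(\pi(|h_n|^{2}\mathcal{E}_\mathrm{s}+1))}$, which is \eqref{equ:equ_18}.

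For $(\mathbf{R}_{\mathbf{rr}})_{n,m}=\mathbb{E}[r_n r_m^{*}]$ the diagonal is trivial: $|h_n|^{2}\mathcal{E}_\mathrm{s}+1$ when $\delta_n=1$ and $|\mathrm{sgn}(y_n)|^{2}=2$ when $\delta_n=0$, which merge into $1+\delta_n|h_n|^{2}\mathcal{E}_\mathrm{s}+\bar{\delta}_n$. Off the diagonal the independence of $z_n$ and $z_m$ for $n\neq m$ kills every pure-noise cross term; the two mixed cases reduce to the conjugate of the one-bit calculation above, namely $\mathbb{E}[x\,\mathrm{sgn}(y_m)^{*}]=h_m^{*}\mathcal{E}_\mathrm{s}\sqrt{4/(\pi(|h_m|^{2}\mathcal{E}_\mathrm{s}+1))}$, giving the two square-root terms. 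The remaining and hardest case $\delta_n=\delta_m=0$ requires expanding $\mathrm{sgn}(y_n)\mathrm{sgn}(y_m)^{*}$ into four real sign products and applying the arcsine law to each; I would compute the four correlation coefficients, which reduce to $(h_n h_m^{*})^{\mathrm{R}}\mathcal{E}_\mathrm{s}$ and $\pm(h_n h_m^{*})^{\mathrm{I}}\mathcal{E}_\mathrm{s}$ divided by $\sqrt{(|h_n|^{2}\mathcal{E}_\mathrm{s}+1)(|h_m|^{2}\mathcal{E}_\mathrm{s}+1)}$, after which the two real-part contributions add and the two imaginary-part contributions combine through the oddness of $\mathrm{arcsin}$, yielding exactly the $\frac{4}{\pi}[\mathrm{arcsin}(\cdots)+i\,\mathrm{arcsin}(\cdots)]$ form in \eqref{equ:equ_19}. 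I expect this both-one-bit entry to be the main obstacle, since it is the only place where the arcsine law is needed and it demands careful bookkeeping of the four real/imaginary cross-correlations induced by the complex channel gains; the cleanest route to the arcsine law itself is Price's theorem, differentiating $\mathbb{E}[\mathrm{sgn}(U)\mathrm{sgn}(V)]$ with respect to the covariance of $U$ and $V$.
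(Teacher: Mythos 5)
Your proposal is correct and takes essentially the same route as the paper's Appendix-A proof: the identical reduction of $\kappa(\mathbf{w},\bm{\delta})$ to the quadratic forms in $\mathbf{R}_{\mathbf{r}x}$ and $\mathbf{R}_{\mathbf{rr}}$, followed by the same entry-by-entry evaluation — a Bussgang-type identity for the $\mathrm{sgn}$-times-linear moments (the paper's Lemma 2) and the Gaussian arcsine law for the $\mathrm{sgn}$-times-$\mathrm{sgn}$ moments (the paper's Lemma 1), with the same real/imaginary decomposition and the same use of the oddness of $\arcsin$ to combine the imaginary-part terms. The only difference is how the two scalar lemmas are justified (you prove the Bussgang relation by orthogonal projection and invoke Price's theorem for the arcsine law, whereas the paper imports the former from prior work and derives the latter from the bivariate-Gaussian orthant probability), which does not alter the structure of the argument.
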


\begin{proof}
See Appendix-A.
\end{proof}

\subsection{Optimization of Linear Combiner}
\label{subsect:combiner}
In the previous subsection, the GMI of the system architecture is derived, as a function of $\mathbf{h}$, $\mathbf{w}$ and $\bm{\delta}$. In this subsection, we turn to the optimization of $\mathbf{w}$ such that the GMI is maximized for given $\mathbf{h}$ and $\bm{\delta}$. The subsequent proposition summarizes our result.

\begin{prop}
\label{prop:prop_3}
For given $\mathbf{h}$ and $\bm{\delta}$, the optimal linear combiner $\mathbf{w}$ takes the following form
\begin{equation}
\mathbf{w}_{\mathrm{opt}}=\mathbf{R}_{\mathbf{rr}}^{-1}\mathbf{R}_{\mathbf{r}x},
\label{equ:equ_21}
\end{equation}
which is in fact a linear MMSE combiner that minimizes the mean squared estimation error of $x$ upon observing $\mathbf{r}$ among all linear combiners. The corresponding $\kappa(\mathbf{w},\bm{\delta})$ is
\begin{equation}
\kappa(\mathbf{w}_{\mathrm{opt}},\bm{\delta})=a_{\mathrm{opt}}(\mathbf{w}_{\mathrm{opt}},\bm{\delta})=
\frac{1}{\mathcal{E}_\mathrm{s}}\mathbf{R}_{\mathbf{r}x}^{H}\mathbf{R}_{\mathbf{rr}}^{-1}\mathbf{R}_{\mathbf{r}x}.
\label{equ:equ_22}
\end{equation}
\end{prop}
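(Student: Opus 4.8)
The plan is to observe that, by Proposition~\ref{prop:prop_1}, maximizing $I_{\mathrm{GMI}}(\mathbf{w},\bm{\delta})$ over $\mathbf{w}$ is equivalent to maximizing $\kappa(\mathbf{w},\bm{\delta})$, and that the closed form \eqref{equ:equ_17} exhibits $\kappa$ as a generalized Rayleigh quotient: a ratio of the two Hermitian quadratic forms $\mathbf{w}^{H}\mathbf{R}_{\mathbf{r}x}\mathbf{R}_{\mathbf{r}x}^{H}\mathbf{w}$ and $\mathbf{w}^{H}\mathbf{R}_{\mathbf{rr}}\mathbf{w}$. Two structural features drive the argument. First, the quotient is invariant under $\mathbf{w}\mapsto c\mathbf{w}$ for any nonzero $c\in\mathbb{C}$, since both forms are homogeneous of degree two, so the optimizer is determined only up to a scalar. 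Second, the numerator matrix $\mathbf{R}_{\mathbf{r}x}\mathbf{R}_{\mathbf{r}x}^{H}$ is rank one. I would therefore handle this as a standard whitening-and-Cauchy--Schwartz problem rather than differentiating directly.

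Concretely, I would first note that $\mathbf{R}_{\mathbf{rr}}$, being a covariance matrix, is Hermitian positive semidefinite, and argue it is in fact positive definite (see below), so that it admits a Hermitian positive-definite square root $\mathbf{R}_{\mathbf{rr}}^{1/2}$ and an inverse. Substituting $\mathbf{u}\triangleq\mathbf{R}_{\mathbf{rr}}^{1/2}\mathbf{w}$ turns the denominator into $\|\mathbf{u}\|^{2}$ and the numerator into $|\mathbf{b}^{H}\mathbf{u}|^{2}$ with $\mathbf{b}\triangleq\mathbf{R}_{\mathbf{rr}}^{-1/2}\mathbf{R}_{\mathbf{r}x}$, so that $\mathcal{E}_\mathrm{s}\,\kappa=|\mathbf{b}^{H}\mathbf{u}|^{2}/\|\mathbf{u}\|^{2}$. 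By the Cauchy--Schwartz inequality this ratio is at most $\|\mathbf{b}\|^{2}=\mathbf{R}_{\mathbf{r}x}^{H}\mathbf{R}_{\mathbf{rr}}^{-1}\mathbf{R}_{\mathbf{r}x}$, with equality exactly when $\mathbf{u}$ is parallel to $\mathbf{b}$. Transforming back gives $\mathbf{w}\parallel\mathbf{R}_{\mathbf{rr}}^{-1/2}\mathbf{b}=\mathbf{R}_{\mathbf{rr}}^{-1}\mathbf{R}_{\mathbf{r}x}$; by scale invariance I may simply take $\mathbf{w}_{\mathrm{opt}}=\mathbf{R}_{\mathbf{rr}}^{-1}\mathbf{R}_{\mathbf{r}x}$, which is \eqref{equ:equ_21}. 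Inserting the maximal value back into \eqref{equ:equ_17} yields $\kappa(\mathbf{w}_{\mathrm{opt}},\bm{\delta})=\tfrac{1}{\mathcal{E}_\mathrm{s}}\mathbf{R}_{\mathbf{r}x}^{H}\mathbf{R}_{\mathbf{rr}}^{-1}\mathbf{R}_{\mathbf{r}x}$, and substituting $\mathbf{w}_{\mathrm{opt}}$ into \eqref{equ:equ_20}, using that $\mathbf{R}_{\mathbf{rr}}^{-1}$ is Hermitian, gives the same quantity, which establishes the chain of equalities in \eqref{equ:equ_22}.

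Finally, to justify the MMSE interpretation I would invoke the orthogonality principle: the $\mathbf{w}$ minimizing $\mathbb{E}[|x-\mathbf{w}^{H}\mathbf{r}|^{2}]$ satisfies $\mathbb{E}[(x-\mathbf{w}^{H}\mathbf{r})\mathbf{r}^{H}]=\mathbf{0}$, i.e. $\mathbf{R}_{\mathbf{rr}}\mathbf{w}=\mathbb{E}[\mathbf{r}x^{*}]=\mathbf{R}_{\mathbf{r}x}$, whose solution is precisely $\mathbf{R}_{\mathbf{rr}}^{-1}\mathbf{R}_{\mathbf{r}x}$; hence the GMI-maximizing combiner and the linear-MMSE combiner coincide, which is intuitively natural since $\kappa$ in \eqref{equ:equ_10} is governed solely by the input--output correlation. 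The one genuine technical point, and the step I would be most careful about, is verifying $\mathbf{R}_{\mathbf{rr}}\succ0$ so that the square root, the inverse, and the uniqueness (up to scale) of the maximizer are all legitimate. This holds because independent additive noise $\mathbf{z}$ enters every branch, so no entry of $\mathbf{r}$ is a deterministic function of the others; one can also confirm it directly from \eqref{equ:equ_19}, whose off-diagonal correlations stay strictly below the diagonal magnitudes because each $\sqrt{|h_n|^{2}\mathcal{E}_\mathrm{s}+1}$ exceeds $\sqrt{|h_n|^{2}\mathcal{E}_\mathrm{s}}$. Everything else reduces to routine linear algebra.
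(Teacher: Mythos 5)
Your proposal is correct and takes essentially the same route as the paper's proof: whiten by $\mathbf{R}_{\mathbf{rr}}^{1/2}$, apply the Cauchy--Schwartz inequality, and identify $\mathbf{w}_{\mathrm{opt}}=\mathbf{R}_{\mathbf{rr}}^{-1}\mathbf{R}_{\mathbf{r}x}$ from the equality condition. Your extra checks (scale invariance of the quotient, the chain of equalities in \eqref{equ:equ_22}, the orthogonality-principle justification of the MMSE interpretation, and positive definiteness of $\mathbf{R}_{\mathbf{rr}}$) are all sound additions that the paper's proof leaves implicit.
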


\begin{proof}
Noticing that $\mathbf{R}_{\mathbf{rr}}$ is a positive semidefinite Hermitian matrix, from \eqref{equ:equ_17} we have
\begin{eqnarray}
\kappa(\mathbf{w},\bm{\delta})&=&\frac{1}{\mathcal{E}_\mathrm{s}}
\frac{|\mathbf{w}^{H}\mathbf{R}_{\mathbf{r}x}|^2}{\mathbf{w}^{H}\mathbf{R}_{\mathbf{rr}}\mathbf{w}}\nonumber\\
&=&\frac{1}{\mathcal{E}_\mathrm{s}}
\frac{|\mathbf{w}^{H}\mathbf{R}_{\mathbf{rr}}^{1/2}\mathbf{R}_{\mathbf{rr}}^{-1/2}\mathbf{R}_{\mathbf{r}x}|^2}
{\|\mathbf{w}^{H}\mathbf{R}_{\mathbf{rr}}^{1/2}\|^2}\nonumber\\
&\leq&\frac{1}{\mathcal{E}_\mathrm{s}}
\frac{\|\mathbf{w}^{H}\mathbf{R}_{\mathbf{rr}}^{1/2}\|^2\cdot\|\mathbf{R}_{\mathbf{rr}}^{-1/2}\mathbf{R}_{\mathbf{r}x}\|^2}
{\|\mathbf{w}^{H}\mathbf{R}_{\mathbf{rr}}^{1/2}\|^2}\nonumber\\
&=&\frac{1}{\mathcal{E}_\mathrm{s}}\|\mathbf{R}_{\mathbf{rr}}^{-1/2}\mathbf{R}_{\mathbf{r}x}\|^2,
\label{equ:equ_23}
\end{eqnarray}
where the inequality follows from Cauchy-Schwartz's inequality, which holds equality if and only if $\mathbf{w}^{H}\mathbf{R}_{\mathbf{rr}}^{1/2}=(\mathbf{R}_{\mathbf{rr}}^{-1/2}\mathbf{R}_{\mathbf{r}x})^{H}$, i.e., $\mathbf{w}_{\mathrm{opt}}=\mathbf{R}_{\mathbf{rr}}^{-1}\mathbf{R}_{\mathbf{r}x}$.
\end{proof}

The subsequent corollary demonstrates that the mixed-ADC architecture achieves better performance than antenna selection with the same number of high-resolution ADCs.

\begin{cor}
\label{cor:cor_4}
Suppose that the high-resolution ADCs are switched to the antennas with the strongest $K$ link magnitude gains, and denote the corresponding ADC switch vector as $\bm{\delta}^{'}$. Then, the following relationship
\begin{equation}
I_{\mathrm{GMI}}(\mathbf{w}_{\mathrm{opt}},\bm{\delta}^{'})>C(N,K,0)
\label{equ:equ_60}
\end{equation}
holds, where $C(N,K,0)=\log(1+\sum_{n=1}^N\delta_{n}^{'}\cdot |h_n|^2\mathcal{E}_\mathrm{s})$ is the capacity of the antenna selection solution.
\end{cor}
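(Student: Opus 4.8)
The plan is to produce a combiner that the mixed-ADC architecture is free to use and whose GMI already equals $C(N,K,0)$, and then to show that the MMSE combiner of Proposition~\ref{prop:prop_3} strictly improves on it. Write $H=\{n:\delta_n'=1\}$ and $L=\{n:\delta_n'=0\}$ for the high-resolution and one-bit antenna sets, and partition $\mathbf{R}_{\mathbf{r}x}=[\mathbf{R}_{Hx}^{T},\mathbf{R}_{Lx}^{T}]^{T}$ and $\mathbf{R}_{\mathbf{rr}}$ into the corresponding blocks $\mathbf{R}_{HH},\mathbf{R}_{HL},\mathbf{R}_{LH},\mathbf{R}_{LL}$. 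First I would take the combiner $\mathbf{w}_0$ supported only on $H$: on those antennas $r_n=h_nx+z_n$ is the unquantized output, so $\hat{x}=\mathbf{w}_0^{H}\mathbf{r}$ is a combination of a clean Gaussian SIMO observation. Maximizing $\kappa$ over such $\mathbf{w}_0$ reduces the optimal value in \eqref{equ:equ_22}, applied to the $H$-subsystem, to $\kappa_{\mathrm{sub}}=\tfrac{1}{\mathcal{E}_\mathrm{s}}\mathbf{R}_{Hx}^{H}\mathbf{R}_{HH}^{-1}\mathbf{R}_{Hx}$; substituting $\mathbf{R}_{Hx}=\mathcal{E}_\mathrm{s}\mathbf{h}_H$ and $\mathbf{R}_{HH}=\mathcal{E}_\mathrm{s}\mathbf{h}_H\mathbf{h}_H^{H}+\mathbf{I}$ and applying the matrix inversion lemma yields $\kappa_{\mathrm{sub}}=\mathcal{E}_\mathrm{s}\|\mathbf{h}_H\|^2/(1+\mathcal{E}_\mathrm{s}\|\mathbf{h}_H\|^2)$, so that $-\log(1-\kappa_{\mathrm{sub}})=\log(1+\sum_n\delta_n'|h_n|^2\mathcal{E}_\mathrm{s})=C(N,K,0)$. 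This is just the familiar fact that Gaussian signalling with MMSE combining attains the capacity of a linear Gaussian channel.

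Because $\mathbf{w}_0$ (padded with zeros on $L$) is an admissible full combiner, Proposition~\ref{prop:prop_3} gives $\kappa(\mathbf{w}_{\mathrm{opt}},\bm{\delta}')\geq\kappa_{\mathrm{sub}}$, and since $I_{\mathrm{GMI}}$ is strictly increasing in $\kappa$ this already yields the non-strict bound $I_{\mathrm{GMI}}(\mathbf{w}_{\mathrm{opt}},\bm{\delta}')\geq C(N,K,0)$. The substance of the corollary is thus the strict inequality, and that is where the argument must do real work. To obtain it I would apply the Schur-complement identity to \eqref{equ:equ_22}, writing
\[
\mathcal{E}_\mathrm{s}\,\kappa(\mathbf{w}_{\mathrm{opt}},\bm{\delta}')=\mathbf{R}_{Hx}^{H}\mathbf{R}_{HH}^{-1}\mathbf{R}_{Hx}+\mathbf{v}^{H}\mathbf{S}^{-1}\mathbf{v},
\]
with Schur complement $\mathbf{S}=\mathbf{R}_{LL}-\mathbf{R}_{LH}\mathbf{R}_{HH}^{-1}\mathbf{R}_{HL}\succ\mathbf{0}$ (as $\mathbf{R}_{\mathbf{rr}}$ is positive definite) and $\mathbf{v}=\mathbf{R}_{Lx}-\mathbf{R}_{LH}\mathbf{R}_{HH}^{-1}\mathbf{R}_{Hx}$. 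The first term is exactly $\mathcal{E}_\mathrm{s}\kappa_{\mathrm{sub}}$, so $\kappa(\mathbf{w}_{\mathrm{opt}},\bm{\delta}')>\kappa_{\mathrm{sub}}$ holds if and only if $\mathbf{v}\neq\mathbf{0}$. It is convenient to read $\mathbf{v}$ as $\mathbb{E}[\mathbf{r}_L\epsilon_H^{*}]$, the cross-correlation between the one-bit outputs and the residual $\epsilon_H=x-\hat{x}_H$ of the linear MMSE estimate of $x$ formed from the high-resolution block alone.

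The step I expect to be the main obstacle is showing this cross-correlation cannot vanish, since the one-bit outputs enter nonlinearly. Here I would use that the noises are independent across antennas: conditioned on $x$, the one-bit output $r_n=\mathrm{sgn}(h_nx+z_n)$ depends only on $z_n$ while $\epsilon_H$ depends only on $\{z_m\}_{m\in H}$, so $\mathbb{E}[r_n\epsilon_H^{*}\mid x]=\mathbb{E}[r_n\mid x]\,\mathbb{E}[\epsilon_H\mid x]^{*}$. A direct computation gives $\mathbb{E}[\epsilon_H\mid x]=(1-\kappa_{\mathrm{sub}})x$ — the conditional mean of the MMSE residual is a strictly positive multiple of $x$ — so that $v_n=(1-\kappa_{\mathrm{sub}})\,\mathbb{E}[r_nx^{*}]=(1-\kappa_{\mathrm{sub}})(\mathbf{R}_{\mathbf{r}x})_n$. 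By \eqref{equ:equ_18} this equals $(1-\kappa_{\mathrm{sub}})h_n\mathcal{E}_\mathrm{s}\sqrt{4/(\pi(|h_n|^2\mathcal{E}_\mathrm{s}+1))}$, which is nonzero whenever antenna $n$ sees a nonzero channel gain.

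Hence, as long as at least one antenna in $L$ has nonzero gain — a mild nondegeneracy condition that holds generically, e.g.\ whenever $\mathbf{h}$ has no zero entries — we obtain $\mathbf{v}\neq\mathbf{0}$ and therefore the strict inequality $I_{\mathrm{GMI}}(\mathbf{w}_{\mathrm{opt}},\bm{\delta}')>C(N,K,0)$. The only delicate points are the conditional-mean identity $\mathbb{E}[\epsilon_H\mid x]=(1-\kappa_{\mathrm{sub}})x$ and the realization that the nonlinearity of the sign function does not destroy the correlation with $x$; everything else is routine manipulation of the block inverse.
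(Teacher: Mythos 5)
Your proof is correct, and it is more complete than the paper's own. The paper's argument shares your first step: it exhibits the combiner $w_n=\delta_n'\cdot h_n$ (MRC restricted to the high-resolution antennas; your subsystem-LMMSE combiner attains the same value), verifies that its GMI equals $C(N,K,0)$, and then simply asserts ``since this choice of $\mathbf{w}$ is not optimal'' the strict inequality follows --- the strictness is claimed, not demonstrated. That is exactly the point where your proof does real work: the Schur-complement decomposition $\mathcal{E}_\mathrm{s}\,\kappa(\mathbf{w}_{\mathrm{opt}},\bm{\delta}')=\mathbf{R}_{Hx}^{H}\mathbf{R}_{HH}^{-1}\mathbf{R}_{Hx}+\mathbf{v}^{H}\mathbf{S}^{-1}\mathbf{v}$ isolates the excess over $C(N,K,0)$ as a quadratic form, and your conditional-independence computation $v_n=(1-\kappa_{\mathrm{sub}})(\mathbf{R}_{\mathbf{r}x})_n$ (which checks out, since $\mathbb{E}[\mathbf{r}_H\mid x]=\mathbf{h}_H x$ gives $\mathbb{E}[\epsilon_H\mid x]=(1-\kappa_{\mathrm{sub}})x$ with $\kappa_{\mathrm{sub}}<1$ real) shows that this excess is strictly positive precisely when some one-bit antenna sees a nonzero gain. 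Your approach buys two things the paper's does not: a rigorous proof of strictness, and the identification of the hidden nondegeneracy assumption --- if every antenna in $L$ has $h_n=0$ then $\mathbf{v}=\mathbf{0}$ and equality, not strict inequality, holds, so the corollary as stated in the paper is false in that degenerate case. The only cosmetic caveat is that positive definiteness of $\mathbf{R}_{\mathbf{rr}}$ (hence of $\mathbf{S}$), which you invoke in passing, deserves a sentence: conditioned on $x$ the entries of $\mathbf{r}_L$ are independent non-degenerate discrete variables and $\mathbf{z}_H$ has full-rank covariance, so no nontrivial linear combination of $\mathbf{r}$ is almost surely zero.
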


\begin{proof}
Provided that the high-resolution ADCs are switched according to $\bm{\delta}^{'}$, by specifying $w_n=\delta_n^{'}\cdot h_n$, $n=1,...,N$, it is straightforward to verify that $I_{\mathrm{GMI}}(\mathbf{w},\bm{\delta}^{'})=C(N,K,0)$. Since this choice of $\mathbf{w}$ is not optimal, we have $I_{\mathrm{GMI}}(\mathbf{w}_{\mathrm{opt}},\bm{\delta}^{'})>I_{\mathrm{GMI}}(\mathbf{w},\bm{\delta}^{'})$ and \eqref{equ:equ_60} follows.
\end{proof}

When $K=N$, i.e., all the $N$ pairs of ADCs are high-resolution, we have the following corollary of Proposition \ref{prop:prop_3}.

\begin{cor}
\label{cor:cor_1}
For the special case of $K=N$, the optimal linear combiner \eqref{equ:equ_21} reduces to a maximum ratio combiner (MRC). Thus in this case, the GMI coincides with the channel capacity of conventional architecure $C(N,N,0)$.
\end{cor}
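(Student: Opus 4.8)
The plan is to specialize the general formulas of Proposition~\ref{prop:prop_2} and Proposition~\ref{prop:prop_3} to the case $\delta_n=1$ (hence $\bar{\delta}_n=0$) for every $n$, and then to verify directly that the resulting GMI equals $C(N,N,0)=\log(1+\|\mathbf{h}\|^2\mathrm{SNR})$. First I would evaluate the second-order statistics. Setting all $\bar{\delta}_n=0$ removes the one-bit terms in \eqref{equ:equ_18} and \eqref{equ:equ_19}, leaving $(\mathbf{R}_{\mathbf{r}x})_n=h_n\mathcal{E}_\mathrm{s}$, together with diagonal entries $1+|h_n|^2\mathcal{E}_\mathrm{s}$ and off-diagonal entries $h_nh_m^*\mathcal{E}_\mathrm{s}$ of $\mathbf{R}_{\mathbf{rr}}$. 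In vector form this reads $\mathbf{R}_{\mathbf{r}x}=\mathcal{E}_\mathrm{s}\mathbf{h}$ and $\mathbf{R}_{\mathbf{rr}}=\mathbf{I}+\mathcal{E}_\mathrm{s}\mathbf{h}\mathbf{h}^{H}$, which is exactly what one expects, since with no quantization the observation $\mathbf{r}=\mathbf{h}x+\mathbf{z}$ is purely linear Gaussian.

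Next I would substitute these into the optimal combiner \eqref{equ:equ_21}. The key observation is that $\mathbf{h}$ is an eigenvector of $\mathbf{R}_{\mathbf{rr}}$: indeed $\mathbf{R}_{\mathbf{rr}}\mathbf{h}=(1+\mathcal{E}_\mathrm{s}\|\mathbf{h}\|^2)\mathbf{h}$, so that $\mathbf{R}_{\mathbf{rr}}^{-1}\mathbf{h}=(1+\mathcal{E}_\mathrm{s}\|\mathbf{h}\|^2)^{-1}\mathbf{h}$ (equivalently, one may invert the rank-one-plus-identity matrix $\mathbf{R}_{\mathbf{rr}}$ via the Sherman--Morrison formula). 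Hence $\mathbf{w}_{\mathrm{opt}}=\mathbf{R}_{\mathbf{rr}}^{-1}\mathcal{E}_\mathrm{s}\mathbf{h}=\frac{\mathcal{E}_\mathrm{s}}{1+\mathcal{E}_\mathrm{s}\|\mathbf{h}\|^2}\mathbf{h}$ is a positive scalar multiple of $\mathbf{h}$, i.e., an MRC. Since the GMI depends on $\mathbf{w}$ only through the scale-invariant ratio in \eqref{equ:equ_17}, the positive scalar is immaterial and the combiner is effectively $\mathbf{w}\propto\mathbf{h}$, establishing the first assertion of the corollary.

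Finally I would compute the GMI. Plugging the statistics into \eqref{equ:equ_22} gives $\kappa=\frac{1}{\mathcal{E}_\mathrm{s}}(\mathcal{E}_\mathrm{s}\mathbf{h})^{H}\mathbf{R}_{\mathbf{rr}}^{-1}(\mathcal{E}_\mathrm{s}\mathbf{h})=\frac{\mathcal{E}_\mathrm{s}\|\mathbf{h}\|^2}{1+\mathcal{E}_\mathrm{s}\|\mathbf{h}\|^2}$, whence $1-\kappa=(1+\mathcal{E}_\mathrm{s}\|\mathbf{h}\|^2)^{-1}$ and $\kappa/(1-\kappa)=\mathcal{E}_\mathrm{s}\|\mathbf{h}\|^2$. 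Substituting into \eqref{equ:equ_9} and recalling $\mathcal{E}_\mathrm{s}=\mathrm{SNR}$ yields $I_{\mathrm{GMI}}(\mathbf{w}_{\mathrm{opt}},\bm{\delta})=\log(1+\|\mathbf{h}\|^2\mathrm{SNR})=C(N,N,0)$, as claimed. There is no genuine obstacle here — the argument is a direct specialization — and the only mildly delicate step is recognizing the eigenstructure of $\mathbf{R}_{\mathbf{rr}}$ that collapses the MMSE combiner to MRC. The conceptual reason the bound is tight is standard: for the linear Gaussian channel $\mathbf{r}=\mathbf{h}x+\mathbf{z}$, MMSE combining is information-lossless and reduces the vector channel to an equivalent scalar Gaussian channel, so that a Gaussian codebook with nearest-neighbor decoding already achieves capacity, leaving no gap between the GMI and $C(N,N,0)$.
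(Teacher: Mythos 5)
Your proposal is correct and follows essentially the same route as the paper's own proof: specialize $\mathbf{R}_{\mathbf{r}x}=\mathcal{E}_\mathrm{s}\mathbf{h}$ and $\mathbf{R}_{\mathbf{rr}}=\mathbf{I}+\mathcal{E}_\mathrm{s}\mathbf{h}\mathbf{h}^{H}$, invert to obtain $\mathbf{w}_{\mathrm{opt}}=\frac{\mathcal{E}_\mathrm{s}}{1+\mathcal{E}_\mathrm{s}\|\mathbf{h}\|^2}\mathbf{h}$ (MRC), and verify that the effective SNR $\kappa/(1-\kappa)$ equals $\|\mathbf{h}\|^2\mathcal{E}_\mathrm{s}$. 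The only differences are expository: you make the eigenvector/Sherman--Morrison step and the intermediate value $\kappa=\frac{\mathcal{E}_\mathrm{s}\|\mathbf{h}\|^2}{1+\mathcal{E}_\mathrm{s}\|\mathbf{h}\|^2}$ explicit, which the paper leaves as "straightforward to verify."
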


\begin{proof}
For the special case of $K=N$, i.e., $\bm{\delta}=\mathbf{1}$, \eqref{equ:equ_18} reduces to $\mathbf{R}_{\mathbf{r}x}=\mathcal{E}_\mathrm{s}\mathbf{h}$, and \eqref{equ:equ_19} reduces to
$\mathbf{R}_{\mathbf{rr}}=\mathbf{I}+\mathcal{E}_\mathrm{s}\mathbf{h}\mathbf{h}^H$. Then, the optimal combiner \eqref{equ:equ_21} turns out to be an MRC, since
\begin{equation}
\mathbf{w}_{\mathrm{opt}}=\mathbf{R}_{\mathbf{rr}}^{-1}\mathbf{R}_{\mathbf{r}x}
=\frac{\mathcal{E}_\mathrm{s}}{1+\mathcal{E}_\mathrm{s}\|\mathbf{h}\|^2}\mathbf{h}.
\label{equ:equ_26}
\end{equation}
Consequently, it is straightforward to verify that the effective SNR in \eqref{equ:equ_9} is
\begin{equation}
\frac{\kappa(\mathbf{w}_{\mathrm{opt}},\bm{\delta})}{1-\kappa(\mathbf{w}_{\mathrm{opt}},\bm{\delta})}=\|\mathbf{h}\|^2\mathcal{E}_\mathrm{s},
\label{equ:equ_27}
\end{equation}
thus completing the proof.
\end{proof}

\subsection{Asymptotic Behaviors of $I_{\mathrm{GMI}}(\mathbf{w}_{\mathrm{opt}},\bm{\delta})$}
\label{subsect:asymptotic behaviors}
In the previous subsection, the optimal linear combiner for the mixed-ADC architecture is derived. Thus we are ready to examine its asymptotic performance in both low and high SNR regimes. Letting SNR tend to zero, we have the following corollary.

\begin{cor}
\label{cor:cor_2}
As $\mathcal{E}_\mathrm{s}\rightarrow 0$, for given $\bm{\delta}$ we have
\begin{equation}
I_{\mathrm{GMI}}(\mathbf{w}_{\mathrm{opt}},\bm{\delta})
=\sum_{n=1}^{N}\left(\delta_n+\bar{\delta}_n\cdot\frac{2}{\pi}\right)|h_n|^2\mathcal{E}_\mathrm{s}+o(\mathcal{E}_\mathrm{s}).
\label{equ:equ_28}
\end{equation}
\end{cor}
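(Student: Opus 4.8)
The plan is to work directly from the closed form $\kappa(\mathbf{w}_{\mathrm{opt}},\bm{\delta})=\frac{1}{\mathcal{E}_\mathrm{s}}\mathbf{R}_{\mathbf{r}x}^{H}\mathbf{R}_{\mathbf{rr}}^{-1}\mathbf{R}_{\mathbf{r}x}$ of \eqref{equ:equ_22} and to expand it to first order in $\mathcal{E}_\mathrm{s}$. I would first note that once $\kappa\to 0$ as $\mathcal{E}_\mathrm{s}\to 0$, \eqref{equ:equ_9} can be written as $I_{\mathrm{GMI}}=-\log(1-\kappa)=\kappa+O(\kappa^2)$; after I establish $\kappa=O(\mathcal{E}_\mathrm{s})$, the remainder $O(\kappa^2)=o(\mathcal{E}_\mathrm{s})$ and the claim reduces to computing the linear-in-$\mathcal{E}_\mathrm{s}$ coefficient of $\kappa$.

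Next I would expand the two building blocks. From \eqref{equ:equ_18}, replacing $\sqrt{4/[\pi(|h_n|^2\mathcal{E}_\mathrm{s}+1)]}$ by its limit $2/\sqrt{\pi}$ gives $(\mathbf{R}_{\mathbf{r}x})_n=h_n\mathcal{E}_\mathrm{s}(\delta_n+\bar{\delta}_n\,2/\sqrt{\pi})+o(\mathcal{E}_\mathrm{s})$, so $\mathbf{R}_{\mathbf{r}x}=O(\mathcal{E}_\mathrm{s})$. From \eqref{equ:equ_19}, the diagonal entries tend to $1+\bar{\delta}_n$, while every off-diagonal entry vanishes as $\mathcal{E}_\mathrm{s}\to 0$: the $\delta_n\delta_m$, $\delta_n\bar{\delta}_m$ and $\bar{\delta}_n\delta_m$ terms all carry an explicit factor $\mathcal{E}_\mathrm{s}$, and in the $\bar{\delta}_n\bar{\delta}_m$ term the arcsine arguments are $O(\mathcal{E}_\mathrm{s})$, so $\mathrm{arcsin}(u)=u+O(u^3)$ makes that term $O(\mathcal{E}_\mathrm{s})$ as well. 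Hence $\mathbf{R}_{\mathbf{rr}}=\mathbf{D}+O(\mathcal{E}_\mathrm{s})$ with the constant diagonal limit $\mathbf{D}=\mathrm{diag}(1+\bar{\delta}_1,\dots,1+\bar{\delta}_N)$, and consequently $\mathbf{R}_{\mathbf{rr}}^{-1}=\mathbf{D}^{-1}+O(\mathcal{E}_\mathrm{s})$.

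Finally I would assemble the pieces. Because $\mathbf{R}_{\mathbf{r}x}=O(\mathcal{E}_\mathrm{s})$, replacing $\mathbf{R}_{\mathbf{rr}}^{-1}$ by $\mathbf{D}^{-1}$ perturbs the quadratic form $\mathbf{R}_{\mathbf{r}x}^{H}\mathbf{R}_{\mathbf{rr}}^{-1}\mathbf{R}_{\mathbf{r}x}$ only by $O(\mathcal{E}_\mathrm{s})\cdot O(\mathcal{E}_\mathrm{s})\cdot O(\mathcal{E}_\mathrm{s})=O(\mathcal{E}_\mathrm{s}^3)$, which is $o(\mathcal{E}_\mathrm{s})$ after the prefactor $1/\mathcal{E}_\mathrm{s}$. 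Thus $\kappa=\frac{1}{\mathcal{E}_\mathrm{s}}\sum_{n=1}^{N}|(\mathbf{R}_{\mathbf{r}x})_n|^2/(1+\bar{\delta}_n)+o(\mathcal{E}_\mathrm{s})$. Substituting the leading term of $(\mathbf{R}_{\mathbf{r}x})_n$, the per-antenna factor $(\delta_n+\bar{\delta}_n\,2/\sqrt{\pi})^2/(1+\bar{\delta}_n)$ collapses to $\delta_n+\bar{\delta}_n\cdot\frac{2}{\pi}$ (case $\delta_n=1$ gives $1/1=1$; case $\delta_n=0$ gives $(4/\pi)/2=2/\pi$, and the indicator identities $\delta_n\bar{\delta}_n=0$, $\delta_n^2=\delta_n$ confirm the same), yielding $\kappa=\sum_n(\delta_n+\bar{\delta}_n\frac{2}{\pi})|h_n|^2\mathcal{E}_\mathrm{s}+o(\mathcal{E}_\mathrm{s})$ and hence \eqref{equ:equ_28}.

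The main obstacle is purely the order bookkeeping: one must confirm that the off-diagonal entries of $\mathbf{R}_{\mathbf{rr}}$, together with the subleading corrections to $\mathbf{R}_{\mathbf{r}x}$ and to $\mathbf{R}_{\mathbf{rr}}^{-1}$, all feed into $\kappa$ only at order $o(\mathcal{E}_\mathrm{s})$. The conceptual point that makes this work is that at vanishing SNR the quantized outputs \emph{decorrelate}, so $\mathbf{R}_{\mathbf{rr}}$ becomes diagonal; the combiner then acts entry-wise and the one-bit branches contribute the clean attenuation factor $2/\pi$ familiar from scalar one-bit quantization.
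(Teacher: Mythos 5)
Your proposal is correct and follows essentially the same route as the paper's Appendix-B proof: both expand $\kappa(\mathbf{w}_{\mathrm{opt}},\bm{\delta})=\frac{1}{\mathcal{E}_\mathrm{s}}\mathbf{R}_{\mathbf{r}x}^{H}\mathbf{R}_{\mathbf{rr}}^{-1}\mathbf{R}_{\mathbf{r}x}$ around the limits $\frac{1}{\mathcal{E}_\mathrm{s}}\mathbf{R}_{\mathbf{r}x}\rightarrow\mathbf{R}_{\mathbf{r}x}^{0}$ and $\mathbf{R}_{\mathbf{rr}}\rightarrow\mathrm{diag}(1+\bar{\delta}_1,\dots,1+\bar{\delta}_N)$, perform the same indicator algebra collapsing $(\delta_n+\bar{\delta}_n\,2/\sqrt{\pi})^2/(1+\bar{\delta}_n)$ to $\delta_n+\bar{\delta}_n\cdot\frac{2}{\pi}$, and finish with the same logarithm expansion. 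The only cosmetic difference is that you justify the limit interchange by explicit $O(\mathcal{E}_\mathrm{s})$ bookkeeping, whereas the paper invokes the algebraic limit theorem together with continuity of the matrix inverse.
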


See Appendix-B for its proof. Comparing with $C(N,N,0)$ in the low SNR regime, i.e., $C(N,N,0)$ $=\sum_{n=1}^{N} |h_n|^2\mathcal{E}_\mathrm{s}+o(\mathcal{E}_\mathrm{s})$, we conclude that part of the achievable rate is degraded by a factor of $\frac{2}{\pi}$ due to one-bit quantization. The expression \eqref{equ:equ_28} also suggests that, in the low SNR regime, high-resolution ADCs should be switched to the antennas with the strongest $K$ link magnitude gains.

For the high SNR case, the subsequent corollary collects our results.

\begin{cor}
\label{cor:cor_3}
As $\mathcal{E}_\mathrm{s}\rightarrow\infty$, for given $\bm{\delta}$ we have the effective SNR in \eqref{equ:equ_9} as
\begin{equation}
\frac{\kappa(\mathbf{w}_{\mathrm{opt}},\bm{\delta})}{1-\kappa(\mathbf{w}_{\mathrm{opt}},\bm{\delta})}=
\|\mathbf{p}\|^2\mathcal{E}_\mathrm{s}
+\frac{[4+O(1/\mathcal{E}_\mathrm{s})]\mathbf{q}^{H}\mathbf{B}^{-1}\mathbf{q}}
{\pi-[4+O(1/\mathcal{E}_\mathrm{s})]\mathbf{q}^{H}\mathbf{B}^{-1}\mathbf{q}},
\label{equ:equ_29}
\end{equation}
with $\mathbf{p}$, $\mathbf{q}$, and $\mathbf{B}$ given in \eqref{equ:app_21} and \eqref{equ:app_25}. As a result, $I_{\mathrm{GMI}}(\mathbf{w}_{\mathrm{opt}},\bm{\delta})$ scales as
\begin{equation}
I_{\mathrm{GMI}}(\mathbf{w}_{\mathrm{opt}},\bm{\delta})=2\log\|\mathbf{p}\|+\log(\mathcal{E}_\mathrm{s})+O(1/\mathcal{E}_\mathrm{s}).
\label{equ:equ_30}
\end{equation}
Besides, for the special case of pure one-bit quantization, i.e., $K=0$, we get
\begin{equation}
\lim_{\mathcal{E}_\mathrm{s}\rightarrow\infty}I_{\mathrm{GMI}}(\mathbf{w}_{\mathrm{opt}},\bm{\delta})
=\log\left(1+\frac{4\mathbf{q}^{H}\mathbf{B}^{-1}\mathbf{q}}{\pi-4\mathbf{q}^{H}\mathbf{B}^{-1}\mathbf{q}}\right),
\label{equ:equ_31}
\end{equation}
where $\mathbf{B}$ is also given by \eqref{equ:app_25} suppressing all the $O(1/\mathcal{E}_\mathrm{s})$ terms.
\end{cor}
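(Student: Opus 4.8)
The plan is to start from the closed form $\kappa(\mathbf{w}_{\mathrm{opt}},\bm{\delta})=\frac{1}{\mathcal{E}_\mathrm{s}}\mathbf{R}_{\mathbf{r}x}^{H}\mathbf{R}_{\mathbf{rr}}^{-1}\mathbf{R}_{\mathbf{r}x}$ of Proposition \ref{prop:prop_3}, because the effective SNR in \eqref{equ:equ_9} is $\kappa/(1-\kappa)$ and the GMI is $\log\frac{1}{1-\kappa}$, so the whole corollary reduces to the large-$\mathcal{E}_\mathrm{s}$ behaviour of this single quadratic form. First I would relabel the antennas so that the $K$ high-resolution ones come first, inducing a block partition of $\mathbf{R}_{\mathbf{rr}}$ into a high-resolution block $\mathbf{A}$, a cross block $\mathbf{C}$, and a one-bit block $\mathbf{D}$, together with $\mathbf{R}_{\mathbf{r}x}=[\mathbf{r}_1^{T},\mathbf{r}_2^{T}]^{T}$. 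Reading off \eqref{equ:equ_18}--\eqref{equ:equ_19}, one has the exact expressions $\mathbf{A}=\mathbf{I}+\mathcal{E}_\mathrm{s}\mathbf{p}\mathbf{p}^{H}$ and $\mathbf{r}_1=\mathcal{E}_\mathrm{s}\mathbf{p}$, where $\mathbf{p}$ is the sub-vector of $\mathbf{h}$ on the high-resolution antennas; $\mathbf{D}$ is the one-bit covariance with entries of order one; and $\mathbf{r}_2$ collects the one-bit cross-correlations, of order $\sqrt{\mathcal{E}_\mathrm{s}}$.

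The crux is an exact algebraic simplification that decouples the two kinds of ADC before any limit is taken. Inspection of \eqref{equ:equ_18}--\eqref{equ:equ_19} reveals that the cross block is \emph{exactly} rank one, $\mathbf{C}=\mathbf{p}\mathbf{r}_2^{H}$, whence $\mathbf{C}^{H}\mathbf{p}=\|\mathbf{p}\|^{2}\mathbf{r}_2$. Evaluating $\mathbf{R}_{\mathbf{r}x}^{H}\mathbf{R}_{\mathbf{rr}}^{-1}\mathbf{R}_{\mathbf{r}x}$ through the Schur complement of $\mathbf{A}$ and using Sherman--Morrison, $\mathbf{A}^{-1}\mathbf{p}=\mathbf{p}/(1+\mathcal{E}_\mathrm{s}\|\mathbf{p}\|^{2})$, the off-diagonal correction $\mathbf{r}_2-\mathbf{C}^{H}\mathbf{A}^{-1}\mathbf{r}_1$ collapses to $\mathbf{r}_2/(1+\mathcal{E}_\mathrm{s}\|\mathbf{p}\|^{2})$ and the Schur complement becomes $\mathbf{S}=\mathbf{D}-\frac{\|\mathbf{p}\|^{2}}{1+\mathcal{E}_\mathrm{s}\|\mathbf{p}\|^{2}}\mathbf{r}_2\mathbf{r}_2^{H}$. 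A second Sherman--Morrison reduction of $\mathbf{S}^{-1}$ then telescopes everything into the exact identity $\frac{\kappa}{1-\kappa}=\|\mathbf{p}\|^{2}\mathcal{E}_\mathrm{s}+\frac{t}{\mathcal{E}_\mathrm{s}-t}$, with $t\triangleq\mathbf{r}_2^{H}\mathbf{D}^{-1}\mathbf{r}_2$. This already displays the additive split asserted in \eqref{equ:equ_29}: a high-resolution contribution $\|\mathbf{p}\|^{2}\mathcal{E}_\mathrm{s}$ growing linearly in SNR, plus a one-bit contribution governed entirely by the scalar $t$.

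It then remains to expand $t$ as $\mathcal{E}_\mathrm{s}\to\infty$. From \eqref{equ:equ_18}, $(\mathbf{r}_2)_n=\frac{2h_n\mathcal{E}_\mathrm{s}}{\sqrt{\pi}\sqrt{|h_n|^{2}\mathcal{E}_\mathrm{s}+1}}=\frac{2\sqrt{\mathcal{E}_\mathrm{s}}}{\sqrt{\pi}}\frac{h_n}{|h_n|}\bigl(1+O(1/\mathcal{E}_\mathrm{s})\bigr)$, so $\mathbf{r}_2=\frac{2\sqrt{\mathcal{E}_\mathrm{s}}}{\sqrt{\pi}}\bigl(\mathbf{q}+O(1/\mathcal{E}_\mathrm{s})\bigr)$ with $(\mathbf{q})_n=h_n/|h_n|$; and from \eqref{equ:equ_19}, sending the arcsin arguments to their limits gives a positive-definite matrix $\mathbf{B}$ with $\mathbf{D}=\mathbf{B}+O(1/\mathcal{E}_\mathrm{s})$, hence $\mathbf{D}^{-1}=\mathbf{B}^{-1}+O(1/\mathcal{E}_\mathrm{s})$. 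Therefore $t=\frac{\mathcal{E}_\mathrm{s}[4+O(1/\mathcal{E}_\mathrm{s})]}{\pi}\mathbf{q}^{H}\mathbf{B}^{-1}\mathbf{q}$, and writing $\frac{t}{\mathcal{E}_\mathrm{s}-t}=\frac{t/\mathcal{E}_\mathrm{s}}{1-t/\mathcal{E}_\mathrm{s}}$ yields exactly \eqref{equ:equ_29}. Taking logarithms and factoring out the dominant $\|\mathbf{p}\|^{2}\mathcal{E}_\mathrm{s}$ (legitimate once $K\geq1$, so $\|\mathbf{p}\|^{2}>0$) gives $I_{\mathrm{GMI}}=\log(\|\mathbf{p}\|^{2}\mathcal{E}_\mathrm{s})+\log(1+O(1/\mathcal{E}_\mathrm{s}))=2\log\|\mathbf{p}\|+\log\mathcal{E}_\mathrm{s}+O(1/\mathcal{E}_\mathrm{s})$, which is \eqref{equ:equ_30}. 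For $K=0$ the high-resolution block is absent, $\|\mathbf{p}\|^{2}=0$, and the exact identity degenerates to $\frac{\kappa}{1-\kappa}=\frac{t}{\mathcal{E}_\mathrm{s}-t}$; letting $\mathcal{E}_\mathrm{s}\to\infty$ replaces $t/\mathcal{E}_\mathrm{s}$ by $\frac{4}{\pi}\mathbf{q}^{H}\mathbf{B}^{-1}\mathbf{q}$ and produces the saturating limit \eqref{equ:equ_31}.

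The hard part will not be the asymptotics, which are routine Taylor expansions of the square-root and arcsin entries, but rather discovering and justifying the exact rank-one identity $\mathbf{C}=\mathbf{p}\mathbf{r}_2^{H}$ (equivalently $\mathbf{C}^{H}\mathbf{p}=\|\mathbf{p}\|^{2}\mathbf{r}_2$). A naive order count suggests both that $\mathbf{C}^{H}\mathbf{A}^{-1}\mathbf{C}$ grows like $\mathcal{E}_\mathrm{s}$ and that the cross term $\mathbf{r}_2-\mathbf{C}^{H}\mathbf{A}^{-1}\mathbf{r}_1$ cancels at leading order, so a direct block-inversion expansion is plagued by delicate near-cancellations and tends to lose the bounded one-bit term altogether. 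The resolution is to notice that these near-cancellations are in fact exact: the high-resolution and one-bit outputs are correlated only through the common data symbol $x$ (the high-resolution additive noise being independent of every one-bit output), which forces the cross block to inherit the channel direction $\mathbf{p}$ and hence to be rank one. I would therefore spend most of the effort verifying this identity carefully from \eqref{equ:equ_18}--\eqref{equ:equ_19}, after which the two Sherman--Morrison reductions telescope cleanly and isolate $t$; I would also make sure to bound the $O(1/\mathcal{E}_\mathrm{s})$ remainders uniformly so that they survive passage through the matrix inverses that define $t$.
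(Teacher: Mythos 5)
Your proof is correct, and its pivotal claim checks out: from \eqref{equ:equ_18}--\eqref{equ:equ_19}, the high-resolution/one-bit cross block of $\mathbf{R}_{\mathbf{rr}}$ is \emph{exactly} $\mathbf{p}\mathbf{r}_2^{H}$, since each entry $h_n h_m^{*}\mathcal{E}_\mathrm{s}\sqrt{4/(\pi(|h_m|^2\mathcal{E}_\mathrm{s}+1))}$ is $p_n$ times the conjugate of the corresponding one-bit entry of $\mathbf{R}_{\mathbf{r}x}$, for precisely the probabilistic reason you give (the additive noise at a high-resolution antenna is independent of every one-bit output, so the cross-correlation is carried entirely by $x$ along the direction $\mathbf{p}$); I also verified algebraically that your two Sherman--Morrison reductions yield the exact identity $\kappa/(1-\kappa)=\|\mathbf{p}\|^2\mathcal{E}_\mathrm{s}+t/(\mathcal{E}_\mathrm{s}-t)$ with $t=\mathbf{r}_2^{H}\mathbf{D}^{-1}\mathbf{r}_2$. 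Your route shares the paper's skeleton --- antenna reordering, the same block partition, Sherman--Morrison plus partitioned inversion --- but differs genuinely in execution and order of operations. The paper (Appendix-C) substitutes asymptotic forms of all blocks up front, writing $\mathbf{U}=(\sqrt{4\mathcal{E}_\mathrm{s}/\pi}+O(1/\sqrt{\mathcal{E}_\mathrm{s}}))\mathbf{p}\mathbf{q}^{H}$ and $\mathbf{B}$ with $O(1/\mathcal{E}_\mathrm{s})$ corrections as in \eqref{equ:app_25}, then explicitly computes all three blocks $\mathbf{C},\mathbf{V},\mathbf{D}$ of $\underline{\mathbf{R}}_{\mathbf{rr}}^{-1}$ in \eqref{equ:app_26} and assembles $\kappa$ from three contributions in \eqref{equ:app_27}, carrying $O(\cdot)$ remainders through every matrix inversion. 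You instead keep everything exact until the final step: the Schur-complement form of the quadratic form, combined with the exact rank-one cross block, collapses the problem to a single scalar $t$, and the asymptotic expansion (of the square roots, the $\arcsin$ entries, and $\mathbf{D}^{-1}=\mathbf{B}^{-1}+O(1/\mathcal{E}_\mathrm{s})$) is applied only to that scalar, after which \eqref{equ:equ_29}--\eqref{equ:equ_31} follow exactly as you state. What your approach buys is twofold: it sidesteps the delicate bookkeeping of $O(\cdot)$ terms through matrix inverses (the very hazard you flag, which the paper's route must implicitly manage), and it reveals that the additive decoupling of the high-resolution term $\|\mathbf{p}\|^2\mathcal{E}_\mathrm{s}$ and the one-bit term holds at \emph{every} SNR, not merely asymptotically --- a slightly stronger statement than \eqref{equ:equ_29}. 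What the paper's route buys is the explicit inverse blocks \eqref{equ:app_26}, which could be reused for other purposes (e.g., analyzing the combiner weights themselves), but for proving this corollary your argument is the more economical and more rigorous one. One shared caveat, not a gap relative to the paper: both arguments implicitly require $t<\mathcal{E}_\mathrm{s}$, i.e., $\pi-[4+O(1/\mathcal{E}_\mathrm{s})]\mathbf{q}^{H}\mathbf{B}^{-1}\mathbf{q}>0$, which is guaranteed because $\kappa<1$ by Cauchy--Schwartz.
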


The proof is given in Appendix-C. From \eqref{equ:equ_29} we notice that the contributions of high-resolution ADCs and one-bit ADCs in the high SNR regime are separate, as the first term corresponding to high-resolution ADCs increases linearly with $\mathcal{E}_\mathrm{s}$, whereas the second term coming from one-bit ADCs tends to a positive constant independent of $\mathcal{E}_\mathrm{s}$. Comparing with Corollary \ref{cor:cor_2}, we infer that one-bit ADCs are getting less beneficial as the SNR grows large, as will be validated by numerical study in Section \ref{sect:numerical}. In addition to these, \eqref{equ:equ_30} suggests for high SNR that, high-resolution ADCs should also be switched to the antennas with the strongest $K$ link magnitude gains.

For the special case of pure one-bit quantization, \eqref{equ:equ_31} indicates that the corresponding GMI approaches a finite limit, and thus the rate loss due to one-bit quantization is substantial. This is much different from the conclusion we get in the low SNR regime, where one-bit quantization degrades the achievable rate only by a factor of $\frac{2}{\pi}$. The reason underlying this phenomenon is that the amplitude of the transmit signal cannot be recovered at the receiver when $\mathcal{E}_\mathrm{s}$ is sufficiently large, and thus further enhancing the SNR does not help in improving $I_{\mathrm{GMI}}(\mathbf{w}_{\mathrm{opt}},\bm{\delta})$ (see also \cite{jacobsson2015one} \cite{knudson14arxiv} for similar phenomena).

\subsection{Performance Improvement via Dithering}
\label{subsect:dithering}
In the previous part of this section, we derived the optimal linear combiner and explored the asymptotic behaviors of $I_{\mathrm{GMI}}(\mathbf{w}_{\mathrm{opt}},\bm{\delta})$ in both low and high SNR regimes. As will be revealed by the corresponding numerical study in Section \ref{sect:numerical}, increasing $\mathrm{SNR}$ may indeed degrade the GMI when the SNR exceeds a certain threshold that depends on a collection of system parameters. In this situation, Gaussian noise, as a special type of dither, can expand the effective bit-width of one-bit ADCs and thus helps reduce the estimation bias \cite{gustavsson2014impact} \cite{dabeer2006signal}. Uniform dithering is known to be asymptotically optimal under certain problem setups \cite{dabeer2006signal}, but its non-asymptotic analysis is not amenable to analysis. Therefore, we adopt Gaussian dithering and investigate its impact on the system performance.

We consider a dithering strategy, which injects additional Gaussian noise into the antenna output before quantization when the corresponding pair of ADCs are one-bit and the receive SNR of the antenna, $|h_n|^2\mathcal{E}_\mathrm{s}$, exceeds a prescribed threshold $\mathcal{T}$. The power of the injected Gaussian noise is adjusted so that the resulting receive SNR of this antenna after dithering is pulled back to $\mathcal{T}$. Accordingly, we rewrite \eqref{equ:equ_2} as
\begin{equation}
r_n=
\begin{cases}
h_n x+z_n,\ &\mathrm{if}\ \delta_n=1,\\
\mathrm{sgn}(h_n x+z_n),\ &\mathrm{if}\ \delta_n=0,\ |h_n|^2\mathcal{E}_\mathrm{s}\leq\mathcal{T},\\
\mathrm{sgn}(h_n x+z_n+z_n^{\mathrm{d}}),\ &\mathrm{if}\ \delta_n=0,\ |h_n|^2\mathcal{E}_\mathrm{s}>\mathcal{T},
\end{cases}
\label{equ:equ_32}
\end{equation}
where the Gaussian dither $z_n^{\mathrm{d}}\sim\mathcal{CN}(0,|h_n|^2\mathcal{E}_\mathrm{s}/\mathcal{T}-1)$ is independent of $z_n$ so that $z_n+z_n^{\mathrm{d}}\sim\mathcal{CN}(0,|h_n|^2\mathcal{E}_\mathrm{s}/\mathcal{T})$. Since high SNR is always favorable for high-resolution ADC, we do not perform dithering for antennas with high-resolution ADCs.

The system architecture and optimal linear combiner developed in Section \ref{sect:GMI and combiner} still apply directly, except that we need to make some modifications about $\mathbf{R}_{\mathbf{r}x}$ in \eqref{equ:equ_18} and $\mathbf{R}_{\mathbf{rr}}$ in \eqref{equ:equ_19}: for any $n\in\{1,2,...,N\}$, whenever $\delta_n=0$ and $|h_n|^2\mathcal{E}_\mathrm{s}>\mathcal{T}$, we make the following substitution,
\begin{equation}
|h_n|^2\mathcal{E}_\mathrm{s}+1\longrightarrow|h_n|^2\mathcal{E}_\mathrm{s}(1+1/\mathcal{T}),
\label{equ:equ_33}
\end{equation}
in \eqref{equ:equ_18} and \eqref{equ:equ_19}. The optimal threshold $\mathcal{T}_{\mathrm{opt}}$ depends on $K$, $N$, and $\mathrm{SNR}$. For the situation with relatively small $K$, the dependence of $\mathcal{T}_{\mathrm{opt}}$ on $K$ is actually negligible. Nevertheless, the analytical optimization of $\mathcal{T}$ is still difficult, and thus we perform a numerical search. To be specific, for any given $\mathrm{SNR}$ and $N$, we find the optimal threshold $\mathcal{T}_{\mathrm{opt}}$ for $K=0$ through a Monte Carlo simulation, and then use $\mathcal{T}_{\mathrm{opt}}$ to evaluate the performance gain with $K \geq 1$ as well. Numerical results will be presented in Section \ref{sect:numerical}.

\section{Ergodic Fading Channels}
\label{sect:ergodic fading channel}
Although our analysis thus far has been for the fixed channel scenario, the analytical framework developed can be extended to the the randomly varying channel scenario. We assume that the channel fading process $\{\mathbf{h}^{l}\}$ obeys the block fading channel model among coherence intervals. We start with the perfect CSI situation and then investigate the impact of channel estimation error on performance.

\subsection{Perfect CSI}
Since the channel vector $\mathbf{h}$ varies over time now, $\mathbf{w}$ and $\bm{\delta}$ in this situation shall be designed based on the instantaneous channel realization. In this situation, the GMI becomes\footnote{Here for simplicity we consider a fixed value of $a$ in the nearest neighbor decoding metric. Allowing $a$ to vary based on $\mathbf{h}^l$ may result in some performance improvement especially when $N$ is not too large.}
\begin{align}
I_{\mathrm{GMI}}=&\sup_{a\in\mathbb{C},\theta<0}
\Bigg(\theta\mathbb{E}_{x,\mathbf{z},\mathbf{h}}[|f(x,\mathbf{h},\mathbf{z})-ax|^2]-\nonumber\\
&\frac{\theta\mathbb{E}_{x,\mathbf{z},\mathbf{h}}[|f(x,\mathbf{h},\mathbf{z})|^2]}{1-\theta|a|^2\mathcal{E}_\mathrm{s}}
\!+\!\log(1-\theta|a|^2\mathcal{E}_\mathrm{s})\Bigg).
\label{equ:equ_50}
\end{align}
Notice that it shares the same nominal form as \eqref{equ:equ_49} except that the expectation here is over $x$, $\mathbf{z}$, and $\mathbf{h}$. Recognizing the difficulty of this optimization problem, we turn to evaluate the lower and upper bounds of $I_{\mathrm{GMI}}$, and arrive at the following proposition. Numerical results will be given in Section \ref{sect:numerical} to verify the tightness of the lower and upper bounds.

\begin{prop}
For the ergodic fading channel scenario, lower and upper bounds of $I_{\mathrm{GMI}}$ are given by
\begin{eqnarray}
I_{\mathrm{GMI}}^{\mathrm{lower}}&=&\log
\left(1+\frac{\mathbb{E}_{\mathbf{h}}[\kappa(\mathbf{w}_{\mathrm{opt}},\bm{\delta})]}{1-\mathbb{E}_{\mathbf{h}}[\kappa(\mathbf{w}_{\mathrm{opt}},\bm{\delta})]}\right),\\
I_{\mathrm{GMI}}^{\mathrm{upper}}&=&\mathbb{E}_{\mathbf{h}}
\left[\log\left(1+\frac{\kappa(\mathbf{w}_{\mathrm{opt}},\bm{\delta})}{1-\kappa(\mathbf{w}_{\mathrm{opt}},\bm{\delta})}\right)\right],\ \
\label{equ:equ_52}
\end{eqnarray}
respectively, where $\kappa(\mathbf{w}_{\mathrm{opt}},\bm{\delta})$ is given by \eqref{equ:equ_22}.
\label{prop:prop_6}
\end{prop}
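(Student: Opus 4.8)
The plan is to reduce both bounds to the fixed-channel result of Proposition \ref{prop:prop_1} by exploiting the moment structure of the per-channel MMSE combiner. First I would record the key identity underlying everything. For each realization $\mathbf{h}$, the processed output $f(x,\mathbf{h},\mathbf{z})=\mathbf{w}_{\mathrm{opt}}^{H}\mathbf{r}$ with $\mathbf{w}_{\mathrm{opt}}=\mathbf{R}_{\mathbf{rr}}^{-1}\mathbf{R}_{\mathbf{r}x}$ is, by Proposition \ref{prop:prop_3}, exactly the linear MMSE estimate of $x$ from $\mathbf{r}$. By the orthogonality principle (equivalently, reading off \eqref{equ:equ_22}), the cross-correlation coincides with the output power:
\[
\mathbb{E}_{x,\mathbf{z}}[f^{*}(x,\mathbf{h},\mathbf{z})\,x]=\mathbb{E}_{x,\mathbf{z}}[|f(x,\mathbf{h},\mathbf{z})|^{2}]=\mathcal{E}_\mathrm{s}\,\kappa(\mathbf{w}_{\mathrm{opt}},\bm{\delta}).
\]
Averaging over $\mathbf{h}$ and writing $\bar{\kappa}\triangleq\mathbb{E}_{\mathbf{h}}[\kappa(\mathbf{w}_{\mathrm{opt}},\bm{\delta})]$ then gives $\mathbb{E}_{x,\mathbf{z},\mathbf{h}}[f^{*}x]=\mathbb{E}_{x,\mathbf{z},\mathbf{h}}[|f|^{2}]=\mathcal{E}_\mathrm{s}\bar{\kappa}$, both real and positive.

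For the lower bound I would substitute these averaged moments into \eqref{equ:equ_50}. Expanding $\mathbb{E}[|f-ax|^{2}]=\mathcal{E}_\mathrm{s}\bar{\kappa}(1-2\mathrm{Re}(a))+|a|^{2}\mathcal{E}_\mathrm{s}$ and noting that, at fixed modulus $|a|$, the objective increases with $\mathrm{Re}(a)$ (since $\theta<0$ and $\bar\kappa>0$), the optimal $a$ is real and positive, exactly as in the fixed-channel case. The remaining problem in $(a,\theta)$ is then \emph{identical} to the one solved in Proposition \ref{prop:prop_1} (equivalently \cite[App. C]{zhang2012general}) with $\kappa$ replaced by $\bar{\kappa}$, so the supremum equals $\log\!\big(1+\bar{\kappa}/(1-\bar{\kappa})\big)$. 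This is a genuinely achievable rate under a single fixed $a$, hence a lower bound on $I_{\mathrm{GMI}}$, giving $I_{\mathrm{GMI}}^{\mathrm{lower}}$.

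For the upper bound I would exploit that, for any fixed $(a,\theta)$, the two nonlinear terms $\theta\mathbb{E}[|f|^{2}]/(1-\theta|a|^{2}\mathcal{E}_\mathrm{s})$ and $\log(1-\theta|a|^{2}\mathcal{E}_\mathrm{s})$ do not depend on $\mathbf{h}$, so the objective in \eqref{equ:equ_50} equals the $\mathbf{h}$-average of the per-realization objective $g_{\mathbf{h}}(a,\theta)\triangleq\theta\mathbb{E}_{x,\mathbf{z}}[|f-ax|^{2}]-\theta\mathbb{E}_{x,\mathbf{z}}[|f|^{2}]/(1-\theta|a|^{2}\mathcal{E}_\mathrm{s})+\log(1-\theta|a|^{2}\mathcal{E}_\mathrm{s})$. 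Consequently $I_{\mathrm{GMI}}=\sup_{a,\theta}\mathbb{E}_{\mathbf{h}}[g_{\mathbf{h}}(a,\theta)]\le\mathbb{E}_{\mathbf{h}}[\sup_{a,\theta}g_{\mathbf{h}}(a,\theta)]$, the standard ``supremum of an average does not exceed the average of the supremum'' step, which amounts to relaxing the single decoding scale $a$ to a genie-aided per-realization $a(\mathbf{h})$. By Proposition \ref{prop:prop_1} the inner supremum is precisely the fixed-channel GMI $\log\!\big(1+\kappa(\mathbf{w}_{\mathrm{opt}},\bm{\delta})/(1-\kappa(\mathbf{w}_{\mathrm{opt}},\bm{\delta}))\big)$, yielding $I_{\mathrm{GMI}}^{\mathrm{upper}}$.

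Finally I would verify consistency and identify the decisive step. Since $t\mapsto-\log(1-t)$ is convex on $(0,1)$, Jensen's inequality gives $\log(1/(1-\bar{\kappa}))\le\mathbb{E}_{\mathbf{h}}[\log(1/(1-\kappa(\mathbf{w}_{\mathrm{opt}},\bm{\delta})))]$, confirming $I_{\mathrm{GMI}}^{\mathrm{lower}}\le I_{\mathrm{GMI}}^{\mathrm{upper}}$ and showing the two bounds coincide when $\kappa(\mathbf{w}_{\mathrm{opt}},\bm{\delta})$ is nearly deterministic (small fading spread), which explains the numerically observed tightness. The main obstacle is the moment identity of the first step: the entire argument hinges on MMSE orthogonality collapsing $\mathbb{E}[f^{*}x]$ and $\mathbb{E}[|f|^{2}]$ onto the same quantity, because only then does the ergodic objective inherit the single-parameter form of Proposition \ref{prop:prop_1} and separate cleanly into the fading average plus the $\mathbf{h}$-independent nonlinear terms that drive the $\sup$/$\mathbb{E}$ exchange; without it the fixed-$a$ optimization would not decouple from the expectation over $\mathbf{h}$ and neither bound would close in closed form.
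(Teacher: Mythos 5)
Your proposal is correct and takes essentially the same route as the paper: the lower bound is obtained by fixing the per-realization MMSE combiner and exploiting the identity $\mathbb{E}_{x,\mathbf{z}}[f^{*}x]=\mathbb{E}_{x,\mathbf{z}}[|f|^{2}]=\mathcal{E}_\mathrm{s}\,\kappa(\mathbf{w}_{\mathrm{opt}},\bm{\delta})$ so that the ergodic $\kappa$ collapses to $\mathbb{E}_{\mathbf{h}}[\kappa(\mathbf{w}_{\mathrm{opt}},\bm{\delta})]$, and the upper bound follows from exchanging the supremum over $(a,\theta)$ with the expectation over $\mathbf{h}$, exactly as in the paper's proof. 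Your explicit re-solution of the $(a,\theta)$ optimization and the Jensen-inequality consistency check are just more detailed renderings of the same argument, not a different approach.
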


\begin{proof}
Following a similar procedure as \cite[App. C]{zhang2012general}, we obtain $\kappa$ in this situation as
\begin{equation}
\kappa=\frac{|\mathbb{E}_{x,\mathbf{z},\mathbf{h}}[f^*(x,\mathbf{h},\mathbf{z})\cdot x]|^2}
{\mathcal{E}_\mathrm{s}\mathbb{E}_{x,\mathbf{z},\mathbf{h}}[|f(x,\mathbf{h},\mathbf{z})|^2]},
\label{equ:equ_53}
\end{equation}
which shares exactly the same form as \eqref{equ:equ_10}, except that the expectation is taken over $x$, $\mathbf{z}$, and $\mathbf{h}$. The maximization of $\kappa$ shall be accomplished by optimizing the linear combiner. Therefore by specifying $\mathbf{w}$ to be designed according to \eqref{equ:equ_21}, we get a lower bound of the optimal $\kappa$, since this design is just one of the feasible options and thus is not necessarily optimal; that is
\begin{eqnarray}
\kappa&=&\frac{|\mathbb{E}_{\mathbf{h}}[\mathbb{E}_{x,\mathbf{z}}[f^*(x,\mathbf{h},\mathbf{z})\cdot x|\mathbf{h}]]|^2}
{\mathcal{E}_\mathrm{s}\mathbb{E}_{\mathbf{h}}[\mathbb{E}_{x,\mathbf{z}}[|f(x,\mathbf{h},\mathbf{z})|^2|\mathbf{h}]]}\nonumber\\
&\geq&\frac{|\mathbb{E}_{\mathbf{h}}[\mathbf{w}^H\mathbf{R}_{\mathbf{r}x}]|^2}
{\mathcal{E}_\mathrm{s}\mathbb{E}_{\mathbf{h}}[\mathbf{w}^{H}\mathbf{R}_{\mathbf{rr}}\mathbf{w}]}\nonumber\\
&=&\mathbb{E}_{\mathbf{h}}[\kappa(\mathbf{w}_{\mathrm{opt}},\bm{\delta})],
\label{equ:equ_54}
\end{eqnarray}
where the last equation comes from \eqref{equ:equ_21}-\eqref{equ:equ_22}. Consequently, we obtain the lower bound of $I_{\mathrm{GMI}}$ as given by (29).

To prove \eqref{equ:equ_52}, we first rewrite \eqref{equ:equ_50} as
\begin{align}
I_{\mathrm{GMI}}=&\sup_{a\in\mathbb{C},\theta<0}\mathbb{E}_{\mathbf{h}}
\Bigg(\theta\mathbb{E}_{x,\mathbf{z}}[|f(x,\mathbf{h},\mathbf{z})-ax|^2|\mathbf{h}]-\nonumber\\
&\frac{\theta\mathbb{E}_{x,\mathbf{z}}[|f(x,\mathbf{h},\mathbf{z})|^2|\mathbf{h}]}
{1-\theta|a|^2\mathcal{E}_\mathrm{s}}
+\log(1-\theta|a|^2\mathcal{E}_\mathrm{s})\Bigg).
\label{equ:equ_59}
\end{align}
Then, to derive the upper bound we simply exchange the order of supremum operation and the expectation over $\mathbf{h}$. This leads to
\begin{align}
I_{\mathrm{GMI}}\leq\ &\mathbb{E}_{\mathbf{h}}\Bigg(\sup_{a\in\mathbb{C},\theta<0}
\Big(\theta\mathbb{E}_{x,\mathbf{z}}[|f(x,\mathbf{h},\mathbf{z})-ax|^2|\mathbf{h}]-\nonumber\\
&\frac{\theta\mathbb{E}_{x,\mathbf{z}}[|f(x,\mathbf{h},\mathbf{z})|^2|\mathbf{h}]}
{1-\theta|a|^2\mathcal{E}_\mathrm{s}}
\!+\!\log(1\!-\!\theta|a|^2\mathcal{E}_\mathrm{s})\Big)\Bigg).
\label{equ:equ_55}
\end{align}
Consequently, \eqref{equ:equ_52} follows directly from \eqref{equ:equ_49} and the subsequent results established for the fixed SIMO channels.
\end{proof}

\subsection{Training and Effect of Imperfect CSI}
Our results derived thus far are based on the perfect CSI assumption. In practice, however, CSI needs to be either explicitly or implicitly acquired, say, via channel estimation. The channel estimation procedure with coarsely quantized channel outputs is both inefficient and elusive for analysis. Therefore, to study the robustness of the mixed-ADC architecture to imperfect CSI, in this paper we only utilize the high-resolution ADCs to perform channel estimation.

Specifically, we estimate the channel vector in a round-robin manner, by which we link the $K$ pairs of high-resolution ADCs to the first $K$ antennas and estimate the corresponding channel coefficients $h_1,...,h_K$ at the first symbol time, turn the $K$ pairs of high-resolution ADCs to the next $K$ antennas and estimate $h_{K+1},...,h_{2K}$ at the next symbol time, and so on. Thus the training phase lasts about $N/K$ symbol times\footnote{For example, a BS equipped with 100 antennas and 20 pairs of high-resolution ADCs would consume 5 symbol times in each coherence interval for channel estimation. This overhead is acceptable for slowly or moderately varying fading channels; for example, in \cite{ngo2013energy} the channel coherence interval length is taken as 196, which is also used by us in the subsequent simulations. The efficiency and quality of channel training may be improved by jointly exploiting high-resolution ADCs and one-bit ADCs, which is an interesting and important topic for future research.}. To simplify analysis, in this subsection we assume that each antenna follows i.i.d. Rayleigh fading, so that ${h}_n\sim\mathcal{CN}(0,1)$, $n = 1, 2, \ldots, N$. An MMSE estimator is adopted at the BS, and thus without loss of generality, we can decompose $h_n$ into
\begin{equation}
h_n=\hat{h}_n+\tilde{h}_n,\ \ n=1,...,N,
\label{equ:equ_34}
\end{equation}
where $\hat{h}_n\sim\mathcal{CN}(0,1-\sigma_\mathrm{t}^2)$ is the estimated channel coefficient, while $\tilde{h}_n\sim\mathcal{CN}(0,\sigma_\mathrm{t}^2)$ accounts for the independent estimation error. Accordingly, we define the MSE of the channel estimation as $\mathrm{MSE}_{\mathrm{t}}=\sigma_\mathrm{t}^2$.

In this situation, the linear combiner $\mathbf{w}$ and the ADC switch vector $\bm{\delta}$ should be designed based on the channel estimate $\hat{\mathbf{h}}$. Besides, we rewrite $f(x,\mathbf{h},\mathbf{z})$ as $f(x,\hat{\mathbf{h}},\tilde{\mathbf{h}},\mathbf{z})$ in order to incorporate the effect of channel estimation. Then with some modification, our analysis developed in the last subsection still applies for the imperfect CSI case. To proceed, we have
\begin{align}
I_{\mathrm{GMI}}^{\mathrm{im}}=&\frac{T - N/K}{T} \sup_{a\in\mathbb{C},\theta<0}
\Bigg(\theta\mathbb{E}[|f(x,\hat{\mathbf{h}},\tilde{\mathbf{h}},\mathbf{z})-ax|^2]\nonumber\\
&-\frac{\theta\mathbb{E}[|f(x,\hat{\mathbf{h}},\tilde{\mathbf{h}},\mathbf{z})|^2]}{1-\theta|a|^2\mathcal{E}_\mathrm{s}}
+\log(1-\theta|a|^2\mathcal{E}_\mathrm{s})\Bigg),
\label{equ:equ_56}
\end{align}
which obeys an analogous form as \eqref{equ:equ_50}, except that the leading coefficient $\frac{T - N/K}{T}$ accounts for the rate loss due to channel training ($T$ is the coherence interval length), and that the expectation here is taken with respect to $x$, $\hat{\mathbf{h}}$, $\tilde{\mathbf{h}}$, and $\mathbf{z}$. Exploiting a similar argument as that in the proof of Proposition \ref{prop:prop_6}, we arrive at the following proposition.

\begin{prop}
For block fading channels with imperfect CSI, a lower bound of $I_{\mathrm{GMI}}^{\mathrm{im}}$ is
\begin{equation}
I_{\mathrm{GMI}}^{\mathrm{im,l}}=\frac{T - N/K}{T} \log\left(1+\frac{\mathbb{E}_{\hat{\mathbf{h}}}[\kappa(\mathbf{w}_{\mathrm{opt}}^{\mathrm{im}},\bm{\delta})]}
{1-\mathbb{E}_{\hat{\mathbf{h}}}[\kappa(\mathbf{w}_{\mathrm{opt}}^{\mathrm{im}},\bm{\delta})]}\right),
\label{equ:equ_57}
\end{equation}
and an upper bound of $I_{\mathrm{GMI}}^{\mathrm{im}}$ is
\begin{equation}
I_{\mathrm{GMI}}^{\mathrm{im,u}}=\frac{T - N/K}{T}\mathbb{E}_{\hat{\mathbf{h}}}
\left[\log\left(1+
\frac{\kappa(\mathbf{w}_{\mathrm{opt}}^{\mathrm{im}},\bm{\delta})}
{1-\kappa(\mathbf{w}_{\mathrm{opt}}^{\mathrm{im}},\bm{\delta})}\right)\right].
\label{equ:equ_58}
\end{equation}
Here, $\mathbf{w}_{\mathrm{opt}}^{\mathrm{im}}$ and $\kappa(\mathbf{w}_{\mathrm{opt}}^{\mathrm{im}},\bm{\delta})$ also come from \eqref{equ:equ_21} and \eqref{equ:equ_22}, but we need to replace $\mathbf{R}_{\mathbf{r}x}$ with $\mathbf{R}_{\mathbf{r}x}^{\mathrm{im}}=\mathbb{E}_{\tilde{\mathbf{h}}}[\mathbf{R}_{\mathbf{r}x}]$, and replace $\mathbf{R}_{\mathbf{rr}}$ with $\mathbf{R}_{\mathbf{rr}}^{\mathrm{im}}=\mathbb{E}_{\tilde{\mathbf{h}}}[\mathbf{R}_{\mathbf{rr}}]$.
\label{prop:prop_4}
\end{prop}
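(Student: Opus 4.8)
The plan is to mirror the proof of Proposition \ref{prop:prop_6} almost line by line, the one genuinely new ingredient being that the linear combiner $\mathbf{w}$ and the ADC switch vector $\bm{\delta}$ are now functions of the channel estimate $\hat{\mathbf{h}}$ rather than of the true channel $\mathbf{h}=\hat{\mathbf{h}}+\tilde{\mathbf{h}}$. First I would split the joint expectation in \eqref{equ:equ_56} by the tower property, conditioning on $\hat{\mathbf{h}}$ and averaging over $(x,\mathbf{z},\tilde{\mathbf{h}})$ in the inner layer. The crucial observation is that, since $\mathbf{w}$ and $\bm{\delta}$ are measurable with respect to $\hat{\mathbf{h}}$ alone, they pass through the inner expectation as deterministic quantities; using $f=\mathbf{w}^H\mathbf{r}$ together with the independence of $\tilde{\mathbf{h}}$ from $(x,\mathbf{z})$ under the MMSE decomposition \eqref{equ:equ_34}, the inner correlation $\mathbb{E}[\mathbf{r}x^{*}|\hat{\mathbf{h}}]$ and covariance $\mathbb{E}[\mathbf{r}\mathbf{r}^H|\hat{\mathbf{h}}]$ collapse, by the tower property, to the $\tilde{\mathbf{h}}$-averaged matrices $\mathbf{R}_{\mathbf{r}x}^{\mathrm{im}}=\mathbb{E}_{\tilde{\mathbf{h}}}[\mathbf{R}_{\mathbf{r}x}]$ and $\mathbf{R}_{\mathbf{rr}}^{\mathrm{im}}=\mathbb{E}_{\tilde{\mathbf{h}}}[\mathbf{R}_{\mathbf{rr}}]$ of \eqref{equ:equ_18}--\eqref{equ:equ_19}. (One checks $\mathbb{E}[\mathbf{r}|\hat{\mathbf{h}}]=\mathbf{0}$ by the symmetry of both the high-resolution and the sign outputs, so covariance and correlation coincide.) Consequently, conditioned on $\hat{\mathbf{h}}$ the problem is structurally identical to the fixed SIMO channel of Section \ref{sect:GMI and combiner} with $\mathbf{R}_{\mathbf{r}x},\mathbf{R}_{\mathbf{rr}}$ replaced by $\mathbf{R}_{\mathbf{r}x}^{\mathrm{im}},\mathbf{R}_{\mathbf{rr}}^{\mathrm{im}}$, so Propositions \ref{prop:prop_1}--\ref{prop:prop_3} carry over verbatim and define both $\mathbf{w}_{\mathrm{opt}}^{\mathrm{im}}=(\mathbf{R}_{\mathbf{rr}}^{\mathrm{im}})^{-1}\mathbf{R}_{\mathbf{r}x}^{\mathrm{im}}$ and $\kappa(\mathbf{w}_{\mathrm{opt}}^{\mathrm{im}},\bm{\delta})$.

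For the lower bound, I would follow \eqref{equ:equ_53}--\eqref{equ:equ_54}: write the exact $\kappa^{\mathrm{im}}$ as a squared unconditional correlation over an unconditional second moment, condition on $\hat{\mathbf{h}}$ so that the numerator and denominator read $|\mathbb{E}_{\hat{\mathbf{h}}}[\mathbf{w}^H\mathbf{R}_{\mathbf{r}x}^{\mathrm{im}}]|^2$ and $\mathcal{E}_\mathrm{s}\mathbb{E}_{\hat{\mathbf{h}}}[\mathbf{w}^H\mathbf{R}_{\mathbf{rr}}^{\mathrm{im}}\mathbf{w}]$, and then substitute the feasible (hence possibly suboptimal) choice $\mathbf{w}=\mathbf{w}_{\mathrm{opt}}^{\mathrm{im}}$. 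The same Cauchy--Schwartz step used in \eqref{equ:equ_54} then yields $\kappa^{\mathrm{im}}\ge\mathbb{E}_{\hat{\mathbf{h}}}[\kappa(\mathbf{w}_{\mathrm{opt}}^{\mathrm{im}},\bm{\delta})]$, and multiplying the resulting GMI by the training factor $\frac{T-N/K}{T}$ gives \eqref{equ:equ_57}.

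For the upper bound, I would rewrite \eqref{equ:equ_56} as $\frac{T-N/K}{T}\sup_{a,\theta}\mathbb{E}_{\hat{\mathbf{h}}}[\,\cdot\,]$ in the manner of \eqref{equ:equ_59}, and then interchange the supremum with $\mathbb{E}_{\hat{\mathbf{h}}}$ using $\sup\mathbb{E}\le\mathbb{E}\sup$. By the structural reduction above, the inner supremum for each fixed $\hat{\mathbf{h}}$ is exactly the fixed-channel optimization solved in Proposition \ref{prop:prop_1}, evaluated with the averaged quantities, and therefore equals $\log\big(1+\kappa(\mathbf{w}_{\mathrm{opt}}^{\mathrm{im}},\bm{\delta})/(1-\kappa(\mathbf{w}_{\mathrm{opt}}^{\mathrm{im}},\bm{\delta}))\big)$; reinstating the training factor produces \eqref{equ:equ_58}.

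The step I expect to require the most care, rather than a mere citation, is the reduction in the first paragraph: establishing that the inner conditional expectations truly collapse to $\mathbf{R}_{\mathbf{r}x}^{\mathrm{im}}$ and $\mathbf{R}_{\mathbf{rr}}^{\mathrm{im}}$. This hinges on two facts specific to the imperfect-CSI setting, namely the $\hat{\mathbf{h}}$-measurability of $\mathbf{w}$ and $\bm{\delta}$ and the independence of $\tilde{\mathbf{h}}$ from $(x,\mathbf{z})$, which together legitimize averaging \eqref{equ:equ_18}--\eqref{equ:equ_19} over $\tilde{\mathbf{h}}$ before the combiner acts. Once this reduction is secured, the Cauchy--Schwartz and supremum-exchange arguments are word-for-word those of Proposition \ref{prop:prop_6}; the only residual difficulty, namely computing the $\tilde{\mathbf{h}}$-averages of the $\mathrm{arcsin}$ and square-root terms in closed form, is not needed for the bounds and can be deferred to the numerical evaluation.
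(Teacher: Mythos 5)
Your proposal is correct and takes essentially the same route as the paper: the paper's entire proof is the remark that the argument of Proposition~\ref{prop:prop_6} carries over, and your tower-property reduction (using the $\hat{\mathbf{h}}$-measurability of $\mathbf{w},\bm{\delta}$ and the independence of $\tilde{\mathbf{h}}$ from $(x,\mathbf{z},\hat{\mathbf{h}})$ to collapse the conditional statistics to $\mathbf{R}_{\mathbf{r}x}^{\mathrm{im}}$, $\mathbf{R}_{\mathbf{rr}}^{\mathrm{im}}$) makes precise exactly why the replacement of $\mathbf{R}_{\mathbf{r}x}$, $\mathbf{R}_{\mathbf{rr}}$ by their $\tilde{\mathbf{h}}$-averages is legitimate, after which the feasibility and $\sup\mathbb{E}\leq\mathbb{E}\sup$ arguments are word-for-word those of the perfect-CSI case. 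One cosmetic slip: the inequality giving the lower bound is not a Cauchy--Schwartz step but simply suboptimality of the fixed choice $\mathbf{w}=\mathbf{w}_{\mathrm{opt}}^{\mathrm{im}}$ (as in \eqref{equ:equ_54}), after which the ratio evaluates to $\mathbb{E}_{\hat{\mathbf{h}}}[\kappa(\mathbf{w}_{\mathrm{opt}}^{\mathrm{im}},\bm{\delta})]$ exactly.
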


\section{Extension to multi-user scenario}
\label{sect:multi-user}
In this section, we consider a multi-user system where the BS serves $M$ single-antenna users simultaneously. The CSI is assumed perfectly known by the BS, and there are still only $K$ pairs of high-resolution ADCs available.

\subsection{Fixed Channels}
Again, we start from the fixed channel case. The channel matrix between the users and the BS is denoted by $\mathbf{H}\triangleq[\mathbf{h}_1,...,\mathbf{h}_N]\in\mathbb{C}^{M\times N}$, i.e., $\mathbf{h}_n\triangleq[h_{1n},...,h_{Mn}]^T$ collecting the channel coefficients related to the $n$-th antenna at the BS. We write the quantized output at the $n$-th antenna, with user $j$ considered, as
\begin{equation}
r_n^{\mathrm{mu}}=\delta_n\!\cdot\!\left(\sum_{\iota=1}^{M}h_{\iota n}x_\iota+z_n\right)+\bar{\delta}_n\!\cdot\!\mathrm{sgn}\left(\sum_{\iota=1}^{M}h_{\iota n}x_\iota+z_n\right),
\label{equ:equ_43}
\end{equation}
where $x_{\iota}\sim\mathcal{CN}(0,\mathcal{E}_\mathrm{s})$ denotes the i.i.d. coded signal dedicated to the $\iota$-th user, and $\sum_{\iota\neq j}^{M}h_{\iota n}x_\iota+z_n$ summarizes the co-channel interference and noise for the considered user $j$. For a fair comparison, the SNR in this situation is defined as $\mathrm{SNR}=M\mathcal{E}_\mathrm{s}$, reflecting the total transmit power from all the users.

Following a similar derivation procedure as that in Section \ref{sect:GMI and combiner}, we get the GMI of the considered user. The proof is omitted for concision.

\begin{prop}
\label{prop:prop_5}
For given $\mathbf{H}$ and $\bm{\delta}$, when treating other users' signals as noise, the GMI of user $j$ is
\begin{equation}
I_{\mathrm{GMI}}^{\mathrm{mu}}=\log\left(1+\frac{\kappa^{\mathrm{mu}}}{1-\kappa^{\mathrm{mu}}}\right),
\label{equ:equ_44}
\end{equation}
where the parameter $\kappa^{\mathrm{mu}}$ is
\begin{equation}
\kappa^{\mathrm{mu}}=\frac{1}{\mathcal{E}_\mathrm{s}}
(\mathbf{R}_{\mathbf{r}x}^{\mathrm{mu}})^{H}(\mathbf{R}_{\mathbf{rr}}^{\mathrm{mu}})^{-1}\mathbf{R}_{\mathbf{r}x}^{\mathrm{mu}}.
\label{equ:equ_45}
\end{equation}
$\mathbf{R}_{\mathbf{r}x}^{\mathrm{mu}}$ is the correlation vector between $\mathbf{r}^{\mathrm{mu}}$ and $x_j$, with its $n$-th entry given as
\begin{equation}
(\mathbf{R}_{\mathbf{r}x}^{\mathrm{mu}})_n=h_{jn} \mathcal{E}_\mathrm{s} \left[\delta_n+\bar{\delta}_n\cdot\sqrt{\frac{4}{\pi(\|\mathbf{h}_n\|^2\mathcal{E}_\mathrm{s}+1)}}\right],
\label{equ:equ_46}
\end{equation}
and $\mathbf{R}_{\mathbf{rr}}^{\mathrm{mu}}$ is the covariance matrix of $\mathbf{r}^{\mathrm{mu}}$, with the $(n,m)$-th entry being $(\mathbf{R}_{\mathbf{rr}}^{\mathrm{mu}})_{n,m}=$
\begin{equation}
\begin{cases}
1+\delta_n\cdot\|\mathbf{h}_n\|^2\mathcal{E}_\mathrm{s}+\bar{\delta}_n,\!\!\!\!\!&\mathrm{if}\ n=m, \\
\mathbf{h}_n^T\mathbf{h}_m^* \mathcal{E}_\mathrm{s}
\Bigg[\delta_n\delta_m+\delta_n\bar{\delta}_m\cdot\sqrt{\frac{4}{\pi(\|\mathbf{h}_m\|^2\mathcal{E}_\mathrm{s}+1)}}+\\
\ \ \ \ \ \ \ \ \ \ \ \bar{\delta}_n\delta_m\cdot\sqrt{\frac{4}{\pi(\|\mathbf{h}_n\|^2\mathcal{E}_\mathrm{s}+1)}}\Bigg]+\\
\bar{\delta}_n\bar{\delta}_m\!\cdot\!\frac{4}{\pi}\Bigg[
\mathrm{arcsin}\Big(\frac{(\mathbf{h}_n^T\mathbf{h}_m^*)^{\mathrm{R}}\mathcal{E}_\mathrm{s}}
{\sqrt{\|\mathbf{h}_n\|^2\mathcal{E}_\mathrm{s}+1}\sqrt{\|\mathbf{h}_m\|^2\mathcal{E}_\mathrm{s}+1}}\Big)+\\
\ \ \ \ \ \ \ \ \ \ i\!\cdot\!\mathrm{arcsin}\Big(\frac{(\mathbf{h}_n^T\mathbf{h}_m^*)^{\mathrm{I}}\mathcal{E}_\mathrm{s}}
{\sqrt{\|\mathbf{h}_n\|^2\mathcal{E}_\mathrm{s}+1}\sqrt{\|\mathbf{h}_m\|^2\mathcal{E}_\mathrm{s}+1}}\Big)
\Bigg],\!\!\!\!\!&\mathrm{if}\ n\neq m.
\end{cases}
\label{equ:equ_47}
\end{equation}
\end{prop}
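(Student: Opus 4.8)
The plan is to follow the single-user derivation of Propositions \ref{prop:prop_1}--\ref{prop:prop_3} essentially verbatim, treating $x_j$ as the desired input and absorbing the co-channel interference $\sum_{\iota\neq j}h_{\iota n}x_\iota$ together with $z_n$ into an effective noise. First I would isolate the one structural observation that makes everything go through: because every $x_\iota\sim\mathcal{CN}(0,\mathcal{E}_\mathrm{s})$ is independent and circularly symmetric and $z_n$ is circularly symmetric, the per-antenna observation $y_n=\sum_{\iota=1}^M h_{\iota n}x_\iota+z_n$ is a zero-mean circularly symmetric complex Gaussian with $\mathbb{E}[|y_n|^2]=\|\mathbf{h}_n\|^2\mathcal{E}_\mathrm{s}+1$, and the pair $(y_n,x_j)$ is jointly circularly symmetric complex Gaussian with $\mathbb{E}[y_n x_j^*]=h_{jn}\mathcal{E}_\mathrm{s}$. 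Thus the multi-user problem has exactly the same statistical structure as the single-user one, with the single-antenna receive power $|h_n|^2\mathcal{E}_\mathrm{s}+1$ replaced by the aggregate $\|\mathbf{h}_n\|^2\mathcal{E}_\mathrm{s}+1$ and the cross-channel $h_n h_m^*$ replaced by $\mathbf{h}_n^T\mathbf{h}_m^*$. Given this, the GMI form \eqref{equ:equ_44}--\eqref{equ:equ_45} follows immediately from Proposition \ref{prop:prop_1} and the optimal-combiner result \eqref{equ:equ_22} of Proposition \ref{prop:prop_3}, applied with $x\to x_j$ and the expectation taken over $x_j$, $\{x_\iota\}_{\iota\neq j}$, and $z_n$.

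Next I would compute the two second-order statistics. For $\mathbf{R}_{\mathbf{r}x}^{\mathrm{mu}}$ the high-resolution entries are immediate, $\mathbb{E}[y_n x_j^*]=h_{jn}\mathcal{E}_\mathrm{s}$, since the interferers are independent of $x_j$. The one-bit entries need the Bussgang-type identity: for a real zero-mean Gaussian $g$ jointly Gaussian with any zero-mean $s$, $\mathbb{E}[\mathrm{sgn}(g)s]=\sqrt{2/\pi}\,\mathbb{E}[gs]/\sqrt{\mathbb{E}[g^2]}$. Applying it to the real and imaginary branches of $y_n$ and collecting terms via circular symmetry yields $\mathbb{E}[\mathrm{sgn}(y_n)x_j^*]=\sqrt{4/(\pi\,\mathbb{E}[|y_n|^2])}\,\mathbb{E}[y_n x_j^*]$, which produces the factor $\sqrt{4/(\pi(\|\mathbf{h}_n\|^2\mathcal{E}_\mathrm{s}+1))}$ of \eqref{equ:equ_46}. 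The diagonal of $\mathbf{R}_{\mathbf{rr}}^{\mathrm{mu}}$ is either $\mathbb{E}[|y_n|^2]=\|\mathbf{h}_n\|^2\mathcal{E}_\mathrm{s}+1$ (high-resolution) or $\mathbb{E}[|\mathrm{sgn}(y_n)|^2]=2$ (one-bit), matching \eqref{equ:equ_47}.

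For the off-diagonal entries of $\mathbf{R}_{\mathbf{rr}}^{\mathrm{mu}}$ I would treat the three cases separately. When both antennas are high-resolution, $\mathbb{E}[y_n y_m^*]=\mathcal{E}_\mathrm{s}\mathbf{h}_n^T\mathbf{h}_m^*$ because the noises are independent across antennas. The mixed case reuses the same Bussgang identity with $y_n$ (or $y_m$) in place of $x_j$. The genuinely new piece is the both-one-bit case, $\mathbb{E}[\mathrm{sgn}(y_n)\mathrm{sgn}(y_m)^*]$, which requires the arcsine (Van Vleck/Price) law $\mathbb{E}[\mathrm{sgn}(a)\mathrm{sgn}(b)]=\tfrac{2}{\pi}\arcsin(\rho_{ab})$ applied to the four real/imaginary sign products. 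Using circular symmetry, $\mathbb{E}[y_n^\mathrm{R}y_m^\mathrm{R}]=\mathbb{E}[y_n^\mathrm{I}y_m^\mathrm{I}]=\tfrac12(\mathbf{h}_n^T\mathbf{h}_m^*)^\mathrm{R}\mathcal{E}_\mathrm{s}$ and $\mathbb{E}[y_n^\mathrm{I}y_m^\mathrm{R}]=-\mathbb{E}[y_n^\mathrm{R}y_m^\mathrm{I}]=\tfrac12(\mathbf{h}_n^T\mathbf{h}_m^*)^\mathrm{I}\mathcal{E}_\mathrm{s}$, and normalizing each coefficient by the per-branch variances $\mathbb{E}[|y_n|^2]/2$, the two real-part arcsines add and the two imaginary-part arcsines add, collapsing into the real plus $i\cdot$imaginary arcsine expression of \eqref{equ:equ_47}.

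I expect the main obstacle to be precisely this complex bookkeeping in the both-one-bit case: one must correctly normalize each real/imaginary correlation coefficient, verify that the sign relations under circular symmetry make the paired arcsine terms add rather than cancel, and confirm that the aggregate interference genuinely preserves the circularly symmetric Gaussian structure that licenses reusing all the single-user closed forms. Once these are established, substituting $\mathbf{R}_{\mathbf{r}x}^{\mathrm{mu}}$ and $\mathbf{R}_{\mathbf{rr}}^{\mathrm{mu}}$ into \eqref{equ:equ_22} yields \eqref{equ:equ_45}, completing the proof.
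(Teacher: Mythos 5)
Your proposal is correct and matches the paper's intended argument: the paper omits this proof precisely because it is "a similar derivation procedure as that in Section \ref{sect:GMI and combiner}," i.e., reuse Propositions \ref{prop:prop_1} and \ref{prop:prop_3} with the interference-plus-noise $\sum_{\iota\neq j}h_{\iota n}x_\iota+z_n$ as the effective circularly symmetric Gaussian noise, then recompute $\mathbf{R}_{\mathbf{r}x}^{\mathrm{mu}}$ and $\mathbf{R}_{\mathbf{rr}}^{\mathrm{mu}}$ via Lemmas \ref{lem:lem_1} and \ref{lem:lem_2} with $|h_n|^2\mathcal{E}_\mathrm{s}+1\to\|\mathbf{h}_n\|^2\mathcal{E}_\mathrm{s}+1$ and $h_nh_m^*\to\mathbf{h}_n^T\mathbf{h}_m^*$, exactly as you do. Your only (harmless) deviation is stating the Bussgang-type identity in its general jointly-Gaussian form rather than the paper's independent-decomposition form of Lemma \ref{lem:lem_2}, which in fact streamlines the mixed high-resolution/one-bit cross terms where the two antenna outputs share all $M$ users' signals.
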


In the multi-user scenario, there is no clear clue about how to switch the high-resolution ADCs. To obtain some hint, we explore the asymptotic behavior of \eqref{equ:equ_44} in the low SNR regime, leading to the corollary below.

\begin{cor}
\label{cor:cor_5}
When $\mathcal{E}_\mathrm{s}\rightarrow 0$, for given $\mathbf{H}$ and $\bm{\delta}$, we have the GMI of user $j$ as
\begin{equation}
I_{\mathrm{GMI}}^{\mathrm{mu}}
=\sum_{n=1}^{N}\left(\delta_n+\bar{\delta}_n\cdot\frac{2}{\pi}\right)\cdot|h_{jn}|^2\mathcal{E}_\mathrm{s}+o(\mathcal{E}_\mathrm{s}).
\label{equ:equ_48}
\end{equation}
\end{cor}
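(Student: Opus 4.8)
The plan is to collapse the whole statement to a single leading-order expansion of $\kappa^{\mathrm{mu}}$ as $\mathcal{E}_\mathrm{s}\rightarrow 0$. First I would observe from \eqref{equ:equ_46} that every entry of $\mathbf{R}_{\mathbf{r}x}^{\mathrm{mu}}$ is $O(\mathcal{E}_\mathrm{s})$, so the quadratic form in \eqref{equ:equ_45} is $O(\mathcal{E}_\mathrm{s}^2)$ and hence $\kappa^{\mathrm{mu}}=O(\mathcal{E}_\mathrm{s})\rightarrow 0$. With that in hand the outer transformation is immediate: $\frac{\kappa^{\mathrm{mu}}}{1-\kappa^{\mathrm{mu}}}=\kappa^{\mathrm{mu}}+O((\kappa^{\mathrm{mu}})^2)$ together with $\log(1+y)=y+O(y^2)$ gives $I_{\mathrm{GMI}}^{\mathrm{mu}}=\kappa^{\mathrm{mu}}+o(\mathcal{E}_\mathrm{s})$, so the entire task reduces to pinning down the coefficient of $\mathcal{E}_\mathrm{s}$ in $\kappa^{\mathrm{mu}}$.

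Next I would expand the two building blocks. Using $\sqrt{4/\pi}=2/\sqrt{\pi}$ and $\|\mathbf{h}_n\|^2\mathcal{E}_\mathrm{s}+1\rightarrow 1$, expression \eqref{equ:equ_46} yields $(\mathbf{R}_{\mathbf{r}x}^{\mathrm{mu}})_n=h_{jn}\mathcal{E}_\mathrm{s}(\delta_n+\bar{\delta}_n\tfrac{2}{\sqrt{\pi}})+o(\mathcal{E}_\mathrm{s})$. For the covariance matrix I would show $\mathbf{R}_{\mathbf{rr}}^{\mathrm{mu}}=\mathbf{D}+O(\mathcal{E}_\mathrm{s})$ with $\mathbf{D}=\mathrm{diag}(1+\bar{\delta}_1,\ldots,1+\bar{\delta}_N)$: from \eqref{equ:equ_47} the diagonal entries tend to $1+\bar{\delta}_n$, while every off-diagonal entry is $O(\mathcal{E}_\mathrm{s})$ — the pure high-resolution and mixed terms carry an explicit factor $\mathcal{E}_\mathrm{s}$, and the double one-bit arcsin term has argument $O(\mathcal{E}_\mathrm{s})$, so $\arcsin(x)=x+O(x^3)$ makes it $O(\mathcal{E}_\mathrm{s})$ as well. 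Consequently $(\mathbf{R}_{\mathbf{rr}}^{\mathrm{mu}})^{-1}=\mathbf{D}^{-1}+O(\mathcal{E}_\mathrm{s})$.

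Substituting these into \eqref{equ:equ_45}, the $O(\mathcal{E}_\mathrm{s})$ perturbation of the inverse is sandwiched between two $O(\mathcal{E}_\mathrm{s})$ vectors, so it contributes $O(\mathcal{E}_\mathrm{s}^3)$ to the quadratic form, i.e. $o(\mathcal{E}_\mathrm{s})$ after the division by $\mathcal{E}_\mathrm{s}$; only $\mathbf{D}^{-1}$ survives at leading order. Since $\mathbf{D}^{-1}$ is diagonal, $\kappa^{\mathrm{mu}}$ collapses to the per-antenna sum $\frac{1}{\mathcal{E}_\mathrm{s}}\sum_n|(\mathbf{R}_{\mathbf{r}x}^{\mathrm{mu}})_n|^2/(1+\bar{\delta}_n)$. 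The closing step is the algebraic identity $\frac{(\delta_n+\bar{\delta}_n 2/\sqrt{\pi})^2}{1+\bar{\delta}_n}=\delta_n+\bar{\delta}_n\tfrac{2}{\pi}$, which holds because $\delta_n$ and $\bar{\delta}_n$ are complementary $\{0,1\}$ indicators (verify the cases $\delta_n=1$ and $\delta_n=0$ separately). This delivers $\kappa^{\mathrm{mu}}=\mathcal{E}_\mathrm{s}\sum_n(\delta_n+\bar{\delta}_n\tfrac{2}{\pi})|h_{jn}|^2+o(\mathcal{E}_\mathrm{s})$ and hence \eqref{equ:equ_48}.

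I would remark that the argument is structurally identical to the single-user low-SNR result (Corollary \ref{cor:cor_2}, proved in Appendix-B), since \eqref{equ:equ_45}–\eqref{equ:equ_47} mirror \eqref{equ:equ_22}, \eqref{equ:equ_18} and \eqref{equ:equ_19} with $h_n$ replaced by $h_{jn}$ in the signal correlation and $|h_n|^2$ by $\|\mathbf{h}_n\|^2$ in the quantization and interference-plus-noise terms. The only genuine bookkeeping obstacle is confirming that neither the off-diagonal entries of $\mathbf{R}_{\mathbf{rr}}^{\mathrm{mu}}$ nor the $O(\mathcal{E}_\mathrm{s})$ correction to its inverse leak into the order-$\mathcal{E}_\mathrm{s}$ coefficient; this is settled once one tracks that $\kappa^{\mathrm{mu}}$ is an $O(\mathcal{E}_\mathrm{s}^2)$ quadratic form rescaled by $1/\mathcal{E}_\mathrm{s}$, so every relative $O(\mathcal{E}_\mathrm{s})$ refinement is demoted to $o(\mathcal{E}_\mathrm{s})$.
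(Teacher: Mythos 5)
Your proposal is correct and follows essentially the same route as the paper: the paper omits the proof of Corollary \ref{cor:cor_5}, noting it is virtually identical to Appendix-B, and your argument is exactly that Appendix-B computation (expanding $\mathbf{R}_{\mathbf{r}x}^{\mathrm{mu}}$ to leading order, showing $\mathbf{R}_{\mathbf{rr}}^{\mathrm{mu}}$ tends to $\mathrm{diag}(1+\bar{\delta}_1,\ldots,1+\bar{\delta}_N)$, and closing with the identity $(\delta_n+\bar{\delta}_n\tfrac{2}{\sqrt{\pi}})^2/(1+\bar{\delta}_n)=\delta_n+\bar{\delta}_n\tfrac{2}{\pi}$), merely phrased with explicit $O(\cdot)$ bookkeeping where the paper invokes the algebraic limit theorem and continuity of matrix inversion.
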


The proof procedure is virtually the same as Appendix-B and thus is omitted. We notice that $I_{\mathrm{GMI}}^{\mathrm{mu}}$ behaves analogously with $I_{\mathrm{GMI}}$ in the low SNR regime, which is foreseeable as the system is now noise-limited. The sum GMI now equals $\sum_{n=1}^N\sum_{j=1}^M\left(\delta_n+\bar{\delta}_n\cdot\frac{2}{\pi}\right)\cdot|h_{jn}|^2\mathcal{E}_\mathrm{s}+o(\mathcal{E}_\mathrm{s})$, which suggests that the $K$ pairs of high-resolution ADCs may be switched to the antennas with the $K$ largest $\sum_{j=1}^{M}|h_{jn}|^2$.

The asymptotic behavior of $I_{\mathrm{GMI}}^{\mathrm{mu}}$ in the high SNR regime is analytically intractable, and thus there is no generally convincing ADC switch scheme for the multi-user scenario. For this reason, we consider two heuristic switch schemes in the numerical study.
\begin{itemize}
\item Random switch: high-resolution ADCs are switched randomly.
\item Norm-based switch: as suggested by Corollary \ref{cor:cor_5}, high-resolution ADCs are switched to antennas with the $K$ largest $\sum_{j=1}^{M}|h_{jn}|^2$.
\end{itemize}
Numerical results will be given in Section \ref{sect:numerical} to examine the performance of both switch schemes.

\subsection{Ergodic Fading Channels}
The analysis is then naturally applied to ergodic fading channels, as summarized by the subsequent proposition. Numerical study will also be conducted in Section \ref{sect:numerical} to verify the tightness of the lower and upper bounds.
\begin{prop}
\label{prop:prop_7}
For ergodic fading channels, lower and upper bounds of the GMI for user $j$ are
\begin{eqnarray}
I_{\mathrm{GMI}}^{\mathrm{mu,l}}&=&\log\left(1+\frac{\mathbb{E}_{\mathbf{H}}[\kappa^{\mathrm{mu}}]}{1-\mathbb{E}_{\mathbf{H}}[\kappa^{\mathrm{mu}}]}\right),\\
I_{\mathrm{GMI}}^{\mathrm{mu,u}}&=&\mathbb{E}_{\mathbf{H}}\left[\log\left(1+\frac{\kappa^{\mathrm{mu}}}{1-\kappa^{\mathrm{mu}}}\right)\right],
\end{eqnarray}
where the parameter $\kappa^{\mathrm{mu}}$ is given by \eqref{equ:equ_45}.
\end{prop}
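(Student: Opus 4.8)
The plan is to recognize that this proposition is the multi-user analogue of Proposition \ref{prop:prop_6}, and that the entire argument carries over essentially verbatim once the single-user quantities are replaced by their multi-user counterparts. Concretely, the role of $\kappa(\mathbf{w}_{\mathrm{opt}},\bm{\delta})$ is now played by $\kappa^{\mathrm{mu}}$ from \eqref{equ:equ_45}, and the role of $f(x,\mathbf{h},\mathbf{z})$ is played by the corresponding function for the multi-user system, in which the desired signal is $x_j$ and the interference $\sum_{\iota\neq j}h_{\iota n}x_\iota$ is folded into the effective noise. Since Proposition \ref{prop:prop_5} already establishes that the fixed-channel multi-user GMI has exactly the same structural form as the single-user case, the two-expectation structure (inner expectation over $x$, $\mathbf{z}$ at fixed $\mathbf{H}$; outer expectation over $\mathbf{H}$) is identical.

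For the \emph{lower bound}, I would follow \eqref{equ:equ_53}--\eqref{equ:equ_54}. The ergodic multi-user GMI is governed by an overall correlation coefficient of the form $\kappa=|\mathbb{E}_{\mathbf{H}}[\cdot]|^2/(\mathcal{E}_\mathrm{s}\mathbb{E}_{\mathbf{H}}[\cdot])$, and by specifying the linear combiner to be the per-realization optimal combiner $\mathbf{w}_{\mathrm{opt}}$ designed from the instantaneous $\mathbf{H}$ (one feasible but not necessarily globally optimal choice), the numerator becomes $|\mathbb{E}_{\mathbf{H}}[\mathbf{w}^H\mathbf{R}_{\mathbf{r}x}^{\mathrm{mu}}]|^2$ and the denominator $\mathcal{E}_\mathrm{s}\mathbb{E}_{\mathbf{H}}[\mathbf{w}^H\mathbf{R}_{\mathbf{rr}}^{\mathrm{mu}}\mathbf{w}]$. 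Applying Jensen's inequality to the ratio — exactly as in \eqref{equ:equ_54} — yields $\kappa\geq\mathbb{E}_{\mathbf{H}}[\kappa^{\mathrm{mu}}]$, and monotonicity of $\log(1+t/(1-t))$ in $t$ then gives $I_{\mathrm{GMI}}^{\mathrm{mu,l}}$.

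For the \emph{upper bound}, I would replicate \eqref{equ:equ_59}--\eqref{equ:equ_55}: write the ergodic GMI as a supremum over $(a,\theta)$ of an expectation over $\mathbf{H}$ of the per-realization objective, then push the supremum inside the expectation, invoking the elementary fact that $\sup_a \mathbb{E}_{\mathbf{H}}[g(a,\mathbf{H})]\leq\mathbb{E}_{\mathbf{H}}[\sup_a g(a,\mathbf{H})]$. Since for each fixed $\mathbf{H}$ the inner supremum equals the fixed-channel GMI $\log(1+\kappa^{\mathrm{mu}}/(1-\kappa^{\mathrm{mu}}))$ by Proposition \ref{prop:prop_5}, taking the outer expectation delivers $I_{\mathrm{GMI}}^{\mathrm{mu,u}}$. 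The one point requiring mild care — and the only genuine obstacle — is confirming that the multi-user effective noise (co-channel interference plus thermal noise) does not disrupt the structural identities underlying Proposition \ref{prop:prop_5}; but since that proposition is assumed and already encodes the interference into $\mathbf{R}_{\mathbf{rr}}^{\mathrm{mu}}$, this reduces to a notational substitution rather than a new estimate. I would therefore state that the proof is identical to that of Proposition \ref{prop:prop_6} with $\kappa^{\mathrm{mu}}$ in place of $\kappa(\mathbf{w}_{\mathrm{opt}},\bm{\delta})$, and omit the repeated computation.
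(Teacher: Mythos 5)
Your proposal matches the paper's treatment: the paper gives no separate proof for this proposition, deferring---exactly as you do---to the argument of Proposition \ref{prop:prop_6} with $\kappa^{\mathrm{mu}}$, $\mathbf{R}_{\mathbf{r}x}^{\mathrm{mu}}$ and $\mathbf{R}_{\mathbf{rr}}^{\mathrm{mu}}$ substituted for their single-user counterparts, the lower bound coming from the suboptimal (per-realization) combiner choice and the upper bound from exchanging the supremum over $(a,\theta)$ with the expectation over the channel. One small correction: the inequality in \eqref{equ:equ_54} is not Jensen's inequality but merely the suboptimality of that feasible combiner choice (the final step there is an exact algebraic identity), though you do state this suboptimality reasoning correctly yourself, so the substance of your argument is sound.
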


\section{Energy Efficiency}
\label{sect:energy efficiency}
We establish the power models for conventional architecture (CA), antenna selection (AS), and mixed-ADC architecture (MA). Only the circuit power consumption is taken into account, since first, we focus on the receiver design, and second, the power expenditure on digital signal processing is approximately independent of the choice of receivers all of which are based on linear combining. Then power models of the three considered receivers are
\begin{eqnarray}
P_{\mathrm{CA}}&\!\!=\!\!&N(P_{\mathrm{LNA}}+P_{\mathrm{mix}}+P_{\mathrm{ADC}}+P_{\mathrm{fil}})+P_{\mathrm{syn}},\nonumber\\
P_{\mathrm{AS}}&\!\!=\!\!&K(P_{\mathrm{LNA}}+P_{\mathrm{mix}}+P_{\mathrm{ADC}}+P_{\mathrm{fil}})+P_{\mathrm{syn}},\nonumber\\
P_{\mathrm{MA}}&\!\!=\!\!&N(P_{\mathrm{LNA}}+P_{\mathrm{mix}}+P_{\mathrm{fil}})+K P_{\mathrm{ADC}}+P_{\mathrm{syn}},
\end{eqnarray}
where $P_{\mathrm{LNA}}$, $P_{\mathrm{mix}}$, $P_{\mathrm{ADC}}$, $P_{\mathrm{fil}}$, and $P_{\mathrm{syn}}$ account for the power consumption of low noise amplifier (LNA), mixer, a pair of high-resolution ADCs, filters, and frequency synthesizer (which is typically shared among all the antennas in practice), respectively. Power consumption due to one-bit ADCs is neglected, since they can be implemented as polarity detectors using discrete components and thus the power consumption is marginal compared with other parts of the circuitry.

We refer to a widely used model \cite{li2007system} to determine the power consumption parameters. Bandwidth in \cite{li2007system} is taken to be 1 MHz at a carrier frequency of $f_c=2$ GHz, while in this paper we assume a bandwidth of $B=40$ MHz\footnote{Note that LTE-Advanced supports 15-100 MHz bands in TDD uplink \cite{TR36.912}. Besides, a bandwidth of 40 MHz would be necessary for supporting an average per-user rate of 100 Mbps for future 5G.} at the same carrier frequency. To account for this scaling, realizing that the power consumption of RF front-end except ADC is insensitive to the bandwidth\footnote{See \cite{borremans20130} for example, where the signal bandwidth ranges from 0.5 MHz to 50 MHz, but the RF front-end except ADC power consumption only changes from 20 mW to 40 mW, and the change is mainly due to the fluctuation of receiver gain and noise figure.} but the power consumption of an ADC scales linearly with the bandwidth, we update the power consumption parameters as: $P_{\mathrm{LNA}}=20$ mW, $P_{\mathrm{mix}}=21$ mW, $P_{\mathrm{syn}}=67.5$ mW, $P_{\mathrm{fil}}=5$ mW, and $P_{\mathrm{ADC}}=234$ mW. As a side note, for many high-speed applications, high-resolution ADCs generally accounts for a dominant portion of the circuit power consumption; --- in some recent works (e.g., \cite{bai2015energy}), only the ADC power consumption is taken into account, ignoring the other RF front-end parts.

\begin{figure}
\centering
\includegraphics[width=0.45 \textwidth]{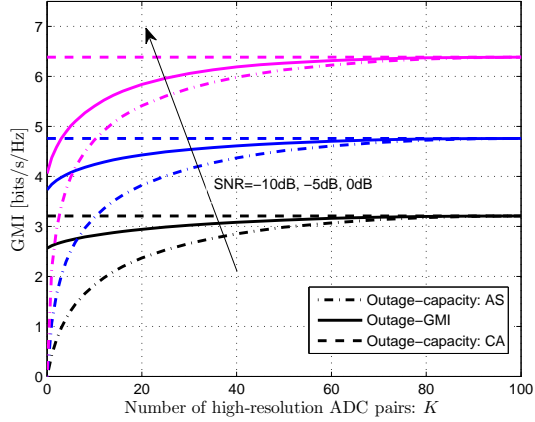}
\caption{Outage-GMI of the mixed-ADC architecture for different numbers of high-resolution ADC pairs, $N=100$, $P_{\mathrm{out}}=5\%$.}
\label{fig:fig_2}
\end{figure}
\begin{figure}
\centering
\includegraphics[width=0.45\textwidth]{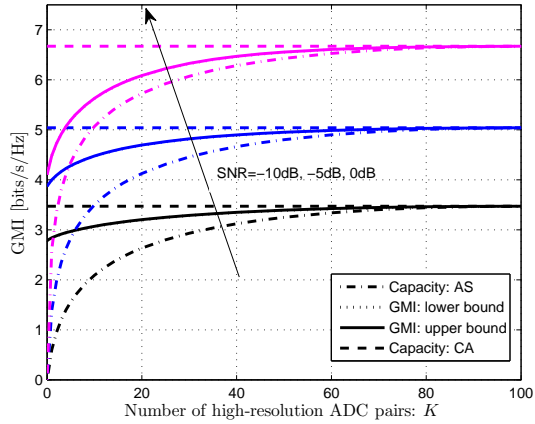}
\caption{Lower and upper bounds of the GMI for ergodic fading channels with perfect CSI at the BS, $N=100$.}
\label{fig:fig_3}
\end{figure}
\begin{figure}
\centering
\includegraphics[width=0.45\textwidth]{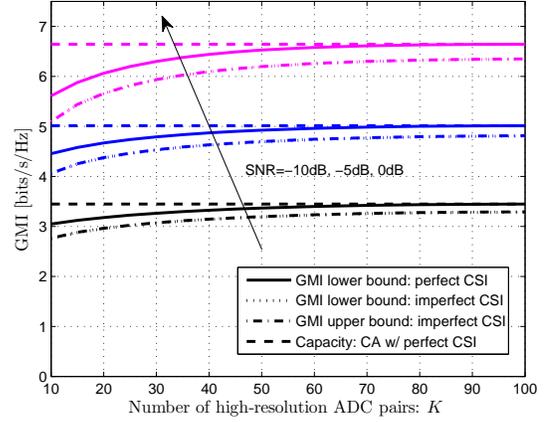}
\caption{GMI of the mixed-ADC architecture with imperfect CSI: impact of $K$, $N=100$, $T = 196$, $\mathrm{MSE}_{\mathrm{t}}=-10$dB.}
\label{fig:fig_4}
\end{figure}
\begin{figure}
\centering
\includegraphics[width=0.45\textwidth]{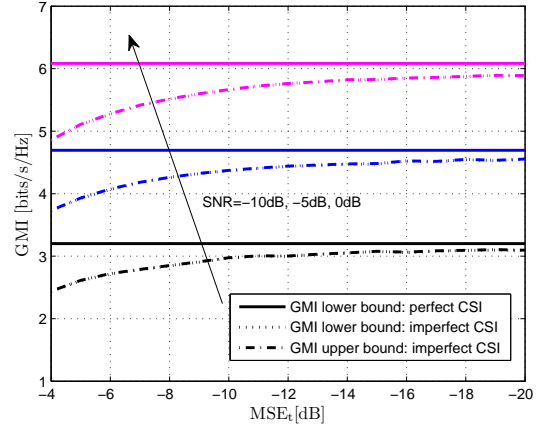}
\caption{GMI of the mixed-ADC architecture with imperfect CSI: impact of $\mathrm{MSE}_{\mathrm{t}}$, $N=100$, $T = 196$, $K=20$.}
\label{fig:fig_5}
\end{figure}
Energy efficiency is sometimes defined as the number of information bits conveyed per joule energy consumption. But this ratio alone does not capture the whole story, since the improvement of energy efficiency is valuable only if a desired spectral efficiency is ensured. For this reason, in this paper we characterize the energy efficiency using two performance metrics: normalized spectral efficiency and normalized energy consumption. Taking the mixed-ADC architecture as an example, these two performance metrics are defined as
\begin{equation}
\bar{R}_{\mathrm{MA}}=\frac{I_{\mathrm{GMI}}}{\mathbb{E}_{\mathbf{h}}[\log(1+\|\mathbf{h}\|^2\mathcal{E}_{\mathrm{s}})]},\ \ \bar{E}_{\mathrm{MA}}=\frac{P_{\mathrm{MA}}}{P_{\mathrm{CA}}},
\end{equation}
in the single-user scenario under ergodic fading. That is, we simultaneously compare the spectral efficiency and the energy efficiency of the mixed-ADC architecture against those of the conventional architecture. These performance metrics can also be straightforwardly defined for antenna selection and for multi-user systems (there the sum achievable rates are used in $\bar{R}_{\mathrm{MA}}$).

\section{Numerical Results}
\label{sect:numerical}
In this section we validate our previous analysis with numerical results. Except for the first subsection, all the results in this section are for ergodic fading channels. The channel coefficients are drawn i.i.d. from $\mathcal{CN}(0,1)$. We deem $\mathrm{SNR}$ that achieves 5 bits/s/Hz for single-user scenario or 2.5 bits/s/Hz per user for multi-user scenario as a moderate SNR \cite{TR36.912}.

\subsection{Outage-GMI for Random but Fixed SIMO Channel}
We first examine the outage performance of the mixed-ADC architecture. In this situation, the channel vector is random but fixed ever since it is chosen. Figure \ref{fig:fig_2} displays the outage-GMI\footnote{The outage-GMI is defined as the largest GMI at a specified outage probability $P_{\mathrm{out}}$. In this subsection, both the outage-GMI and the outage-capacity are obtained by running 1000 Monte Carlo simulations.} for $P_{\mathrm{out}}=5\%$. Several observations are in order. First, Figure \ref{fig:fig_2} shows that the mixed-ADC architecture with a small number of high-resolution ADCs achieves a large fraction of the outage-capacity of the conventional architecture. For example, when $\mathrm{SNR}=0\mathrm{dB}$, the mixed-ADC architecture with $K = 10$ attains 85\% of the outage-capacity of the conventional architecture, and this number rises to 92\% when $K = 20$. Besides, it indicates that one-bit ADCs are less beneficial when the SNR grows large, but significantly improve the performance in the low to moderate SNR regime, compared with antenna selection.
\begin{figure}
\centering
\includegraphics[width=0.45\textwidth]{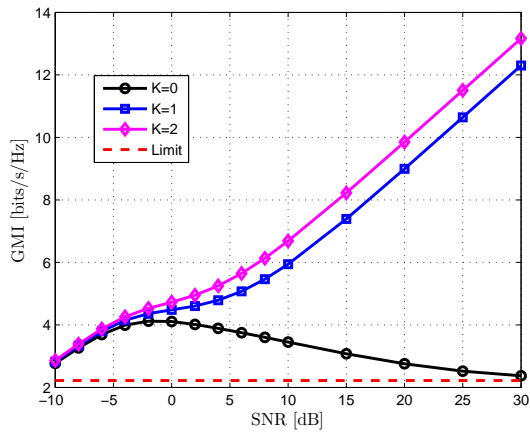}
\caption{GMI lower bound of the mixed-ADC architecture for ergodic fading channels, $N=100$.}
\label{fig:fig_6}
\end{figure}
\begin{figure}
\centering
\includegraphics[width=0.45\textwidth]{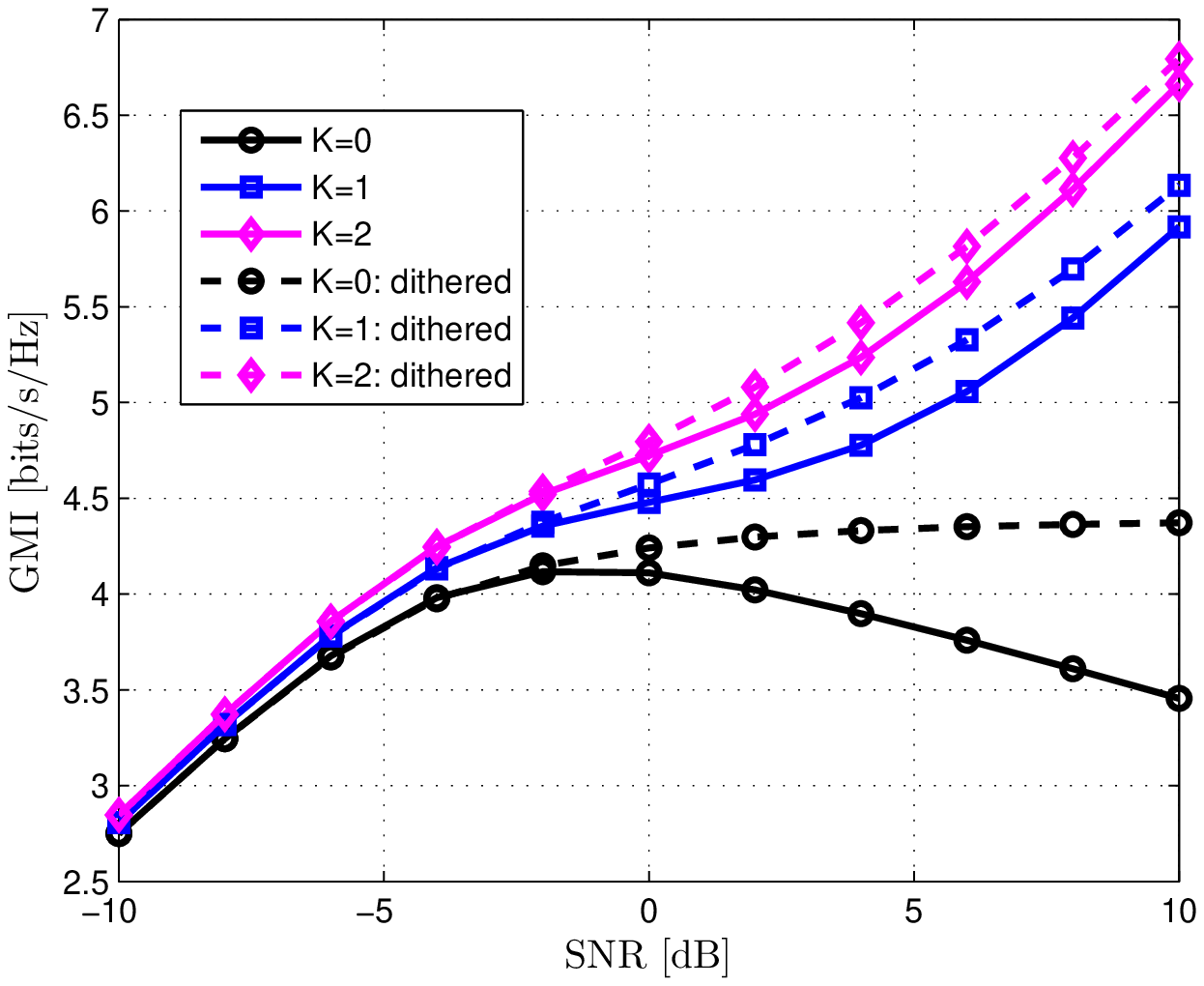}
\caption{GMI lower bound of the mixed-ADC architecture for ergodic fading channels with Gaussian dithering, $N=100$.}
\label{fig:fig_7}
\end{figure}
\begin{figure}
\centering
\includegraphics[width=0.45\textwidth]{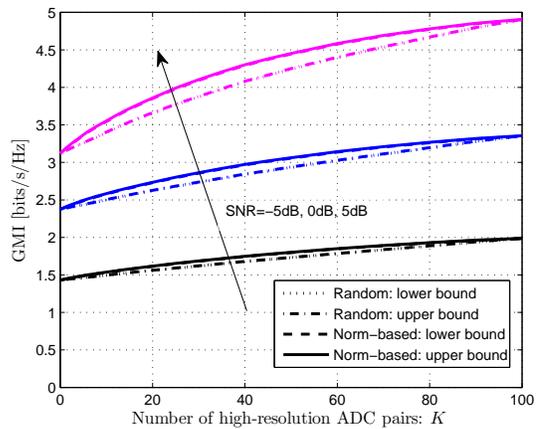}
\caption{Per-user GMI under ergodic fading: comparison of random and norm-based ADC switch schemes, $N=100$, $M=10$.}
\label{fig:fig_8}
\end{figure}
\begin{figure}
\centering
\includegraphics[width=0.45\textwidth]{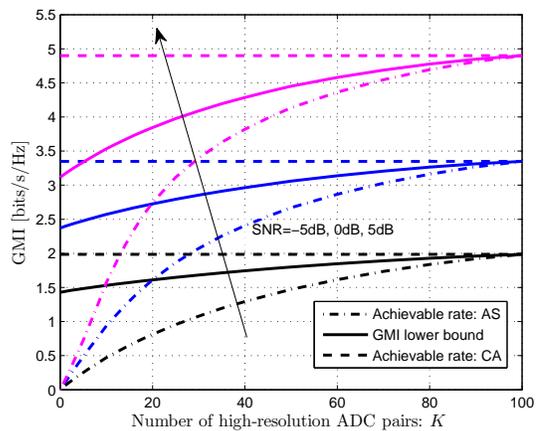}
\caption{Per-user GMI under ergodic fading: comparison with conventional architecture and antenna selection, $N=100$, $M=10$.}
\label{fig:fig_9}
\end{figure}

\subsection{GMI for Ergodic Fading SIMO Channel}
By Figure \ref{fig:fig_3}, we first examine the tightness of the lower and upper bounds derived in Proposition \ref{prop:prop_6}. It is clear that the lower and upper bounds virtually coincide with each other, and as a result, it is sufficient to use only the GMI lower bound in the following numerical study for spectral efficiency evaluation.

Then, we turn to check the impact of imperfect CSI on the performance. Numerical results are given by Figure \ref{fig:fig_4} assuming $\mathrm{MSE}_{\mathrm{t}}=-10\mathrm{dB}$, indicating that the gap between lower and upper bounds is still virtually negligible. On the other hand, though there is a noticeable rate loss due to channel estimation error, the mixed-ADC architecture with a small number of high-resolution ADCs still achieves much of the the channel capacity of the conventional architecture with perfect CSI. Besides, Figure \ref{fig:fig_5} accounts for the impact of $\mathrm{MSE}_{\mathrm{t}}$ on the performance, from which we again conclude that the mixed-ADC architecture is robust against imperfect CSI.

\subsection{Performance Gain of Gaussian Dithering}
Figure \ref{fig:fig_6} accounts for the effect of SNR on the GMI lower bound of ergodic fading channels, with special focus on small $K$. For the special case of $K=0$, we observe that $I_{\mathrm{GMI}}^{\mathrm{lower}}$ increases first but then turns downward as the SNR grows large. Besides, as predicted by Corollary \ref{cor:cor_3}, $I_{\mathrm{GMI}}^{\mathrm{lower}}$ asymptotically approaches a positive limit illustrated by the dashed line. The reason underlying this phenomenon is that the amplitude of the transmit signal cannot be recovered at the receiver when the SNR is sufficiently large with only one-bit ADCs \cite{jacobsson2015one}. With merely one pair of high-resolution ADCs, $I_{\mathrm{GMI}}^{\mathrm{lower}}$ is always increasing with $\mathrm{SNR}$, and increases linearly with respect to $10\log_{10}(\mathrm{SNR})$ in the high SNR regime as predicted by Corollary \ref{cor:cor_3}. In addition, even though the rate loss due to pure one-bit quantization is significant in the high SNR regime, the GMI in the low SNR regime closely approaches those of $K>0$, as predicted by Corollary \ref{cor:cor_2}.
\begin{figure*}
\centering
\subfigure[$\mathrm{SNR}=-10\ \mathrm{dB}$]{\label{fig:fig_11_a}
\includegraphics[width=0.32 \textwidth]{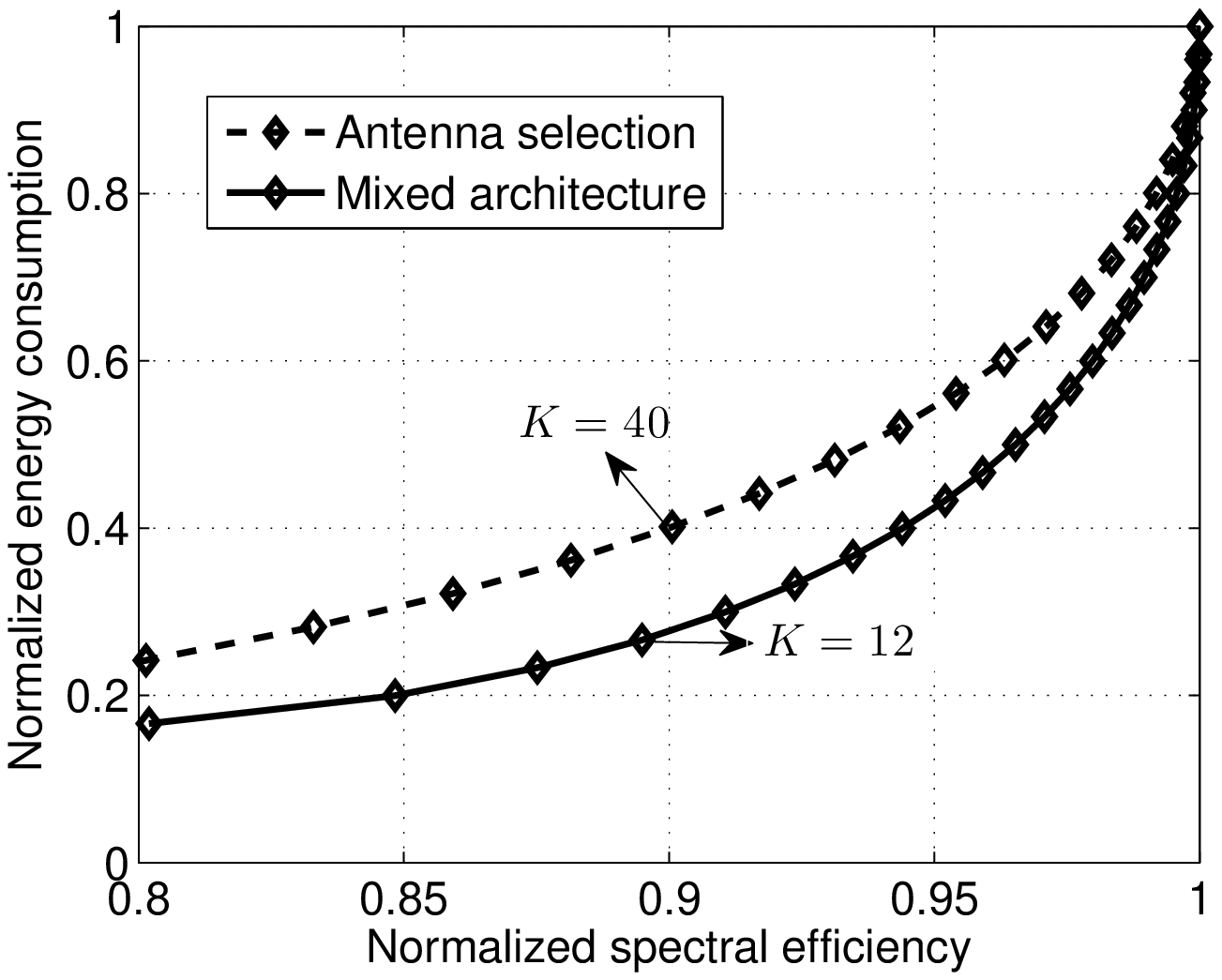}}
\subfigure[$\mathrm{SNR}=-5\ \mathrm{dB}$]{\label{fig:fig_11_b}
\includegraphics[width=0.32 \textwidth]{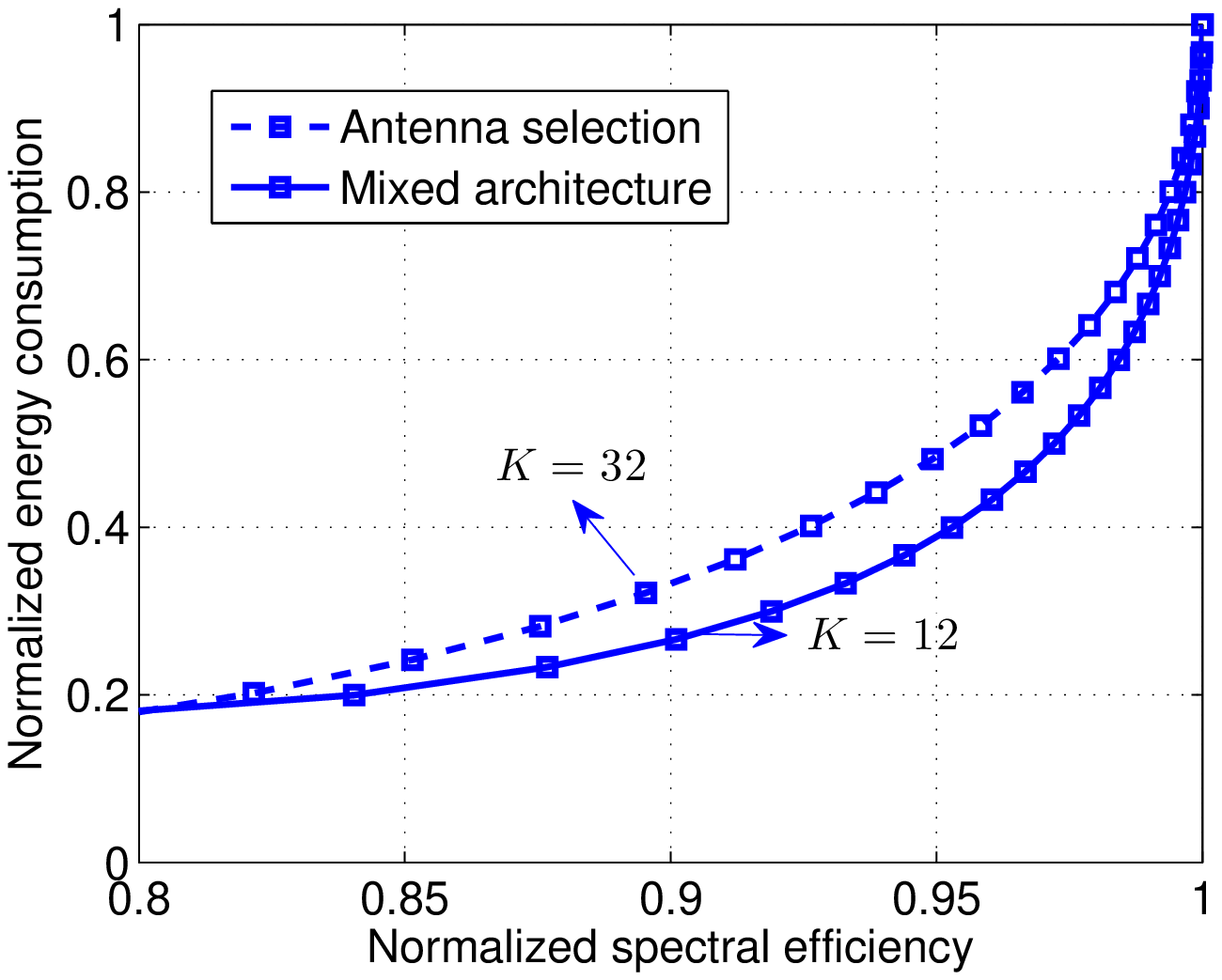}}
\subfigure[$\mathrm{SNR}=0\ \mathrm{dB}$]{\label{fig:fig_11_c}
\includegraphics[width=0.32 \textwidth]{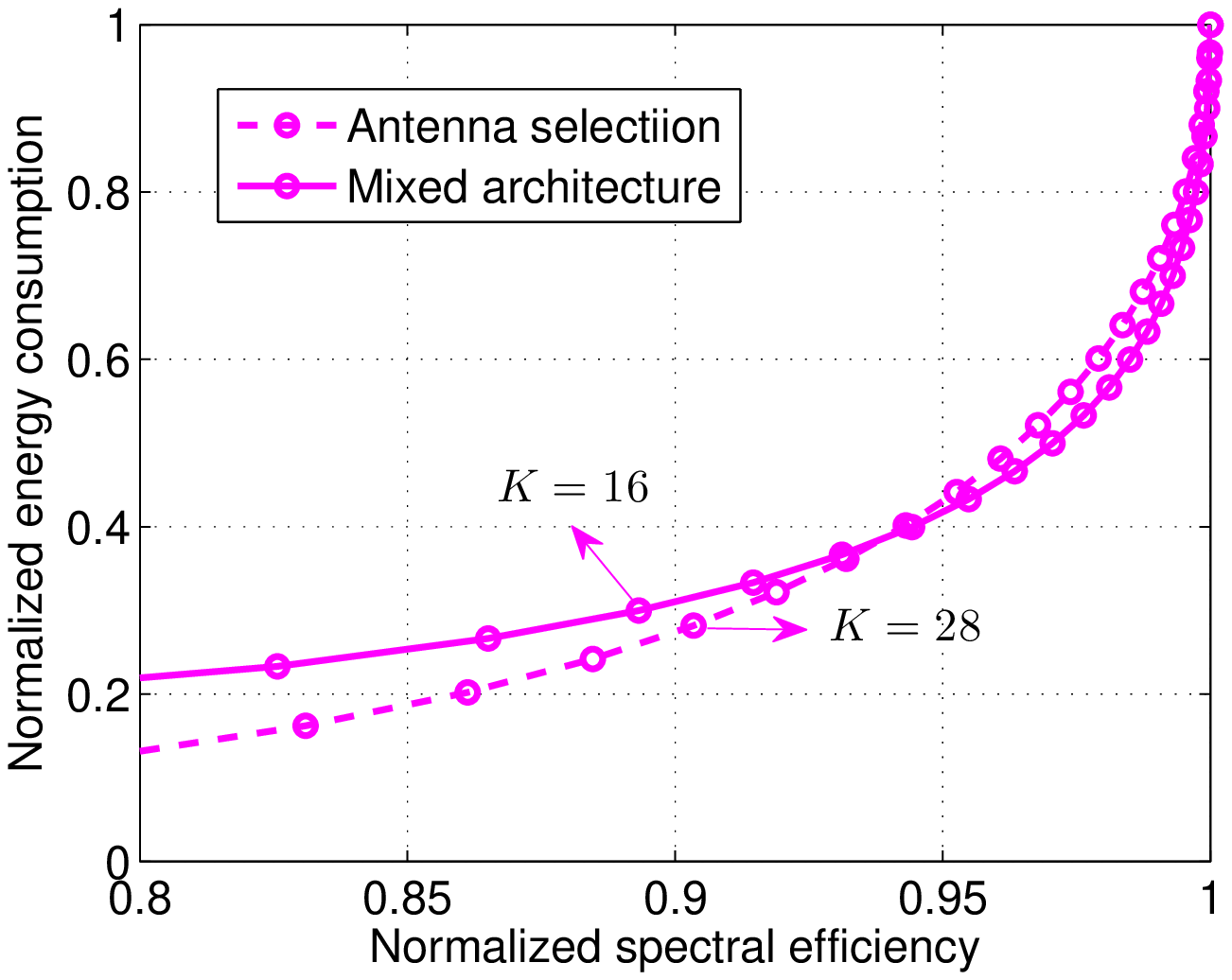}}
\caption{Energy efficiency comparison in the single-user scenario, $N=100$.}
\label{fig:fig_10}
\end{figure*}
\begin{figure*}
\centering
\subfigure[$\mathrm{SNR}=-5\ \mathrm{dB}$]{\label{fig:fig_12_a}
\includegraphics[width=0.32 \textwidth]{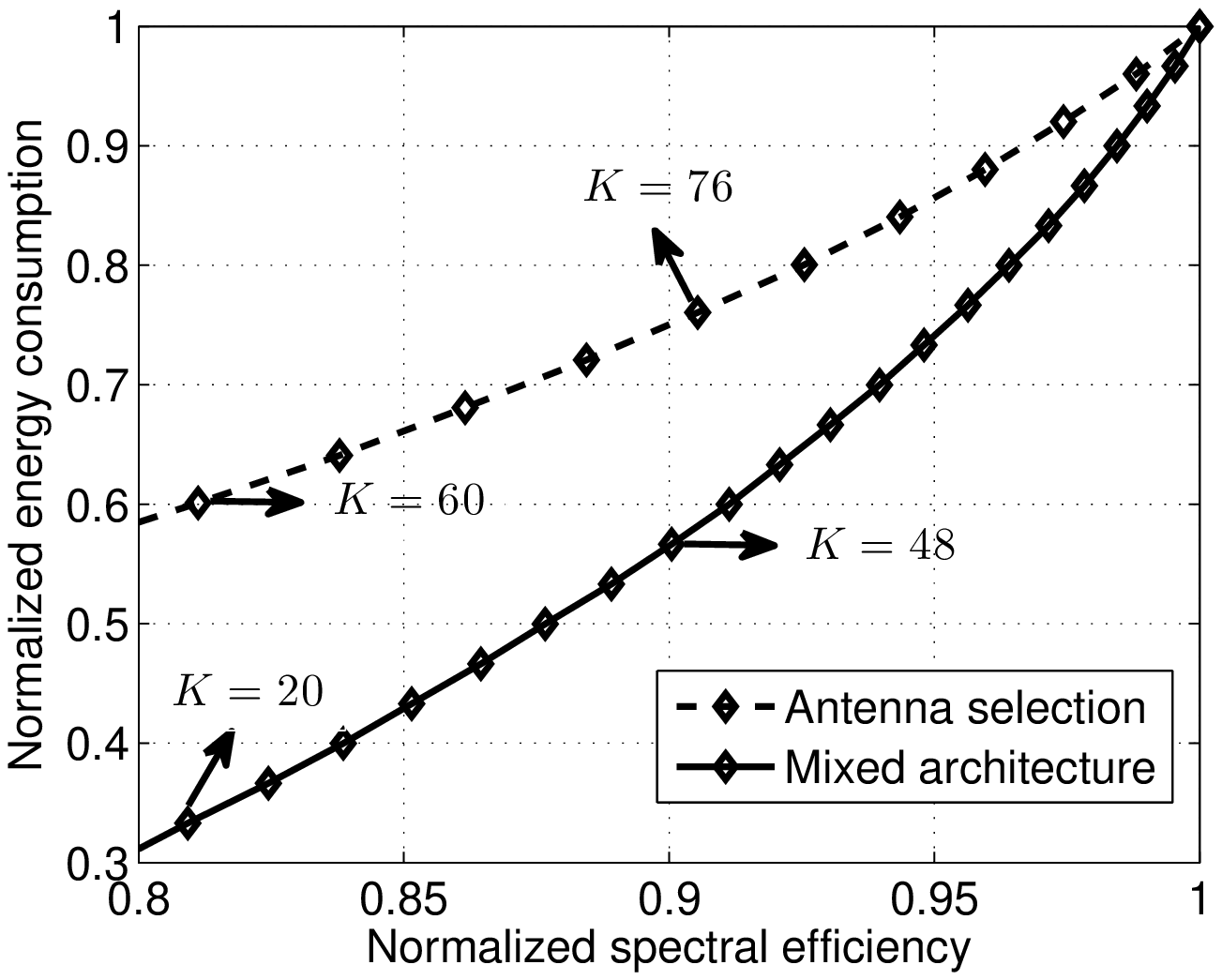}}
\subfigure[$\mathrm{SNR}=0\ \mathrm{dB}$]{\label{fig:fig_12_b}
\includegraphics[width=0.32 \textwidth]{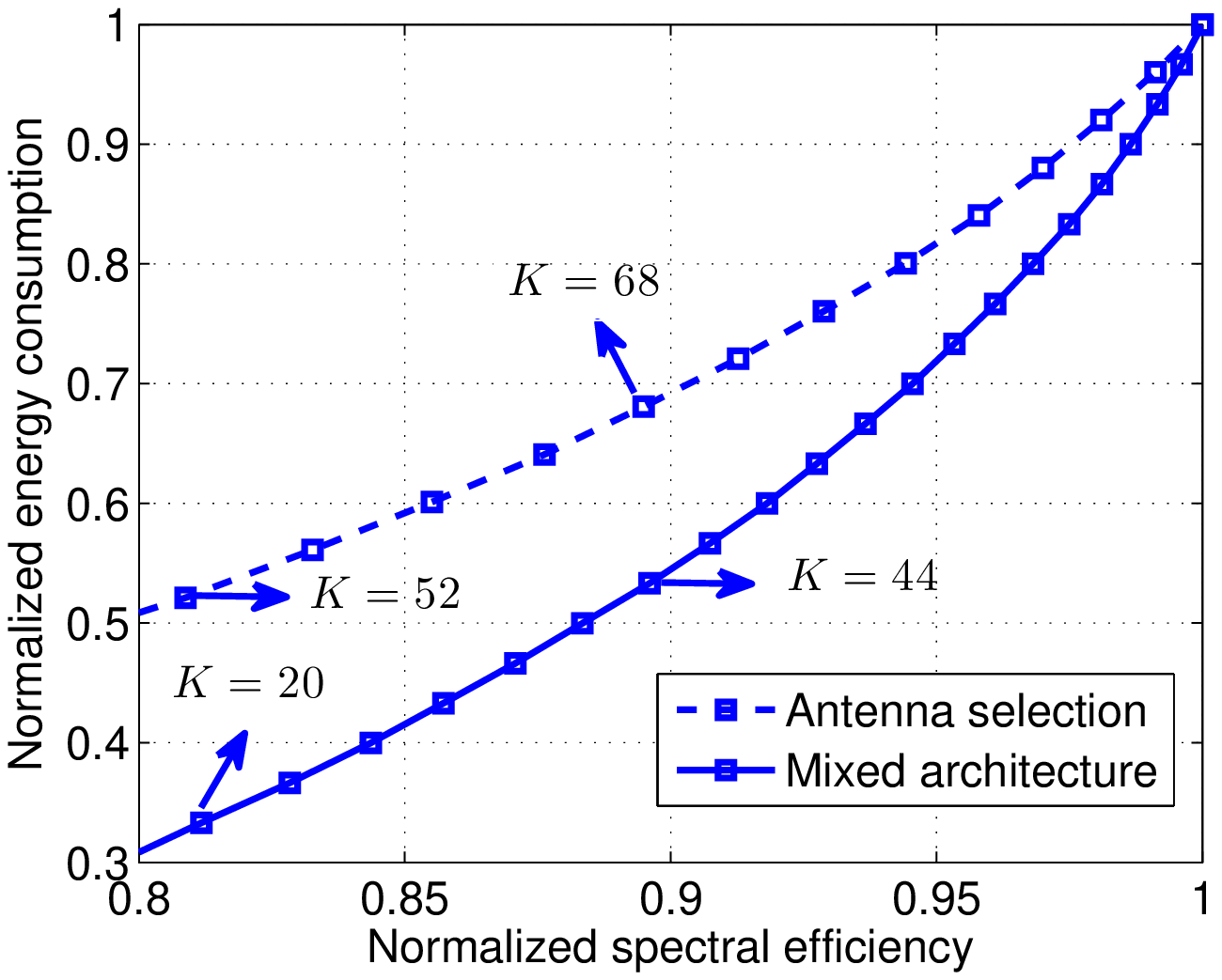}}
\subfigure[$\mathrm{SNR}=5\ \mathrm{dB}$]{\label{fig:fig_12_c}
\includegraphics[width=0.32 \textwidth]{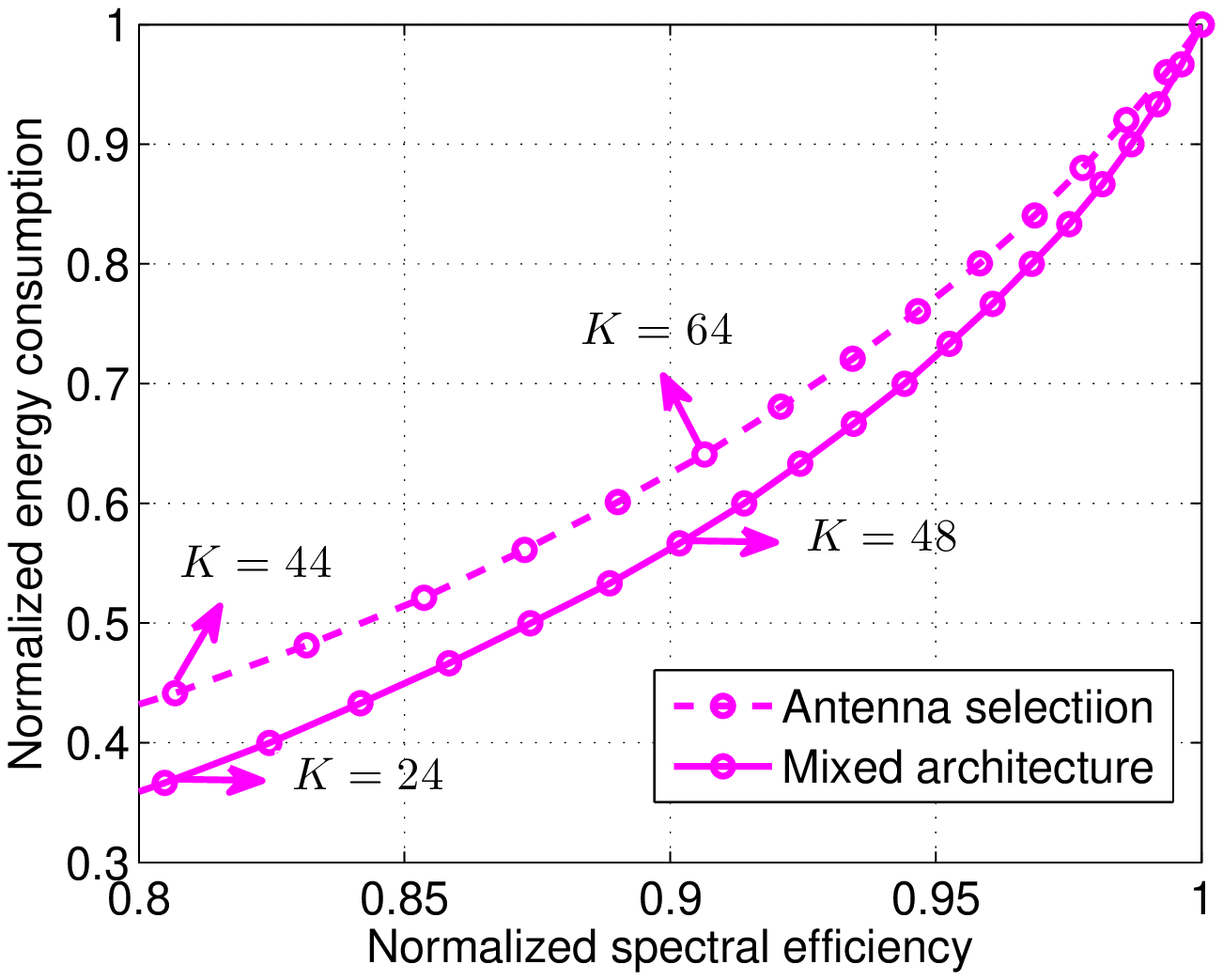}}
\caption{Energy efficiency comparison in the multi-user scenario, $N=100$, $M=10$.}
\label{fig:fig_11}
\end{figure*}

Then we examine the performance gain of Gaussian dithering. For given $N$ and $\mathrm{SNR}$, we optimize the threshold $\mathcal{T}$ assuming $K=0$, and then take the resulting $\mathcal{T}_{\mathrm{opt}}$ to evaluate the performance gain with $K>0$. Figure \ref{fig:fig_7} indicates that Gaussian dithering are able to achieve promising improvement in the spectral efficiency, especially for the case of $K=0$. Increasing either $K$ or $\mathrm{SNR}$, however, the benefit of dithering for $K>0$ decays gradually, since the contribution of high-resolution ADCs tends to be dominating.

\subsection{GMI for Ergodic Fading MU-MIMO Channel}
Now, we examine the feasibility of the mixed-ADC architecture in the multi-user scenario. The performance comparison between random and norm-based ADC switch schemes is given by Figure \ref{fig:fig_8}. We notice that though the norm-based ADC switch is only analytically validated in low SNR regime, it does achieve better performance. Moreover, the lower and upper bounds of the GMI for each scheme still virtually coincide with each other.

Figure \ref{fig:fig_9} compares the achievable spectral efficiency of the mixed-ADC architecture with that of conventional architecture and antenna selection (using linear MMSE receiver for a fair comparison). Similar to the conclusion we obtained for the single-user scenario, here the mixed-ADC architecture with a small number of high-resolution ADCs also attains a large fraction of the rate of conventional architecture. As a numerical evidence, when $\mathrm{SNR}=0$ dB and $N=100$, norm-based ADC switch with $K=10$ achieves 77\% of the per-user rate of conventional architecture, and this number rises to 81\% when we have $K=20$. Meanwhile, the mixed-ADC architecture also achieves a noticeably higher spectral efficiency than antenna selection.

\subsection{Energy Efficiency}
We evaluate the energy efficiency improvement of the mixed-ADC architecture as well as antenna selection, taking conventional architecture as a baseline. We emphasize that spectral efficiency should never be excessively sacrificed for energy efficiency, thus confining the normalized spectral efficiency to 80\% - 100\%.

Figure \ref{fig:fig_10} illustrates the numerical results for a single-user system. We notice that, if 10\% spectral efficiency degradation is allowed, then antenna selection can achieve more than 60\% energy consumption reduction, and beyond that, the mixed-ADC architecture can further reduce the energy consumption by about 10\%, in low to moderate SNR regime. Besides, it is perhaps worth noting that, in the high SNR regime, antenna selection may achieve higher energy efficiency than the mixed-ADC architecture, since now one-bit ADCs are getting less beneficial as demonstrated by Corollary \ref{cor:cor_3}.

Regarding the multi-user scenario, Figure \ref{fig:fig_11} reveals more pronounced superiority of the mixed-ADC architecture over antenna selection. The mixed-ADC architecture always outperforms antenna selection throughout the considered SNR range, and we note that the gap will further increase as the system load (i.e., the number of users $M$) increases. For the system parameters in Figure \ref{fig:fig_11}, it appears that spectral efficiency and energy efficiency arrive at an attractive tradeoff at $K\approx 20$, where we sacrifice a 20\% loss on spectral efficiency to trade for a 70\% reduction on energy consumption.

\section{Conclusion}
\label{sect:conclusion}
The numerous BS antennas enable massive MIMO systems to achieve unprecedented gains in both spectral efficiency and radiated energy efficiency, but also make the hardware cost and circuit power consumption increase unbearably, demanding energy-efficient design of transceivers. In this paper, we propose a mixed-ADC receiver architecture for the uplink, and leverage GMI to analytically evaluate its achievable data rates under various scenarios. Numerical results demonstrate that the mixed-ADC architecture with a relatively small number of high-resolution ADCs is able to achieve a large fraction of the channel capacity of conventional architecture, while reduce the energy consumption considerably even compared with antenna selection, for both single-user and multi-user scenarios. We envision the mixed-ADC architecture as a compelling choice for energy-efficient massive MIMO systems.

A number of interesting and important problems remain unsolved beyond this paper, such as designing the optimal ADC switch scheme for any SNR, especially for the multi-user scenario; making full use of the available one-bit ADCs when acquiring the CSI; extending the analysis to hardware impairment models besides ADC; among others. Additionally, in order to make this approach effective for wideband channels which are more prevailing in the future communication systems, it is particularly crucial to extend the analysis to frequency-selective fading channels. When one adopts multi-carrier transceiver architectures like OFDM, since one-bit ADCs are applied in the time domain rather than the frequency domain, severe inter-carrier interference due to quantization is inevitable and thus the decoder needs to properly account for this, say, by using a vectorized nearest-neighbor decoding algorithm and evaluating the resulting GMI. This is feasible but beyond the scope of this paper, and is currently treated in a separate work.

\section*{Appendix}
\subsection{Derivation of $\kappa(\mathbf{w},\bm{\delta})$}
\label{app:app_1}
We first introduce two lemmas that will help us derive a closed-form expression of $\kappa(\mathbf{w},\bm{\delta})$.
\begin{lem}
\label{lem:lem_1}
For zero-mean real Gaussian random variables $S$ and $T$ with covariance matrix $\mathbf{K}$, letting $\phi(s,t)$ denote their joint probability density function (PDF) and $\rho$ represent their correlation coefficient, we have
\begin{equation}
\mathbb{E}[\mathrm{sgn}(S)\cdot\mathrm{sgn}(T)]=\frac{2}{\pi}\mathrm{arcsin}(\rho).
\label{equ:equ_12}
\end{equation}
\end{lem}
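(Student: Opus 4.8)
The plan is to show that the left-hand side depends on the joint law of $(S,T)$ only through $\rho$, and then to evaluate it by a rotational-symmetry (orthant-probability) argument. First I would observe that $\mathrm{sgn}(\cdot)$ is invariant under scaling by any positive constant, so replacing $S$ and $T$ by $S/\sqrt{\mathrm{Var}(S)}$ and $T/\sqrt{\mathrm{Var}(T)}$ leaves $\mathbb{E}[\mathrm{sgn}(S)\mathrm{sgn}(T)]$ unchanged. Hence we may assume without loss of generality that $S$ and $T$ are standard normal with correlation $\rho$, i.e. $\mathbf{K}=\bigl(\begin{smallmatrix} 1 & \rho \\ \rho & 1\end{smallmatrix}\bigr)$, so that only $\rho$ can appear in the answer.

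Next I would convert the expectation into an orthant probability. Since $\mathrm{sgn}(S)\mathrm{sgn}(T)$ equals $+1$ when $S$ and $T$ share a sign and $-1$ otherwise, and the event $\{ST=0\}$ has probability zero, we have
\[
\mathbb{E}[\mathrm{sgn}(S)\mathrm{sgn}(T)]=P(ST>0)-P(ST<0)=2P(ST>0)-1.
\]
The centered Gaussian law is invariant under $(S,T)\mapsto(-S,-T)$, so $P(S>0,T>0)=P(S<0,T<0)$ and therefore $P(ST>0)=2P(S>0,T>0)$. This reduces the target to computing a single orthant probability and forming $4P(S>0,T>0)-1$.

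The key step is the orthant probability itself. I would use the representation $S=X$ and $T=\rho X+\sqrt{1-\rho^2}\,Y$ with $X,Y$ independent standard normal, whose joint density is rotationally symmetric. Writing $(X,Y)=(r\cos\alpha,r\sin\alpha)$ and setting $\rho=\sin\psi$ with $\psi=\arcsin\rho$, the two half-plane conditions become $\cos\alpha>0$ and $\sin(\alpha+\psi)>0$, i.e. $\alpha\in(-\pi/2,\pi/2)$ and $\alpha\in(-\psi,\pi-\psi)$. Their intersection is an angular sector $(-\psi,\pi/2)$ of width $\pi/2+\psi$, and by rotational symmetry the probability equals this width divided by $2\pi$:
\[
P(S>0,T>0)=\frac{\pi/2+\arcsin\rho}{2\pi}=\frac14+\frac{1}{2\pi}\arcsin\rho.
\]
Substituting into $4P(S>0,T>0)-1$ yields exactly $\frac{2}{\pi}\arcsin\rho$, as claimed.

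The main obstacle is the non-smoothness of $\mathrm{sgn}$, which makes a direct integration of $\mathrm{sgn}(s)\mathrm{sgn}(t)\phi(s,t)$ awkward; the rotational-symmetry route sidesteps this by reducing everything to measuring an angle, and the only care needed is to confirm that the boundary events $\{S=0\}$ and $\{T=0\}$ carry zero probability so the sector endpoints are irrelevant. As an alternative that also circumvents the non-smoothness, I note the Price-theorem route: differentiating $g(\rho)=\mathbb{E}[\mathrm{sgn}(S)\mathrm{sgn}(T)]$ in $\rho$ and using the bivariate-Gaussian identity $\partial_\rho\phi=\partial_s\partial_t\phi$ together with $\mathrm{sgn}'=2\delta$ gives $g'(\rho)=4\phi(0,0)=\frac{2}{\pi\sqrt{1-\rho^2}}$, which integrated from $g(0)=0$ recovers the same formula; there the obstacle shifts to justifying the distributional integration by parts.
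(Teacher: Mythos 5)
Your proof is correct, and it shares with the paper the same pivotal reduction but differs in how the key orthant probability is established. The paper's own proof first invokes an external result (Prop.~2 of the cited Koch--Lapidoth work) for the identity $\int_0^\infty\!\int_0^\infty \phi(s,t)\,\mathrm{d}t\,\mathrm{d}s = \tfrac{1}{4} + \tfrac{1}{2\pi}\arcsin(\rho)$, and then applies the point symmetry $\phi(s,t)=\phi(-s,-t)$ to write $\mathbb{E}[\mathrm{sgn}(S)\,\mathrm{sgn}(T)] = 4\int_0^\infty\!\int_0^\infty \phi\,\mathrm{d}t\,\mathrm{d}s - 1$. You perform the identical symmetry reduction (phrased probabilistically as $2P(ST>0)-1 = 4P(S>0,T>0)-1$), but instead of citing the orthant formula you derive it: after normalizing to unit variances, you represent $T=\rho X+\sqrt{1-\rho^2}\,Y$ with $X,Y$ independent standard normals and use rotational invariance to reduce the orthant probability to the angular width $\tfrac{\pi}{2}+\arcsin\rho$ of a sector, divided by $2\pi$. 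The sector computation is correct (the intersection of $\alpha\in(-\pi/2,\pi/2)$ with $\alpha\in(-\psi,\pi-\psi)$ is indeed $(-\psi,\pi/2)$ for $\psi\in[-\pi/2,\pi/2]$), so your argument is self-contained where the paper's leans on a citation; in effect you have reproduced the proof of the cited result, which makes the lemma verifiable without consulting the reference, at the cost of a slightly longer write-up. Your alternative route via Price's theorem is also sound and gives the same arcsine law, though, as you acknowledge, it trades the geometric argument for the burden of justifying the distributional identity $\mathrm{sgn}'=2\delta$ inside the expectation.
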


\begin{proof}
Applying \cite[Prop. 2]{koch2010increased}, we obtain the following relationship,
\begin{equation}
\int_{0}^{\infty}\int_{0}^{\infty}\phi(s,t)\mathrm{d}t\mathrm{d}s=
\frac{1}{4}+\frac{1}{2\pi}\mathrm{arcsin}(\rho).
\label{equ:equ_13}
\end{equation}
Then exploiting the symmetry of $\phi(s,t)$, it is straightforward to verify that
\begin{eqnarray}
&&\mathbb{E}[\mathrm{sgn}(S)\cdot\mathrm{sgn}(T)]\nonumber\\
&=&\iint_{st>0}  \phi(s,t)\mathrm{d}t\mathrm{d}s-\iint_{st<0}\phi(s,t)\mathrm{d}t\mathrm{d}s\nonumber\\
&=&2\iint_{st>0}\phi(s,t)\mathrm{d}t\mathrm{d}s-1\nonumber\\
&=&4\int_{0}^{\infty}\int_{0}^{\infty}\phi(s,t)\mathrm{d}t\mathrm{d}s-1\nonumber\\
&=&\frac{2}{\pi}\mathrm{arcsin}(\rho).
\label{equ:equ_14}
\end{eqnarray}
\end{proof}

\begin{lem}
\label{lem:lem_2}
For independent complex Gaussian random variables $S\sim\mathcal{CN}(0,\sigma_s^2)$ and $T\sim\mathcal{CN}(0,\sigma_t^2)$, we have
\begin{eqnarray}
\mathbb{E}[S^*\cdot\mathrm{sgn}(S+T)]&=&\mathbb{E}[S\cdot\mathrm{sgn}^*(S+T)]\nonumber\\
&=&\sigma_s^2\sqrt{\frac{4}{\pi(\sigma_s^2+\sigma_t^2)}}.
\label{equ:equ_15}
\end{eqnarray}
\end{lem}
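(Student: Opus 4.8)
The plan is to reduce the complex expectation to real-valued computations by exploiting that the real and imaginary parts of a circularly symmetric complex Gaussian are independent and identically distributed. Writing $S=S^{\mathrm{R}}+i\,S^{\mathrm{I}}$ and $T=T^{\mathrm{R}}+i\,T^{\mathrm{I}}$, the four real components $S^{\mathrm{R}},S^{\mathrm{I}},T^{\mathrm{R}},T^{\mathrm{I}}$ are mutually independent, with $S^{\mathrm{R}},S^{\mathrm{I}}\sim\mathcal{N}(0,\sigma_s^2/2)$ and $T^{\mathrm{R}},T^{\mathrm{I}}\sim\mathcal{N}(0,\sigma_t^2/2)$. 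Using $\mathrm{sgn}(S+T)=\mathrm{sgn}(S^{\mathrm{R}}+T^{\mathrm{R}})+i\,\mathrm{sgn}(S^{\mathrm{I}}+T^{\mathrm{I}})$ and $S^{*}=S^{\mathrm{R}}-i\,S^{\mathrm{I}}$, I would expand the product $S^{*}\cdot\mathrm{sgn}(S+T)$ and collect its real and imaginary parts explicitly.

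First I would dispose of the imaginary part. Its two constituent terms are of the form $\mathbb{E}[S^{\mathrm{R}}\,\mathrm{sgn}(S^{\mathrm{I}}+T^{\mathrm{I}})]$ and $\mathbb{E}[S^{\mathrm{I}}\,\mathrm{sgn}(S^{\mathrm{R}}+T^{\mathrm{R}})]$; in each the leading factor is independent of the sign argument and has zero mean, so both vanish. Thus the whole expectation is real, which immediately yields the second claimed equality, since $\mathbb{E}[S\,\mathrm{sgn}^{*}(S+T)]=\overline{\mathbb{E}[S^{*}\,\mathrm{sgn}(S+T)]}$ and a real quantity equals its own conjugate.

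The substance is then the real part, which by symmetry between the $\mathrm{R}$ and $\mathrm{I}$ contributions equals $2\,\mathbb{E}[S^{\mathrm{R}}\,\mathrm{sgn}(S^{\mathrm{R}}+T^{\mathrm{R}})]$. To evaluate this I would set $U=S^{\mathrm{R}}$ and $V=S^{\mathrm{R}}+T^{\mathrm{R}}$, a jointly Gaussian zero-mean pair, and use the regression decomposition $U=\alpha V+W$ with $\alpha=\mathrm{Cov}(U,V)/\mathrm{Var}(V)$ and $W$ independent of $V$ with zero mean. Then $\mathbb{E}[U\,\mathrm{sgn}(V)]=\alpha\,\mathbb{E}[V\,\mathrm{sgn}(V)]+\mathbb{E}[W]\,\mathbb{E}[\mathrm{sgn}(V)]=\alpha\,\mathbb{E}[|V|]$, reducing everything to the first absolute moment of a Gaussian. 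Here $\mathrm{Cov}(U,V)=\sigma_s^2/2$ and $\mathrm{Var}(V)=(\sigma_s^2+\sigma_t^2)/2$, so $\alpha=\sigma_s^2/(\sigma_s^2+\sigma_t^2)$, while $\mathbb{E}[|V|]=\sqrt{2\,\mathrm{Var}(V)/\pi}=\sqrt{(\sigma_s^2+\sigma_t^2)/\pi}$.

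Combining gives $\mathbb{E}[S^{\mathrm{R}}\,\mathrm{sgn}(S^{\mathrm{R}}+T^{\mathrm{R}})]=\sigma_s^2/\sqrt{\pi(\sigma_s^2+\sigma_t^2)}$, and doubling produces exactly $\sigma_s^2\sqrt{4/(\pi(\sigma_s^2+\sigma_t^2))}$, as claimed. The only genuinely delicate step is the regression/conditioning argument that collapses $\mathbb{E}[U\,\mathrm{sgn}(V)]$ to $\alpha\,\mathbb{E}[|V|]$; everything else is bookkeeping on independent Gaussians. An alternative would be a direct two-dimensional Gaussian integral in polar coordinates, but the conditioning approach sidesteps that integral entirely and keeps the computation elementary.
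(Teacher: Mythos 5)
Your proof is correct, and its skeleton matches the paper's: both expand $S^{*}\cdot\mathrm{sgn}(S+T)$ into its four real/imaginary constituent terms, kill the two cross terms by independence (e.g., $S^{\mathrm{R}}$ is independent of $S^{\mathrm{I}}+T^{\mathrm{I}}$ and has zero mean), and reduce the problem to twice the scalar quantity $\mathbb{E}[S^{\mathrm{R}}\,\mathrm{sgn}(S^{\mathrm{R}}+T^{\mathrm{R}})]$. Where you differ is in how that scalar expectation is evaluated: the paper simply cites Eq.~(19) of the reference by Zhang (``A general framework for transmission with transceiver distortion\ldots''), whereas you derive it from scratch via the Gaussian regression decomposition $U=\alpha V+W$ with $\alpha=\mathrm{Cov}(U,V)/\mathrm{Var}(V)$ and $W\perp V$, which collapses $\mathbb{E}[U\,\mathrm{sgn}(V)]$ to $\alpha\,\mathbb{E}[|V|]=\alpha\sqrt{2\,\mathrm{Var}(V)/\pi}$. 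Your route is self-contained and elementary --- it replaces an external citation with a three-line conditioning argument --- while the paper's route is shorter on the page but opaque unless the reader chases the reference. You also make explicit something the paper leaves implicit: the first equality $\mathbb{E}[S^{*}\mathrm{sgn}(S+T)]=\mathbb{E}[S\,\mathrm{sgn}^{*}(S+T)]$ holds because the common value is real, being the conjugate relation between the two expectations. All constants check out: $\alpha=\sigma_s^2/(\sigma_s^2+\sigma_t^2)$, $\mathbb{E}[|V|]=\sqrt{(\sigma_s^2+\sigma_t^2)/\pi}$, and doubling gives $\sigma_s^2\sqrt{4/(\pi(\sigma_s^2+\sigma_t^2))}$ as claimed.
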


\begin{proof}
With some manipulation, we have
\begin{eqnarray}
&&\mathbb{E}[S^*\cdot\mathrm{sgn}(S+T)]\nonumber\\
&=&\mathbb{E}[S^{\mathrm{R}}\cdot\mathrm{sgn}(S^{\mathrm{R}}+T^{\mathrm{R}})]
+\mathbb{E}[S^{\mathrm{I}}\cdot\mathrm{sgn}(S^{\mathrm{I}}+T^{\mathrm{I}})]+\nonumber\\
&&i\cdot\mathbb{E}[S^{\mathrm{R}}\cdot\mathrm{sgn}(S^{\mathrm{I}}+T^{\mathrm{I}})]
-i\cdot\mathbb{E}[S^{\mathrm{I}}\cdot\mathrm{sgn}(S^{\mathrm{R}}+T^{\mathrm{R}})]\nonumber\\
&\overset{(a)}{=}&\frac{\sigma_s^2}{2}\sqrt{\frac{2}{\pi(\sigma_s^2/2+\sigma_t^2/2)}}
+\frac{\sigma_s^2}{2}\sqrt{\frac{2}{\pi(\sigma_s^2/2+\sigma_t^2/2)}}\nonumber\\
&=&\sigma_s^2\sqrt{\frac{4}{\pi(\sigma_s^2+\sigma_t^2)}},
\label{equ:equ_16}
\end{eqnarray}
where (a) follows from \cite[Eq. (19)]{zhang2012general}, the independence between $S^{\mathrm{R}}$ and $S^{\mathrm{I}}+T^{\mathrm{I}}$, as well as between $S^{\mathrm{I}}$ and $S^{\mathrm{R}}+T^{\mathrm{R}}$.
\end{proof}

Now we are ready to evaluate $|\mathbb{E}[f^{*}(x,\mathbf{h},\mathbf{z})\cdot x]|^2$ and $\mathbb{E}[|f(x,\mathbf{h},\mathbf{z})|^2]$. For given $\mathbf{w}$ and $\bm{\delta}$, we have
\begin{equation}
|\mathbb{E}[f^{*}(x,\mathbf{h},\mathbf{z})\cdot x]|^2
=|\mathbf{w}^{T}\mathbf{R}_{\mathbf{r}x}^*|^2
=\mathbf{w}^{H}\mathbf{R}_{\mathbf{r}x}\mathbf{R}_{\mathbf{r}x}^{H}\mathbf{w},
\label{equ:app_2}
\end{equation}
where $\mathbf{R}_{\mathbf{r}x}\triangleq\mathbb{E}[\mathbf{r}x^*]$ is the correlation vector between $\mathbf{r}$ and $x$, whose $n$-th element is
\begin{eqnarray}
&&(\mathbf{R}_{\mathbf{r}x})_n\nonumber\\
&=&\delta_n\cdot\mathbb{E}[x^*\cdot(h_n x+z_n)]+\bar{\delta}_n\cdot\mathbb{E}[x^{*}\cdot\mathrm{sgn}(h_n x+z_n)]\nonumber\\
&\overset{(a)}{=}&\delta_n\cdot h_n\mathcal{E}_\mathrm{s}+\bar{\delta}_n\cdot h_n\mathcal{E}_\mathrm{s}\sqrt{\frac{4}{\pi(|h_n|^2\mathcal{E}_\mathrm{s}+1)}}\nonumber\\
&=&h_n\mathcal{E}_\mathrm{s}\left[\delta_n+\bar{\delta}_n\cdot\sqrt{\frac{4}{\pi(|h_n|^2\mathcal{E}_\mathrm{s}+1)}}\right].
\label{equ:app_3}
\end{eqnarray}
Here, (a) follows directly from Lemma \ref{lem:lem_2}.

On the other hand, it is straightforward that
\begin{equation}
\mathbb{E}[|f(x,\mathbf{h},\mathbf{z})|^2]
=\mathbb{E}[\mathbf{w}^{H}\mathbf{r}\mathbf{r}^{H}\mathbf{w}]
=\mathbf{w}^{H}\mathbf{R}_{\mathbf{r}\mathbf{r}}\mathbf{w},
\label{equ:app_4}
\end{equation}
where $\mathbf{R}_{\mathbf{r}\mathbf{r}}\triangleq\mathbb{E}[\mathbf{r}\mathbf{r}^{H}]$ is the covariance matrix of $\mathbf{r}$. The diagonal elements of $\mathbf{R}_{\mathbf{rr}}$ are given by
\begin{eqnarray}
&&(\mathbf{R}_{\mathbf{rr}})_{n,n}\nonumber\\
&\!\!\!\!=\!\!\!\!&\mathbb{E}[|\delta_n\cdot(h_n x+z_n)+\bar{\delta}_n\cdot\mathrm{sgn}(h_n x+z_n)|^2]\nonumber\\
&\!\!\!\!=\!\!\!\!&\delta_n\cdot\mathbb{E}[|h_n x+z_n|^2]+\bar{\delta}_n\cdot \mathbb{E}[|\mathrm{sgn}(h_n x+z_n)|^2]\nonumber\\
&\!\!\!\!=\!\!\!\!&\delta_n\cdot(|h_n|^2\mathcal{E}_\mathrm{s}+1)+\bar{\delta}_n\cdot 2\nonumber\\
&\!\!\!\!=\!\!\!\!&1+\delta_n\cdot|h_n|^2\mathcal{E}_\mathrm{s}+\bar{\delta}_n,
\label{equ:app_5}
\end{eqnarray}
while the nondiagonal elements can be obtained by applying both Lemma \ref{lem:lem_1} and Lemma \ref{lem:lem_2}, as follows. First, applying Lemma \ref{lem:lem_2} we have
\begin{eqnarray}
&&\mathbb{E}[y_n\cdot\mathrm{sgn}^*(y_m)]\nonumber\\
&\!\!\!\!=\!\!\!\!&\mathbb{E}[(h_n x+z_n)\cdot\mathrm{sgn}^*(h_m x+z_m)]\nonumber\\
&\!\!\!\!=\!\!\!\!&\mathbb{E}\big[\mathbb{E}[(h_n x+z_n)\cdot\mathrm{sgn}^*(h_m x+z_m)|x,z_m]\big]\nonumber\\
&\!\!\!\!=\!\!\!\!&\mathbb{E}[h_n x\cdot\mathrm{sgn}^*(h_m x+z_m)]\nonumber\\
&\!\!\!\!=\!\!\!\!&h_n h_m^* \mathcal{E}_\mathrm{s}\sqrt{\frac{4}{\pi(|h_n|^2\mathcal{E}_\mathrm{s}+1)}},
\label{equ:app_6}
\end{eqnarray}
and analogously
\begin{equation}
\mathbb{E}[\mathrm{sgn}(y_n)\cdot y_m^*]=h_n h_m^* \mathcal{E}_\mathrm{s}\sqrt{\frac{4}{\pi(|h_m|^2\mathcal{E}_\mathrm{s}+1)}}.
\label{equ:app_7}
\end{equation}
Then, we turn to evaluate $\mathbb{E}[\mathrm{sgn}(y_n)\cdot\mathrm{sgn}^*(y_m)]$; that is
\begin{eqnarray}
&&\mathbb{E}[\mathrm{sgn}(y_n)\cdot\mathrm{sgn}^*(y_m)]\nonumber\\
&\!\!\!\!=\!\!\!\!&\mathbb{E}[\mathrm{sgn}(y_n^{\mathrm{R}})\cdot\mathrm{sgn}(y_m^{\mathrm{R}})]
+\mathbb{E}[\mathrm{sgn}(y_n^{\mathrm{I}})\cdot\mathrm{sgn}(y_m^{\mathrm{I}})]-\nonumber\\
&&i\cdot\mathbb{E}[\mathrm{sgn}(y_n^{\mathrm{R}})\cdot\mathrm{sgn}(y_m^{\mathrm{I}})]
+i\cdot\mathbb{E}[\mathrm{sgn}(y_n^{\mathrm{I}})\cdot\mathrm{sgn}(y_m^{\mathrm{R}})]\nonumber\\
&\!\!\!\!=\!\!\!\!&\frac{2}{\pi}\mathrm{arcsin}(\rho_{y_n^{\mathrm{R}},y_m^{\mathrm{R}}})
+\frac{2}{\pi}\mathrm{arcsin}(\rho_{y_n^{\mathrm{I}},y_m^{\mathrm{I}}})-\nonumber\\
&&i\cdot\frac{2}{\pi}\mathrm{arcsin}(\rho_{y_n^{\mathrm{R}},y_m^{\mathrm{I}}})
+i\cdot\frac{2}{\pi}\mathrm{arcsin}(\rho_{y_n^{\mathrm{I}},y_m^{\mathrm{R}}}),
\label{equ:app_8}
\end{eqnarray}
where the last equation follows from Lemma \ref{lem:lem_1}. To proceed, we need to evaluate some correlation coefficients, e.g., $\rho_{y_n^{\mathrm{R}},y_m^{\mathrm{R}}}$, which is given as
\begin{eqnarray}
&&\rho_{y_n^{\mathrm{R}},y_m^{\mathrm{R}}}\nonumber\\
&\!\!\!\!=\!\!\!\!&\frac{\mathbb{E}[y_n^{\mathrm{R}}y_m^{\mathrm{R}}]}{\sqrt{\mathbb{E}[(y_n^{\mathrm{R}})^2]}\sqrt{\mathbb{E}[(y_m^{\mathrm{R}})^2]}}\nonumber\\
&\!\!\!\!=\!\!\!\!&\frac{\mathbb{E}[(h_n^{\mathrm{R}} x^{\mathrm{R}}\!-\!h_n^{\mathrm{I}} x^{\mathrm{I}}\!+\!z_n^{\mathrm{R}})
(h_m^{\mathrm{R}} x^{\mathrm{R}}\!-\!h_m^{\mathrm{I}} x^{\mathrm{I}}\!+\!z_m^{\mathrm{R}})]}
{\sqrt{\mathbb{E}[(h_n^{\mathrm{R}} x^{\mathrm{R}}\!-\!h_n^{\mathrm{I}} x^{\mathrm{I}}\!+\!z_n^{\mathrm{R}})^2]}
\sqrt{\mathbb{E}[(h_m^{\mathrm{R}} x^{\mathrm{R}}\!-\!h_m^{\mathrm{I}} x^{\mathrm{I}}\!+\!z_m^{\mathrm{R}})^2]}}\nonumber\\
&\!\!\!\!=\!\!\!\!&\frac{(h_n^{\mathrm{R}}h_m^{\mathrm{R}}+h_n^{\mathrm{I}}h_m^{\mathrm{I}})\frac{\mathcal{E}_\mathrm{s}}{2}}
{\sqrt{|h_n|^2\frac{\mathcal{E}_\mathrm{s}}{2}+\frac{1}{2}}\sqrt{|h_m|^2\frac{\mathcal{E}_\mathrm{s}}{2}+\frac{1}{2}}}\nonumber\\
&\!\!\!\!=\!\!\!\!&\frac{(h_nh_m^*)^{\mathrm{R}}\mathcal{E}_\mathrm{s}}{\sqrt{|h_n|^2\mathcal{E}_\mathrm{s}+1}\sqrt{|h_m|^2\mathcal{E}_\mathrm{s}+1}}.
\label{equ:app_9}
\end{eqnarray}
Besides, following essentially the same line we have
\begin{eqnarray}
\rho_{y_n^{\mathrm{I}},y_m^{\mathrm{I}}}&\!\!\!\!=\!\!\!\!&\rho_{y_n^{\mathrm{R}},y_m^{\mathrm{R}}},\nonumber\\
\rho_{y_n^{\mathrm{I}},y_m^{\mathrm{R}}}&\!\!\!\!=\!\!\!\!&-\rho_{y_n^{\mathrm{R}},y_m^{\mathrm{I}}}
\!=\!\frac{(h_nh_m^*)^{\mathrm{I}}\mathcal{E}_\mathrm{s}}{\sqrt{|h_n|^2\mathcal{E}_\mathrm{s}+1}\sqrt{|h_m|^2\mathcal{E}_\mathrm{s}+1}}.
\label{equ:app_10}
\end{eqnarray}
Now we can combine \eqref{equ:app_8}-\eqref{equ:app_10} to get $\mathbb{E}[\mathrm{sgn}(y_n)\cdot\mathrm{sgn}^*(y_m)]$ as follows
\begin{eqnarray}
&&\mathbb{E}[\mathrm{sgn}(y_n)\cdot\mathrm{sgn}^*(y_m)]\nonumber\\
&\!\!\!\!=\!\!\!\!&\frac{4}{\pi}\mathrm{arcsin}\left(\frac{(h_n h_m^*)^{\mathrm{R}}\mathcal{E}_\mathrm{s}}
{\sqrt{|h_n|^2\mathcal{E}_\mathrm{s}+1}\sqrt{|h_m|^2\mathcal{E}_\mathrm{s}+1}}\right)+\nonumber\\
&&i\cdot\frac{4}{\pi}\mathrm{arcsin}\left(\frac{(h_n h_m^*)^{\mathrm{I}}\mathcal{E}_\mathrm{s}}
{\sqrt{|h_n|^2\mathcal{E}_\mathrm{s}+1}\sqrt{|h_m|^2\mathcal{E}_\mathrm{s}+1}}\right),
\label{equ:app_11}
\end{eqnarray}
Further, from \eqref{equ:app_6}, \eqref{equ:app_7} and \eqref{equ:app_11}, we obtain $(\mathbf{R}_{rr})_{n,m}$, given as
\begin{eqnarray}
&&(\mathbf{R}_{\mathbf{rr}})_{n,m}\nonumber\\
&\!\!\!\!=\!\!\!\!&\delta_n\delta_m\!\cdot\!\mathbb{E}[y_ny_m^*]
+\delta_n\bar{\delta}_m\!\cdot\!\mathbb{E}[y_n\!\cdot\!\mathrm{sgn}^*(y_m)]+\nonumber\\
&&\bar{\delta}_n\delta_m\!\cdot\!\mathbb{E}[\mathrm{sgn}(y_n)\!\cdot\! y_m^*]
+\bar{\delta}_n\bar{\delta}_m\!\cdot\!\mathbb{E}[\mathrm{sgn}(y_n)\!\cdot\!\mathrm{sgn}^*(y_m)]\nonumber\\
&\!\!\!\!=\!\!\!\!&h_n h_m^* \mathcal{E}_\mathrm{s}\Bigg[\delta_n\delta_m+\delta_n\bar{\delta}_m\cdot\sqrt{\frac{4}{\pi(|h_m|^2\mathcal{E}_\mathrm{s}+1)}}+\nonumber\\
&&\ \ \ \ \ \ \ \ \ \ \ \bar{\delta}_n\delta_m\cdot\sqrt{\frac{4}{\pi(|h_n|^2\mathcal{E}_\mathrm{s}+1)}}\Bigg]+\nonumber\\
&&\bar{\delta}_n\bar{\delta}_m\!\cdot\!\frac{4}{\pi}\Bigg[
\mathrm{arcsin}\Bigg(\frac{(h_nh_m^*)^{\mathrm{R}}\mathcal{E}_\mathrm{s}}{\sqrt{|h_n|^2\mathcal{E}_\mathrm{s}+1}\sqrt{|h_m|^2\mathcal{E}_\mathrm{s}+1}}\Bigg)+\nonumber\\ &&\ \ \ \ \ \ \ \ \ \ \ i\!\cdot\!\mathrm{arcsin}\Bigg(\frac{(h_nh_m^*)^{\mathrm{I}}\mathcal{E}_\mathrm{s}}{\sqrt{|h_n|^2\mathcal{E}_\mathrm{s}+1}\sqrt{|h_m|^2\mathcal{E}_\mathrm{s}+1}}\Bigg)
\Bigg].\nonumber\\
\label{equ:app_12}
\end{eqnarray}
Thus we conclude the proof.
\subsection{Asymptotic behavior of $I_{\mathrm{GMI}}(\mathbf{w}_{\mathrm{opt}},\bm{\delta})$ in low SNR regime}
\label{app:app_2}
For simplicity of exposition, we define
\begin{equation}
\mathbf{R}_{\mathbf{r}x}^{0}\triangleq\lim_{\mathcal{E}_\mathrm{s}\rightarrow 0}\frac{1}{\mathcal{E}_\mathrm{s}}\mathbf{R}_{\mathbf{r}x},\ \
\mathbf{R}_{\mathbf{rr}}^{0}\triangleq\lim_{\mathcal{E}_\mathrm{s}\rightarrow 0}\mathbf{R}_{\mathbf{rr}}.
\label{equ:app_13}
\end{equation}
Then from \eqref{equ:equ_18} and \eqref{equ:equ_19}, it is straightforward to verify that
\begin{eqnarray}
(\mathbf{R}_{\mathbf{r}x}^{0})_n&=&h_n\left[\delta_n+\bar{\delta}_n\cdot\frac{2}{\sqrt{\pi}}\right],\nonumber\\
\mathbf{R}_{\mathbf{rr}}^{0}&=&\mathrm{diag}(1+\bar{\delta}_1,...,1+\bar{\delta}_n,...,1+\bar{\delta}_N).
\label{equ:app_14}
\end{eqnarray}
Thereby we examine the asymptotic behavior of $\kappa(\mathbf{w}_{\mathrm{opt}},\bm{\delta})$ as $\mathcal{E}_\mathrm{s}\rightarrow 0$; that is
\begin{eqnarray}
&&\lim_{\mathcal{E}_\mathrm{s}\rightarrow 0}\frac{\kappa(\mathbf{w}_{\mathrm{opt}},\bm{\delta})}{\mathcal{E}_\mathrm{s}}\nonumber\\
&\overset{(a)}{=}&\lim_{\mathcal{E}_\mathrm{s}\rightarrow 0}\Big(\frac{1}{\mathcal{E}_\mathrm{s}}\mathbf{R}_{\mathbf{r}x}\Big)^{H}
\mathbf{R}_{\mathbf{rr}}^{-1}\Big(\frac{1}{\mathcal{E}_\mathrm{s}}\mathbf{R}_{\mathbf{r}x}\Big)\nonumber\\
&\overset{(b)}{=}&\Big(\lim_{\mathcal{E}_\mathrm{s}\rightarrow 0}\frac{1}{\mathcal{E}_\mathrm{s}}\mathbf{R}_{\mathbf{r}x}\Big)^{H}
\Big(\lim_{\mathcal{E}_\mathrm{s}\rightarrow 0}\mathbf{R}_{\mathbf{rr}}^{-1}\Big)
\Big(\lim_{\mathcal{E}_\mathrm{s}\rightarrow 0}\frac{1}{\mathcal{E}_\mathrm{s}}\mathbf{R}_{\mathbf{r}x}\Big)\nonumber\\
&\overset{(c)}{=}&\big(\mathbf{R}_{\mathbf{r}x}^{0}\big)^{H}\big(\mathbf{R}_{\mathbf{rr}}^{0}\big)^{-1}
\big(\mathbf{R}_{\mathbf{r}x}^{0}\big)\nonumber\\
&=&\sum_{n=1}^{N}\frac{\left(\delta_n+\bar{\delta}_n\cdot\frac{4}{\pi}\right)|h_n|^2}{1+\bar{\delta}_n}\nonumber\\
&=&\sum_{n=1}^{N}\left(\delta_n+\bar{\delta}_n\cdot\frac{2}{\pi}\right)|h_n|^2,
\label{equ:app_15}
\end{eqnarray}
where (a) follows from \eqref{equ:equ_22}, (b) is obtained by applying the algebraic limit theorem since the limits of $\mathbf{R}_{\mathbf{r}x}^{0}/\mathcal{E}_\mathrm{s}$ and $\mathbf{R}_{\mathbf{rr}}$ exist, while (c) comes from the fact that the inverse of a nonsingular matrix is a continuous function of the elements of the matrix, i.e., $\lim_{\mathcal{E}_\mathrm{s}\rightarrow 0}\mathbf{R}_{\mathbf{rr}}^{-1}=(\lim_{\mathcal{E}_\mathrm{s}\rightarrow 0}\mathbf{R}_{\mathbf{rr}})^{-1}$ \cite{stewart1969continuity}. As a result, when $\mathcal{E}_\mathrm{s}\rightarrow 0$ we have
\begin{equation}
\kappa(\mathbf{w}_{\mathrm{opt}},\bm{\delta})=\sum_{n=1}^{N}\left(\delta_n+\bar{\delta}_n\cdot\frac{2}{\pi}\right)|h_n|^2\mathcal{E}_\mathrm{s}+o(\mathcal{E}_\mathrm{s}).
\label{equ:app_16}
\end{equation}
Noting that $\log(1+x/(1-x))=x+o(x)$, as $x\rightarrow 0$, we immediately have \eqref{equ:equ_28}.
\subsection{Asymptotic behavior of $I_{\mathrm{GMI}}(\mathbf{w}_{\mathrm{opt}},\bm{\delta})$ in high SNR regime}
\label{app:app_3}
For simplicity of exposition, we rearrange $\mathbf{h}$ and stack the channel coefficients corresponding to the antennas equipped with high-resolution ADCs in the first $K$ positions of $\underline{\mathbf{h}}$. To proceed, we further define
\begin{eqnarray}
\mathbf{p}&\triangleq&[\underline{h}_1,...,\underline{h}_K]^{T},\nonumber\\
\mathbf{q}&\triangleq&\left[\underline{h}_{K+1}/|\underline{h}_{K+1}|,...,\underline{h}_{N}/|\underline{h}_{N}|\right]^{T}.
\label{equ:app_21}
\end{eqnarray}

When $\mathcal{E}_\mathrm{s}$ tends to infinity, we have
\begin{equation}
\underline{h}_n\mathcal{E}_\mathrm{s}\sqrt{\frac{4}{\pi(|\underline{h}_n|^2\mathcal{E}_\mathrm{s}+1)}}
=\left[\sqrt{4\mathcal{E}_\mathrm{s}/\pi}+O(1/\sqrt{\mathcal{E}_\mathrm{s}})\right]\cdot\frac{\underline{h}_n}{|\underline{h}_n|},
\label{equ:app_22}
\end{equation}
for $n=K+1,...,N$. As a result, we are allowed to denote the deduced $\underline{\mathbf{R}}_{\mathbf{r}x}$ as
\begin{equation}
\underline{\mathbf{R}}_{\mathbf{r}x}=
\left[
    \begin{array}{c}
        \mathcal{E}_\mathrm{s}\mathbf{p}\\
        (\sqrt{4\mathcal{E}_\mathrm{s}/\pi}+O(1/\sqrt{\mathcal{E}_\mathrm{s}}))\mathbf{q}
    \end{array}
\right].
\label{equ:app_23}
\end{equation}
Besides, we denote by partitioned matrices $\underline{\mathbf{R}}_{\mathbf{rr}}$ and its inverse $\underline{\mathbf{R}}_{\mathbf{rr}}^{-1}$, i.e.,
\begin{eqnarray}
&&\underline{\mathbf{R}}_{\mathbf{rr}}\triangleq
\left[
    \begin{array}{cc}
        \mathbf{A} & \mathbf{U}\\
        \mathbf{U}^{H} & \mathbf{B}
    \end{array}
\right],\ \
\underline{\mathbf{R}}_{\mathbf{rr}}^{-1}\triangleq
\left[
    \begin{array}{cc}
        \mathbf{C} & \mathbf{V}\\
        \mathbf{V}^{H} & \mathbf{D}
    \end{array}
\right],
\label{equ:app_24}
\end{eqnarray}
in which the invertible square matrices $\mathbf{A}\in\mathbb{C}^{K\times K}$, $\mathbf{B}\in\mathbb{C}^{(N-K)\times (N-K)}$ and the rectangle matrix $\mathbf{U}\in\mathbb{C}^{K\times (N-K)}$ are taken to be
\begin{eqnarray}
\mathbf{A}&\!\!\!\!=\!\!\!\!&\mathbf{I}+\mathcal{E}_\mathrm{s}\mathbf{p}\mathbf{p}^{H},\nonumber\\
\mathbf{U}&\!\!\!\!=\!\!\!\!&(\sqrt{4\mathcal{E}_\mathrm{s}/\pi}+O(1/\sqrt{\mathcal{E}_\mathrm{s}}))\mathbf{p}\mathbf{q}^{H},\nonumber\\
(\mathbf{B})_{n,m}&\!\!\!\!=\!\!\!\!&\frac{4}{\pi}\Bigg[\arcsin\left(\frac{(\underline{h}_{n+K}\underline{h}_{m+K}^*)^{\mathrm{R}}}
{|\underline{h}_{n+K}\underline{h}_{m+K}^*|}\right)+\nonumber\\
&&i\cdot\arcsin\left(\frac{(\underline{h}_{n+K}\underline{h}_{m+K}^*)^{\mathrm{I}}}
{|\underline{h}_{n+K}\underline{h}_{m+K}^*|}\right)\Bigg]+O(1/\mathcal{E}_\mathrm{s}).\nonumber\\
\label{equ:app_25}
\end{eqnarray}
Then, applying the Sherman-Morrison formula \cite{horn2012matrix} and the inverse of partitioned matrix \cite{hotelling1943some}, we obtain
\begin{eqnarray}
\mathbf{C}&\!\!\!\!=\!\!\!\!&(\mathbf{A}-\mathbf{U}\mathbf{B}^{-1}\mathbf{U}^{H})^{-1}\nonumber\\
&\!\!\!\!=\!\!\!\!&\mathbf{I}-\frac{\mathcal{E}_\mathrm{s}\left[\pi-(4+O(1/\mathcal{E}_\mathrm{s}))\mathbf{q}^{H}\mathbf{B}^{-1}\mathbf{q}\right]\cdot\mathbf{p}\mathbf{p}^{H}}
{\pi+\mathcal{E}_\mathrm{s}\left[\pi-(4+O(1/\mathcal{E}_\mathrm{s}))\mathbf{q}^{H}\mathbf{B}^{-1}\mathbf{q}\right]\cdot\|\mathbf{p}\|^2},\nonumber\\
\mathbf{V}&\!\!\!\!=\!\!\!\!&-\mathbf{A}^{-1}\mathbf{U}(\mathbf{B}-\mathbf{U}^{H}\mathbf{A}^{-1}\mathbf{U})^{-1}\nonumber\\
&\!\!\!\!=\!\!\!\!&-\frac{\left[\sqrt{4\pi\mathcal{E}_\mathrm{s}}+O(1/\sqrt{\mathcal{E}_\mathrm{s}})\right]\cdot\mathbf{p}\mathbf{q}^{H}\mathbf{B}^{-1}}
{\pi+\mathcal{E}_\mathrm{s}\left[\pi-(4+O(1/\mathcal{E}_\mathrm{s}))\mathbf{q}^{H}\mathbf{B}^{-1}\mathbf{q}\right]\cdot\|\mathbf{p}\|^2},\nonumber\\
\mathbf{D}&\!\!\!\!=\!\!\!\!&(\mathbf{B}-\mathbf{U}^{H}\mathbf{A}^{-1}\mathbf{U})^{-1}\nonumber\\
&\!\!\!\!=\!\!\!\!&\mathbf{B}^{-1}
+\frac{\left[4\mathcal{E}_\mathrm{s}+O(1)\right]\|\mathbf{p}\|^2\cdot\mathbf{B}^{-1}\mathbf{q}\mathbf{q}^{H}\mathbf{B}^{-1}}
{\pi+\mathcal{E}_\mathrm{s}\left[\pi-(4+O(1/\mathcal{E}_\mathrm{s}))\mathbf{q}^{H}\mathbf{B}^{-1}\mathbf{q}\right]\cdot\|\mathbf{p}\|^2}.\nonumber\\
\label{equ:app_26}
\end{eqnarray}
With all of these, we are ready to simplify $\kappa(\mathbf{w}_{\mathrm{opt}},\bm{\delta})$; that is,
\begin{eqnarray}
&&\kappa(\mathbf{w}_{\mathrm{opt}},\bm{\delta})\nonumber\\
&=&\frac{1}{\mathcal{E}_\mathrm{s}}
\underline{\mathbf{R}}_{\mathbf{r}x}^{H}\underline{\mathbf{R}}_{\mathbf{rr}}^{-1}\underline{\mathbf{R}}_{\mathbf{r}x}\nonumber\\
&=&\mathcal{E}_\mathrm{s}\mathbf{p}^{H}\mathbf{C}\mathbf{p}
+2\left(\sqrt{4\mathcal{E}_\mathrm{s}/\pi}+O(1/\sqrt{\mathcal{E}_\mathrm{s}})\right)\mathbf{p}^{H}\mathbf{V}\mathbf{q}+\nonumber\\
&&\left(4/\pi+O(1/\mathcal{E}_\mathrm{s})\right)\mathbf{q}^{H}\mathbf{D}\mathbf{q}\nonumber\\
&=&\frac{\mathcal{E}_\mathrm{s}\left[\pi-(4+O(1/\mathcal{E}_\mathrm{s}))\mathbf{q}^{H}\mathbf{B}^{-1}\mathbf{q}\right]\cdot\|\mathbf{p}\|^2}
{\pi+\mathcal{E}_\mathrm{s}\left[\pi-(4+O(1/\mathcal{E}_\mathrm{s}))\mathbf{q}^{H}\mathbf{B}^{-1}\mathbf{q}\right]\cdot\|\mathbf{p}\|^2}+\nonumber\\
&&\frac{\left[4+O(1/\mathcal{E}_\mathrm{s})\right]\mathbf{q}^{H}\mathbf{B}^{-1}\mathbf{q}}
{\pi+\mathcal{E}_\mathrm{s}\left[\pi-(4+O(1/\mathcal{E}_\mathrm{s}))\mathbf{q}^{H}\mathbf{B}^{-1}\mathbf{q}\right]\cdot\|\mathbf{p}\|^2}.
\label{equ:app_27}
\end{eqnarray}
Finally, we get the effective SNR as
\begin{equation}
\frac{\kappa(\mathbf{w}_{\mathrm{opt}},\bm{\delta})}{1-\kappa(\mathbf{w}_{\mathrm{opt}},\bm{\delta})}=\|\mathbf{p}\|^2\mathcal{E}_\mathrm{s}
+\frac{[4+O(1/\mathcal{E}_\mathrm{s})]\mathbf{q}^{H}\mathbf{B}^{-1}\mathbf{q}}{\pi-[4+O(1/\mathcal{E}_\mathrm{s})]\mathbf{q}^{H}\mathbf{B}^{-1}\mathbf{q}}.
\label{equ:app_28}
\end{equation}

\section*{Acknowledgement}
The authors thank the Editor and Reviewers for valuable suggestions that help improve the structure and exposition of the paper.


\begin{IEEEbiography}[{\includegraphics[width=1in,height=1.25in,clip,keepaspectratio]{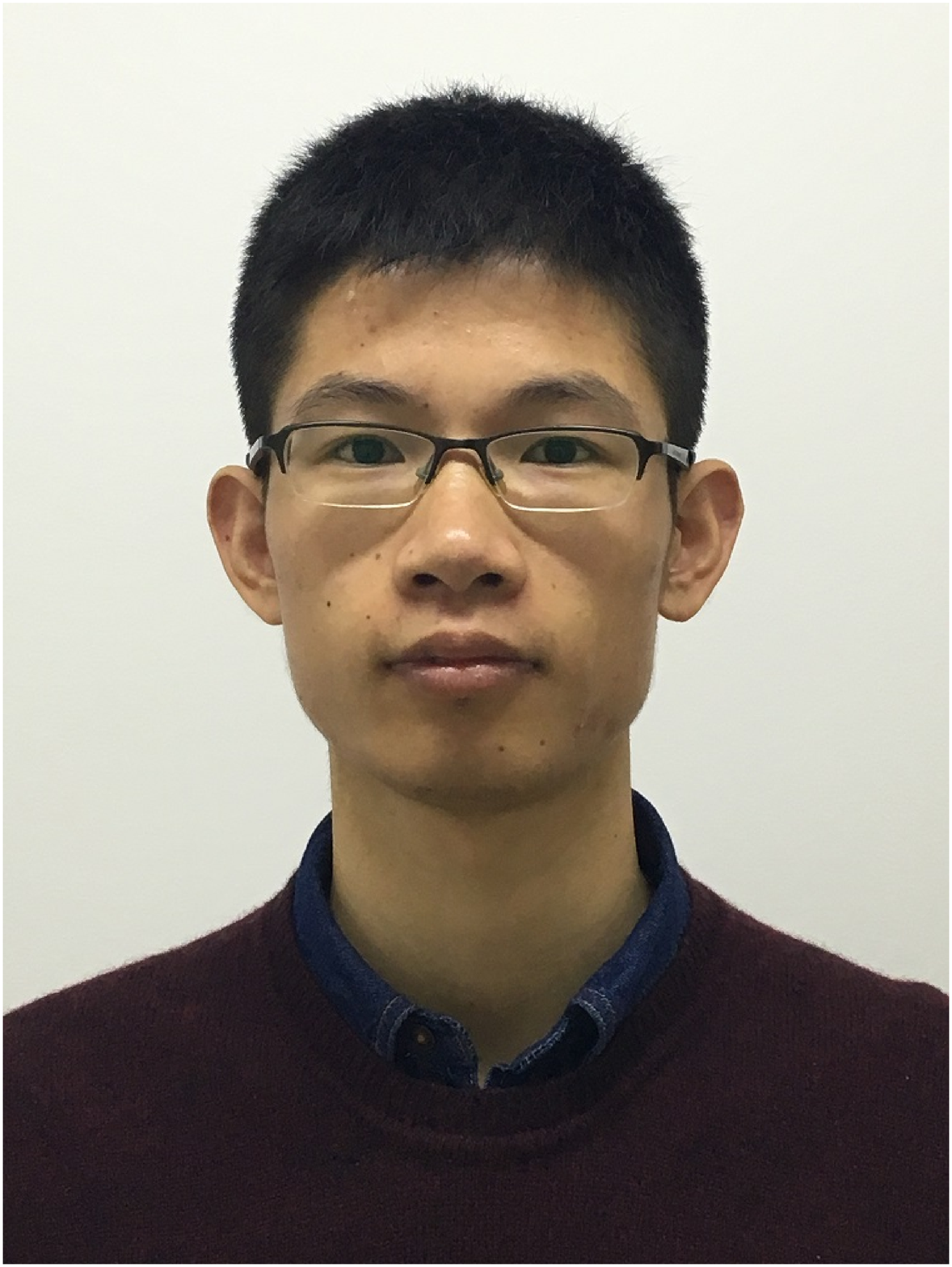}}]{Ning Liang}
received his B.E. degree in Communication Engineering from University of Science and Technology of China (USTC) in 2012. He is now a Ph.D. student in Wireless Communications at USTC, Hefei, China. His research interests include network interference analysis and low-complexity receiver design for massive MIMO systems.
\end{IEEEbiography}
\begin{IEEEbiography}[{\includegraphics[width=1in,height=1.25in,clip,keepaspectratio]{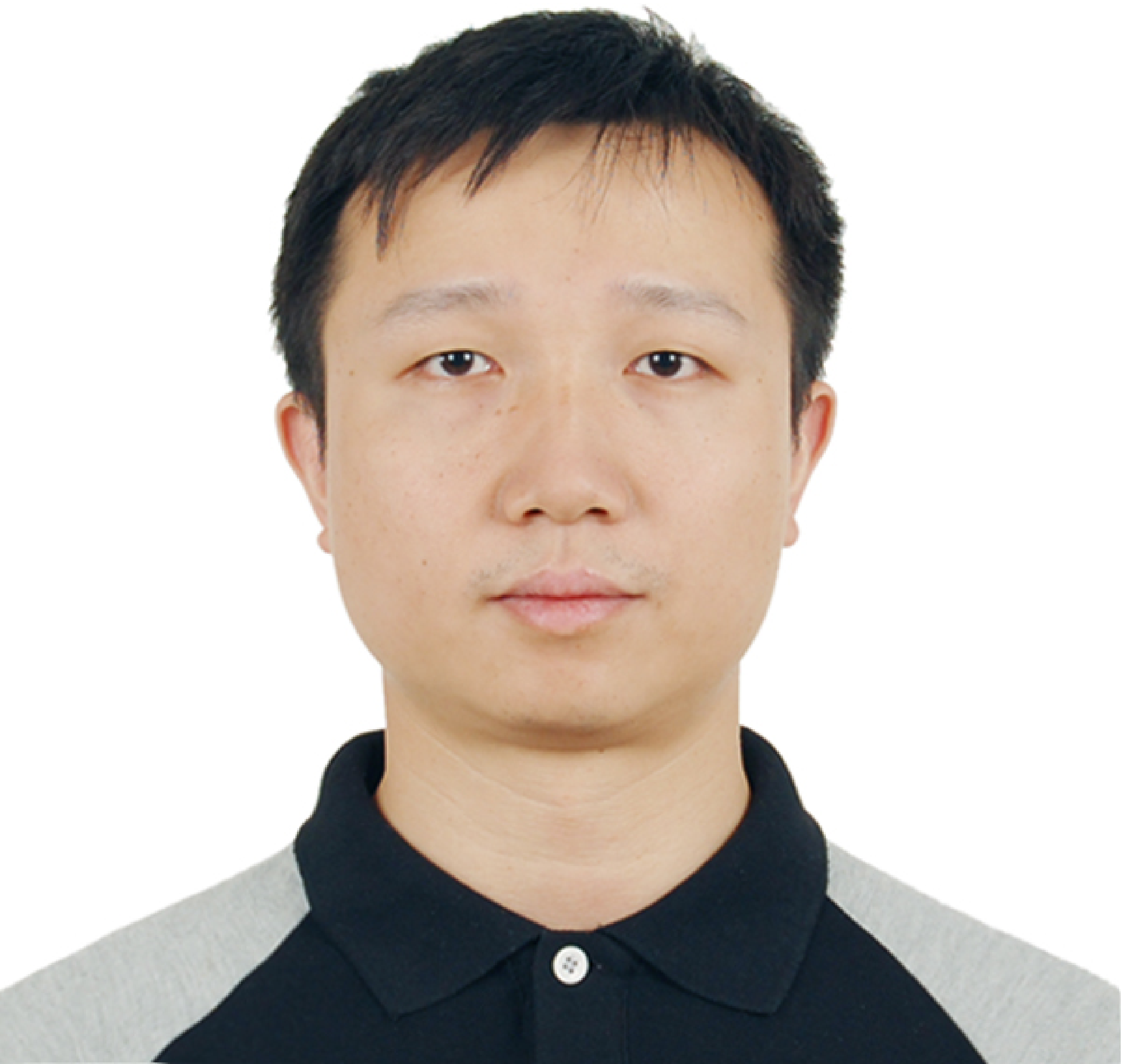}}]{Wenyi Zhang}
(S-00, M-07, SM-11) is with the faculty of Department of Electronic Engineering and Information Science, University of Science and Technology of China. Prior to that, he was affiliated with the Communication Science Institute, University of Southern California, as a postdoctoral research associate, and with Qualcomm Incorporated, Corporate Research and Development. He studied in Tsinghua University and obtained his Bachelor's degree in Automation in 2001; he studied in the University of Notre Dame, Indiana, USA, and obtained his Master's and Ph.D. degrees, both in Electrical Engineering, in 2003 and 2006, respectively. His research interests include wireless communications and networking, information theory, and statistical signal processing.
\end{IEEEbiography}
\end{document}